\def\myeta{\eta}
\def\mygamma{\gamma}
\def\({\left(}
\def\){\right)}
\DeclareMathOperator{\sgn}{sgn}
\renewcommand{\C}{\mathbb{C}}
\newcommand{\B}{\{0,1\}}
\newcommand{\ind}[1]{\mathbbm{1}_{#1}}
\newcommand{\dist}{\text{dist}}
\begin{document}




\newtheorem{THM}{Theorem}
\newtheorem*{theorem*}{Theorem}
\newtheorem{theorem}{Theorem}[section]
\newtheorem{corollary}[theorem]{Corollary}
\newtheorem{conjecture}[theorem]{Conjecture}
\newtheorem{lemma}[theorem]{Lemma}
\newtheorem*{lemma*}{Lemma}
\newtheorem{observation}[theorem]{Observation}
\newtheorem{construction}[theorem]{Construction}
\newtheorem{proposition}[theorem]{Proposition}
\newtheorem{definition}[theorem]{Definition}
\newtheorem{claim}[theorem]{Claim}
\newtheorem{fact}[theorem]{Fact}
\newtheorem{assumption}[theorem]{Assumption}
\newtheorem{notation}[theorem]{Notation}
\theoremstyle{remark}
\newtheorem{remark}[theorem]{Remark}





\newcommand{\enc}{{\sf Enc}}
\newcommand{\dec}{{\sf Dec}}
\renewcommand{\E}{{\rm Exp}}
\newcommand{\Var}{{\rm Var}}
\newcommand{\Z}{{\mathbb Z}}
\newcommand{\F}{{\mathbb F}}
\renewcommand{\K}{{\mathbb K}}
\renewcommand{\H}{{\mathbb H}}
\newcommand{\N}{{\mathbb N}}
\newcommand{\integers}{{\mathbb Z}^{\geq 0}}
\renewcommand{\R}{{\mathbb R}}
\newcommand{\Q}{{\cal Q}}
\newcommand{\calA}{{\cal A}}
\newcommand{\eqdef}{{\stackrel{\rm def}{=}}}
\newcommand{\from}{{\leftarrow}}
\newcommand{\vol}{{\rm Vol}}
\newcommand{\ip}[1]{{\langle #1 \rangle}}
\newcommand{\wt}{{\rm {wt}}}
\renewcommand{\vec}[1]{{\mathbf #1}}
\newcommand{\mspan}{{\rm span}}
\newcommand{\rs}{{\rm RS}}
\newcommand{\RM}{{\rm RM}}
\newcommand{\Had}{{\rm Had}}
\newcommand{\calc}{{\cal C}}
\newcommand{\coeff}{\mathfrak{C}}
\renewcommand{\Re}{\operatorname{Re}}
\newcommand{\ignore}[1]{}

\newcommand{\bits}{\{+1,-1\}}
\newcommand{\ftb}{: \Z_2^n \to \bits}
\newcommand{\fptb}{: \Z_p^n \to \bits}

\renewcommand{\E}{\mathop{\mathbbm{E}}}

\newcommand{\Lone}[1]{\| #1 \|_1}

\newcommand{\spar}{\mathop{\mathrm{spar}}}
\newcommand{\bias}{\mathop{\mathrm{bias}}}

\newcommand{\pdt}{{\ensuremath{\oplus}-DT}}
\newcommand{\pdtt}{{\ensuremath{\oplus}-DT. }}
\newcommand{\pD}{\ensuremath{D^\oplus}}
\newcommand{\psize}{\ensuremath{\mathrm{size}_\oplus}}

\newcommand{\ppdt}{{\ensuremath{\oplus_p}-DT}}
\newcommand{\ppdtt}{{\ensuremath{\oplus_p}-DT. }}
\newcommand{\ppD}{\ensuremath{D^{\oplus_p}}}
\newcommand{\ppsize}{\ensuremath{\mathrm{size}_{\oplus_p}}}

\newcommand{\m}[3]{m_{#1}([#2]_{#3})}
\newcommand{\s}[3]{s_{#1}([#2]_{#3})}

\newcommand*\samethanks[1][\value{footnote}]{\footnotemark[#1]}

\title{On the Structure of Boolean Functions with Small Spectral Norm}
\author{%
Amir Shpilka\thanks{Faculty of Computer Science, Technion --- Israel Institute of Technology, Haifa, Israel,
\texttt{\{shpilka,benlee\}@cs.technion.ac.il}.  The research leading to these results has received funding
from the European Community's Seventh Framework Programme (FP7/2007-2013) under grant agreement number 257575.}
\and Avishay Tal\thanks{Department of Computer Science and Applied
    Mathematics, Weizmann Institute of Science, Rehovot, Israel. Email: \texttt{avishay.tal@weizmann.ac.il}.
Research supported by an Adams Fellowship of the Israel Academy of Sciences and Humanities,
by an Israel Science Foundation grant and
by the I-CORE Program of the Planning and Budgeting Committee and the Israel Science Foundation
}
\and
Ben lee Volk\samethanks[1]
}

\date{}
\maketitle

\begin{abstract}
In this paper we prove results regarding Boolean functions with small spectral norm (the spectral norm of $f$ is $\|\hat{f}\|_1=\sum_{\alpha}|\hat{f}(\alpha)|$). Specifically, we prove the following results for functions $f:\B^n\to \B$ with $\|\hat{f}\|_1=A$.

\begin{enumerate}

\item There is a subspace $V$ of co-dimension at most $A^2$ such that $f|_V$ is constant.

\item \label{abs:main} $f$ can be computed by a parity decision tree of size $2^{A^2} n^{2A}$. (a parity decision tree is a decision tree whose nodes are labeled with arbitrary linear functions.)

\item If in addition $f$ has at most $s$ nonzero Fourier coefficients, then $f$ can be computed by a parity decision tree of depth $A^2\log s$.

\item \label{abs:learning} \sloppy For every $0<\epsilon$ there is a parity decision tree of depth $O(A^2 + \log(1/\epsilon))$ and size $2^{O(A^2)} \cdot \min\{ 1/\epsilon^2,O(\log(1/\epsilon))^{2A}\}$ that $\epsilon$-approximates $f$. Furthermore, this tree can be learned, with  probability $1-\delta$, using $\poly(n,\exp(A^2),1/\epsilon,\log(1/\delta))$ membership queries.
\end{enumerate}

All the results above also hold (with a slight change in parameters) for functions $f:\Z_p^n\to \B$.

\ignore{Green and Sanders \cite{GreenSanders08} showed that $f:\B^n\to \B$ with $\|\hat{f}\|_1=A$ can be expressed as a sum of $2^{2^{A^4}}$ characteristic vectors of subspaces. Result number \ref{abs:main} in particular implies that $f$ is a sum of at most $n^{A^2}$ characteristic functions which is a smaller number for $A>\log\log n$. Moreover, our results provide the additional structure of a parity-decision-tree. No such structure (with reasonable parameters) follow from the work of Green and Sanders.
}
\end{abstract}








\thispagestyle{empty}
\newpage
\pagenumbering{arabic}

\section{Introduction}

The Fourier transform is one of the most useful tools in the analysis of Boolean functions. It is a household name in many areas of theoretical computer science:  Learning theory (cf. \cite{KushilevitzMansour93,LMN93,Mansour1994}); Hardness of approximation (cf. \cite{Hastad01}); Property testing (cf. \cite{BLR93,BCHKS,GOSSW:08}); Social choice (cf. \cite{KKL88,Kalai02}) and more. The  reader interested in the Fourier transform and its applications is referred to the online book \cite{OdonnellBook}.

A common theme in the study of Fourier transform is the question of classifying all Boolean functions whose Fourier transforms share some natural property. For example, Friedgut proved that Boolean functions that have {\em small influence} are close to being juntas (i.e. functions that depend on a small number of coordinates) \cite{Friedgut98}.  Friedgut, Kalai and Naor proved that Boolean functions whose Fourier spectrum is concentrated on the first two levels are close to dictator functions (i.e. functions of the form $f(x_1,\ldots,x_n)=x_i$ or $1-x_i$). In \cite{ZhangShi10,MontanaroOsborne10} it was conjectured that a Boolean function that has a {\em sparse} Fourier spectrum (i.e. that has only $s$ nonzero Fourier coefficients), can be computed by a parity decision tree (for short we denote parity decision tree by $\oplus$-DT) of depth $\poly(\log s)$. Recall that in a \pdt{} nodes are labeled by linear functions (over $\Z_2$) rather than by variables. It is well known that a function that is computed by a depth $d$ \pdt{} has sparsity at most $\exp(d)$ (see Lemma~\ref{lem:simple facts}), so this conjecture implies a (more or less) tight result. This conjecture was raised in the context of the log-rank conjecture in communication complexity and, if true, it would imply that the log-rank conjecture is true for functions of the form $F(x,y)=f(x\oplus y)$, for some Boolean function $f$.

In this paper we are interested in the structure of functions that have small spectral norm. Namely, in Boolean functions $f:\B^n\to \B$ that for some number $A$ satisfy
\begin{equation}\label{eq:maineq}
\|\hat{f}\|_1\eqdef \sum_{\alpha}|\hat{f}(\alpha)|\leq A \,,
\end{equation} where $A$ may depend on the number of variables $n$ (for definitions see Section~\ref{sec:prelim}). Such functions were studied in the context of  circuit complexity (cf. \cite{Grolmusz97}) and, more notably, in learning theory, where it is one of the most general family of Boolean functions that can be learned efficiently  \cite{KushilevitzMansour93,Mansour1994,ABFKP08}. In particular, Kushilevitz and Mansour proved that any Boolean function satisfying \eqref{eq:maineq}, can be well approximated by a sparse polynomial \cite{KushilevitzMansour93}. This already gives some rough structure for functions with small spectral norm, however one may ask for a more refined structure that captures the function exactly.  Green and Sanders were the first to obtain such a result (and until this work this was the only such result). They  proved that if $f$ satisfies Equation~\eqref{eq:maineq} then it can be expressed as a sum of at most $2^{2^{O(A^4)}}$ characteristic functions of subspaces, that is,
\begin{equation}\label{eq:GS}
f=\sum_{i=1}^{2^{2^{O(A^4)}}}\pm\ind{V_i},
\end{equation}
where each $V_i$ is a subspace.
Thus, when $A$ is constant this gives a very strong result on the structure of such a function $f$. This result can be seen as an {\em inverse} theorem, as it is well known and easy to see that the spectral norm of the characteristic function of a subspace is constant. Thus, \cite{GreenSanders08} show that in general, any function with a small spectral norm is a linear combination of a (relatively) small number of such characteristic functions. Of course, ideally one would like to show that the number of functions in the sum is at most $\poly(A)$ and not doubly exponential in $A$, however, Green and Sanders note that ``it seems to us that it would be difficult to use our method to reduce the number of exponentials below two.''


It is possible that another classification of Boolean functions with small spectral norm could be achieved using decision trees, or more generally, parity decision trees. It is not hard to show that if a Boolean function $g$ is computed by a \pdt{}  with $s$ leaves then the spectral norm of $g$ is at most $s$ (see Lemma~\ref{lem:simple facts}). Interestingly, we are not aware of any Boolean function that has a small spectral norm and that cannot be computed by a small \pdtt It is thus an interesting question whether this is indeed the general case, namely, that any function of small spectral norm can be computed by a small \pdtt We note that the result of \cite{GreenSanders08} does not yield such a structure. Indeed, if we were to represent the function given by Equation~\eqref{eq:GS} as a \pdt{} then, without knowing anything more about the function, then we do not see a more efficient representation than the brute-force one that yields a \pdt{} of size $n^{2^{2^{O(A^4)}}}$.

\ignore{
Recall that a decision-tree is a binary tree whose nodes are labeled by variables and its leaves by $0/1$. We treat each node as asking a query about the variable labeling it. If the answer is $0$ (i.e. the variable was assigned the value $0$) then we follow a path to the left child, and if the answer is $1$ then we go to the right child.
}

Another interesting question concerning functions with small spectral norm comes from the learning theory perspective. As mentioned above, Kushilevitz and Mansour proved that for any Boolean function satisfying Equation~\eqref{eq:maineq} there is some sparse polynomial $g=\sum_{i=1}^{A^2/\epsilon}\hat{f}(\alpha_i)\chi_{\alpha_i}(x)$ (where the coefficients in the summation are the $A^2/\epsilon$ largest Fourier coefficient of $f$) such that $\Pr_x[f(x)\neq \sgn(g(x)]\leq \epsilon$. Thus, their learning algorithm outputs as hypothesis the function $\sgn(g(x))$. This is the case even if $f$ is computed by a small decision tree or a small \pdtt It would be desirable to output a hypothesis coming from the same complexity class as $f$, i.e. to output a decision tree or a \pdtt However, a hardness result of \cite{ABFKP08} shows that under reasonable complexity assumptions, one cannot hope to output a small decision tree approximating $f$. So, a refinement of the question should be to try and output the smallest tree one can find for a function approximating $f$. For example, the function
\begin{equation}\label{eq:KM}
  \sgn(g)=\sgn\left(\sum_{i=1}^{A^2/\epsilon}\hat{f}(\alpha_i)\chi_{\alpha_i}(x)\right)
\end{equation}
can be computed by a \pdt{} of depth $O(A^2/\epsilon)$ in the natural way. Even when $A$ is a constant and $\epsilon$ is polynomially small this does not give much information. Thus, a natural question is to try and find a better representation for such a range of parameters.

\subsection{Our results}

Our first result identifies a {\em local} structure shared by Boolean functions with small spectral norm.

\begin{theorem}\label{thm:local}
Let $f:\B^n\to \B$ be such that $\|\hat{f}\|_1=A$, then, there is an affine subspace $V\subset \B^n$ of co-dimension at most $A^2$ such that $f$ is constant on $V$.
\end{theorem}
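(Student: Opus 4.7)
My plan is to build $V$ by iterated hyperplane restriction. Starting from $f = f_0$, at step $i$ I restrict the current function $f_i$ to a hyperplane $H_i = \{x : \langle \alpha_i, x\rangle = c_i\}$ for a carefully chosen non-zero $\alpha_i \in \B^n$ and $c_i \in \{0,1\}$, obtaining $f_{i+1}$. I will argue that this process terminates with $f_k$ constant after at most $k = A^2$ steps; the desired affine subspace of codimension at most $A^2$ is then $V = H_0 \cap \cdots \cap H_{k-1}$.

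The analytic input is the formula
\[
\widehat{f_i|_{H_i}}(\gamma) \;=\; \hat f_i(\gamma) + (-1)^{c_i}\hat f_i(\gamma + \alpha_i),
\]
where $\gamma$ ranges over a set of coset representatives of $\langle\alpha_i\rangle$. This immediately yields $\|\widehat{f_i|_{H_i}}\|_1 \leq \|\hat f_i\|_1$ by the triangle inequality, so the spectral norm is non-increasing along the process. Moreover, choosing $c_i$ so as to align the signs of $\hat f_i(0)$ and $(-1)^{c_i}\hat f_i(\alpha_i)$ gives $|\widehat{f_i|_{H_i}}(0)| = |\hat f_i(0)| + |\hat f_i(\alpha_i)|$. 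I would pick $\alpha_i \neq 0$ to maximize $|\hat f_i(\alpha_i)|$ and $c_i$ as above.

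The right potential is $M(f) := \|\hat f\|_1 - |\hat f(0)|$: it is nonnegative, starts at $M(f_0)\leq A$, and vanishes precisely when $f$ is constant on the current affine subspace. The formula above, with the aligned choice of $c_i$, forces $M(f_{i+1}) \leq M(f_i) - |\hat f_i(\alpha_i)|$. Parseval's identity $\|\hat f\|_2 = 1$ combined with the standard inequality $\|\hat f\|_\infty \geq \|\hat f\|_2^2/\|\hat f\|_1 = 1/\|\hat f\|_1 \geq 1/A$ guarantees that \emph{some} Fourier coefficient has absolute value at least $1/A$; assuming one can take this to be a non-trivial $\alpha_i$, this yields a per-step drop of at least $1/A$ in $M$, and therefore termination after at most $A \cdot A = A^2$ steps.

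The main obstacle is the degenerate case where the only coefficient of size $\geq 1/A$ is the trivial one $\hat f_i(0)$, so that the naive Parseval bound does not produce a non-trivial $\alpha_i$ with large coefficient. Here one has to pass to the refined inequality $\max_{\alpha \neq 0}|\hat f_i(\alpha)| \geq (1 - \hat f_i(0)^2)/M(f_i)$ and either track a finer potential such as $\|\hat f\|_1^2 - \hat f(0)^2$ (which simultaneously penalizes large non-constant mass and small bias), or carry out an amortized analysis exploiting the fact that extreme-bias configurations cannot persist without earlier faster-than-$1/A$ decreases in $M$. Making this bookkeeping tight so that the total step count comes out to exactly $A^2$ is the crux of the argument.
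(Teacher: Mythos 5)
Your iterated-restriction skeleton and the bookkeeping with $M(f)=\|\hat f\|_1-|\hat f(0)|$ are fine as far as they go: restricting along a largest non-trivial coefficient $\alpha_i$ with the sign aligned does give $M(f_{i+1})\le M(f_i)-|\hat f_i(\alpha_i)|$, and $M\ge 1$ while the function is non-constant, so a certified per-step drop of $1/A$ would indeed yield at most $A^2$ restrictions. But the step you yourself flag as the crux is a genuine gap, and it is precisely the point where Booleanity must be used beyond Parseval and $\|f\|_\infty=1$. When the largest coefficient is $\hat f(0)$, the only bound your tools give on the largest non-trivial coefficient is $(1-\hat f(0)^2)/M(f)$, and for a non-constant function on $m$ remaining variables this can be as small as $2^{-\Omega(m)}$ (e.g.\ the function that is $1$ everywhere except one point has spectral norm about $3$ but largest non-trivial coefficient $2^{1-m}$). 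Chasing this certificate through your scheme, the quantity $1-|\hat f(0)|$ is only guaranteed to shrink by a factor $1-\Theta(1/A)$ per step, so the accounting terminates only after a number of steps that grows with $n$ (roughly $A^2+n$), not after $A^2$ steps; and your fallback potential $\|\hat f\|_1^2-\hat f(0)^2$ fares no better, since in the unbiased regime its certified per-step drop is only of order $1/A^2$, giving $A^4$. Neither proposed repair is carried out, and neither closes the gap with the tools you allow yourself, so the theorem is not proved as stated.

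The missing ingredient in the paper is the convolution identity $\sum_\gamma \hat f(\gamma)\hat f(\delta+\gamma)=0$ for every $\delta\neq 0$, which follows from $f^2=1$. Using it, the paper's main lemma takes the two largest coefficients $\hat f(\alpha),\hat f(\beta)$ (the largest may well be $\hat f(0)$) and restricts along $\chi_{\alpha+\beta}$: the identity forces a matching amount of sign-cancelling mass among the remaining pairs $\{\hat f(\gamma),\hat f(\alpha+\beta+\gamma)\}$, so one of the two restrictions reduces the \emph{full} spectral norm by at least $|\hat f(\alpha)|\ge 1/A$ --- in particular by at least $1/A$ even in the extremely biased case, where your certificate degenerates. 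With that lemma the theorem follows by driving $\|\hat f\|_1$ itself down to $1$ in fewer than $A^2$ steps and then using that a Boolean function with spectral norm $1$ is $\pm\chi_\alpha$ (one more restriction makes it constant); it would equally rescue your potential, since on the bias-increasing side $M$ then drops by at least $|\hat f(\alpha)|$. So your plan is recoverable, but only by adding this cancellation argument (or an equivalent use of $f^2=1$); as written, the $A^2$ bound on the number of restrictions is not established.
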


We note that the proof of \cite{GreenSanders08} does not imply the existence of such an affine subspace $V$ of such a high dimension.
Our next result gives a \pdt{} computing $f$.

\begin{theorem}\label{thm:PDT}
Let $f:\B^n\to \B$ be such that $\|\hat{f}\|_1=A$, then, $f$ can be computed by a \pdt{} of size $2^{A^2}n^{2A}$.
\end{theorem}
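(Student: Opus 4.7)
The plan is to construct the \pdt{} recursively by iterated application of Theorem~\ref{thm:local}. First I would apply Theorem~\ref{thm:local} to $f$ to obtain an affine subspace $V \subseteq \B^n$ of co-dimension $k \le A^2$ on which $f$ is constant, cut out by linear equations $\ell_1(x) = b_1, \ldots, \ell_k(x) = b_k$. The top part of the tree queries these $k$ linear functions along a ``correct path'': at the $j$-th node on the path we query $\ell_j$ and go along the side corresponding to the answer $b_j$; after $k$ successful steps we reach a leaf that outputs the constant value of $f|_V$. The $k$ off-path edges hang sub-\pdt{}s: the off-path edge at level $j$ lands on the affine subspace $U_j = \{x : \ell_1(x)=b_1,\ldots,\ell_{j-1}(x)=b_{j-1},\ \ell_j(x)\neq b_j\}$ of dimension $n-j$, and we recursively build a \pdt{} for $f_j := f|_{U_j}$.

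The key invariant that keeps the recursion going is that $\|\widehat{f_j}\|_1 \le A$. To see this, I would first make a linear change of variables on $\B^n$ so that $\ell_i = x_i$ for $i \le k$; this change is an automorphism of $\B^n$ and does not affect any Fourier-analytic quantity. Then $U_j$ is a sub-cube and a direct calculation gives $\widehat{f_j}(\beta) = \sum_{\alpha} \pm \hat f(\alpha)$, summed over $\alpha$'s that agree with $\beta$ on the free coordinates and with signs coming from $\chi_\alpha$ evaluated at the fixed coordinates. The triangle inequality then gives $\|\widehat{f_j}\|_1 \le \|\hat f\|_1 = A$, so Theorem~\ref{thm:local} can be reapplied to each $f_j$.

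The main obstacle, and the real content of the proof, is the size analysis. The naive recurrence $T(n,A) \le k+1 + \sum_{j=1}^{k} T(n-j,A)$ with $k=A^2$ unrolls to something exponential in $n$, far worse than the claimed $2^{A^2}n^{2A}$, because the spectral norm need not strictly drop on the off-path restrictions. The shape of the target bound strongly suggests an induction on (roughly) $\lceil A\rceil$: one would want to pay the $2^{A^2}$ cost only once at the top, and then argue that on each off-path branch the effective spectral norm drops by a definite amount (say by $1$, or at least by $1/A$), so that after $O(A)$ nested applications of Theorem~\ref{thm:local} the function becomes trivial, giving an $n^{O(A)}$ blow-up. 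Concretely, I would try to strengthen the conclusion of Theorem~\ref{thm:local} — or choose the constraints $\ell_i$ more cleverly, e.g.\ so that $\ell_1,\ldots,\ell_k$ ``absorb'' a set of Fourier mass of total weight at least $1$ — so that the restricted functions $f_j$ all satisfy $\|\widehat{f_j}\|_1 \le A-1$. Proving such a quantitative refinement, and in particular extracting the linear (in $n$) polynomial dependence, is the step I expect to require the most care.
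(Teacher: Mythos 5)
You have correctly set up the tree construction and correctly identified where the difficulty lies, but the proposal stops exactly at the point where the paper's proof actually begins: you never establish any quantitative decrease of the spectral norm on the off-path branches, and the specific refinement you hope for (all off-path restrictions satisfying $\|\widehat{f_j}\|_1 \le A-1$) is not what the paper proves and is not obviously true. The missing ingredient is Lemma~\ref{lem:restriction-reduces-norm} (proved via the identity $f^2=1$, Lemma~\ref{lem:convolution}): if one queries the single parity $\chi_{\alpha+\beta}$, where $\hat f(\alpha)$ and $\hat f(\beta)$ are the two largest coefficients in absolute value, then \emph{both} children lose spectral norm, one by at least $|\hat f(\alpha)|$ and the other by at least $|\hat f(\beta)|$; combined with Lemma~\ref{lem:largest-coeff} this gives drops of at least $1/A$ and $3/(4A)$ on the two sides whenever $|\hat f(\alpha)|<1/2$. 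Without a two-sided decrease of this kind, your recursion has branches on which the norm is only known not to increase, and, as you yourself note, the size bound then degenerates to something exponential in $n$.

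Moreover, even granting such a lemma, there is an exceptional case you do not address and cannot wish away: when $|\hat f(\alpha)|\ge 1/2$ the second coefficient can be arbitrarily small, so one child's norm may decrease by a negligible amount, and no choice of constraints forces a drop to $A-1$ (or even to $A-1/A$) on that child. The paper handles this not by strengthening Theorem~\ref{thm:local} but by a different tree (one parity query per node, not the caterpillar of $A^2$ constraints from Theorem~\ref{thm:local}) and by an induction on $n$ with the explicit bound $L(n,A)\le 2^{A^2}n^{2A}$, split into two cases: when $|\hat f(\alpha)|\ge 1/2$ the recursion $L(n,A)\le L(n-1,A-1/2)+L(n-1,A)$ is absorbed because the polynomial factor drops from $n^{2A}$ to $(n-1)^{2A}$ on the non-decreasing branch; when $|\hat f(\alpha)|<1/2$ the recursion $L(n,A)\le L(n-1,A-1/A)+L(n-1,A-3/(4A))$ is absorbed by the $2^{A^2}$ factor (using $A\ge 2$ there). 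This interplay between the $2^{A^2}$ and $n^{2A}$ factors is precisely the content of the theorem, and it is the part your proposal leaves unproved.
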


In particular, the theorem implies that $f=\sum_{i=1}^{2^{A^2}n^{2A}}\pm\ind{V_i}$, where each $V_i$ is a subspace.\\

Another result settles the conjecture of \cite{ZhangShi10,MontanaroOsborne10} for the case of sparse Boolean functions with small spectral norm.

\begin{theorem}\label{thm:sparse}
Let $f:\B^n\to \B$ be such that $\|\hat{f}\|_1=A$ and $|\{\alpha\mid \hat{f}(\alpha)\neq 0\}|=s$. Then $f$ can be computed by a \pdt{} of depth $A^2\log s$.
\end{theorem}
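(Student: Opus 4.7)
The plan is to build the parity decision tree by recursion on the sparsity $s$, using Theorem~\ref{thm:local} to carve off a block of at most $A^2$ levels at a time while halving the Fourier sparsity on each resulting branch.

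The first step is to apply Theorem~\ref{thm:local} to $f$, producing $k \leq A^2$ linearly independent linear forms $\ell_1, \ldots, \ell_k$ and constants $b_1, \ldots, b_k \in \B$ such that $f$ is constant on $V = \{x : \la \ell_i, x \ra = b_i \text{ for all } i\}$. I would place these $k$ queries at the top of the tree, branching into the cosets of $V$. On the coset $V$ itself, $f$ is constant and we output that value. On every other coset $V'$ we recurse, which works provided two properties are preserved: (i) $\|\widehat{f|_{V'}}\|_1 \leq A$; and (ii) the Fourier sparsity of $f|_{V'}$ is at most $\lceil s/2 \rceil$. Property (i) follows from a standard triangle-inequality computation: restriction merges Fourier coefficients lying in the same coset of $L := \sspan(\ell_1, \ldots, \ell_k)$ with signs, so the $\ell_1$ norm cannot grow. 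Property (ii) is the crux.

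To prove (ii), I would observe that the Fourier coefficients of $f|_{V'}$ are naturally indexed by cosets of $L$ in $\F_2^n$: the coefficient attached to the coset $\alpha + L$ is a signed sum $\sum_{\gamma \in \alpha + L} \varepsilon_\gamma \hat f(\gamma)$, where the signs $\varepsilon_\gamma \in \{\pm 1\}$ are determined by the values of $\la \ell_i, x \ra$ on $V'$. Hence the sparsity of $f|_{V'}$ is bounded by the number of cosets of $L$ that intersect $\supp(\hat f)$. The hypothesis ``$f|_V$ is constant'' says precisely that, with the signs coming from $b$, this signed sum vanishes for every non-trivial coset $\alpha + L \neq L$. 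Since a lone element cannot cancel, every non-trivial coset that meets $\supp(\hat f)$ must contain at least two support elements. Writing $s_0 := |\supp(\hat f) \cap L|$ and $T_0$ for the number of non-empty non-trivial cosets, this gives $s \geq s_0 + 2 T_0$; a short case analysis on $s_0 \in \{0, 1, \geq 2\}$ then yields $T \leq \lceil s/2 \rceil$ for the total number of non-empty cosets.

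Once (i) and (ii) are in hand, the recurrence $D(s) \leq A^2 + D(\lceil s/2 \rceil)$ with base case $D(1) \leq 1$ (a sparsity-$1$ Boolean function is either constant or equals $\pm \chi_\alpha$) solves to $D(s) \leq A^2 \log s + O(1)$, as desired. The main obstacle is property (ii); once the cancellation interpretation of ``$f|_V$ is constant'' is made explicit, the rest is a routine recursion.
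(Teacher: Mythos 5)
Your proposal is correct and is essentially the paper's own argument: apply Theorem~\ref{thm:local}, query the resulting (at most) $A^2$ parities at the top of the tree, and use the fact that constancy of $f|_V$ forces every nonzero coefficient to be paired with another one in the same coset of $\sspan\{\alpha_1,\ldots,\alpha_{A^2}\}$, so the sparsity (number of occupied cosets) halves on every branch and the recursion terminates after $\log s$ rounds. Your treatment of the trivial coset via the $\lceil s/2\rceil$ case analysis is in fact slightly more careful than the paper's, which glosses over that point; otherwise the two proofs coincide.
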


Thus, if the spectral norm of $f$ is constant (or $\poly(\log s)$), Theorem~\ref{thm:sparse} settles the conjecture affirmatively. The conjecture is still open for the case where the spectral norm of $f$ is large.\\

Our last result (for functions over the Boolean cube) fits into the context of learning theory and provides a bound on the depth of a \pdt{} {\em approximating} a function with a small spectral norm. Here, the distance between two Boolean functions is measured with respect to the uniform distribution, namely, $\dist(f,g)=\Pr_{x\in\B^n}[f(x)\neq g(x)]$.



\begin{theorem}\label{thm:approximation}
Let $f:\B^n\to \B$ be such that $\|\hat{f}\|_1=A$. Then for every $\delta, \epsilon>0$ there is a randomized algorithm that, given a query oracle to $f$, outputs (with probability at least $1-\delta$) a \pdt{} of depth $O(A^2+\log(1/\epsilon))$ and size $2^{O(A^2)}\min\{1/\epsilon^2,O(\log(1/\epsilon))^{2A}\}$, which computes a Boolean function $g_\epsilon$ such that $\dist(f,g_\epsilon)\leq \epsilon$. The algorithm runs in time polynomial in $n, \exp(A^2), 1/\epsilon$ and $\log(1/\delta)$.
\end{theorem}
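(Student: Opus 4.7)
My plan is to combine the $L_1$-learning algorithm of Kushilevitz--Mansour with the parity-decision-tree constructions from Theorems~\ref{thm:PDT} and~\ref{thm:sparse}.

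The first step is to run the KM algorithm with threshold $\tau := \Theta(\epsilon/A)$, which (with probability at least $1-\delta$) identifies the set $L_\tau := \{\alpha : |\hat{f}(\alpha)| \ge \tau\}$ using $\mathrm{poly}(n, A/\epsilon, \log(1/\delta))$ membership queries. Since $\|\hat f\|_1 = A$, we have $|L_\tau| \le A/\tau = O(A^2/\epsilon)$. Define the sparse polynomial $p := \sum_{\alpha \in L_\tau} \hat f(\alpha)\chi_\alpha$; Parseval and the fact that every omitted coefficient has magnitude below $\tau$ give $\|f-p\|_2^2 \le \tau \cdot \|\hat f\|_1 \le \epsilon/2$, so the Boolean function $g_\epsilon(x) := \sgn(p(x))$ satisfies $\Pr_x[f(x) \ne g_\epsilon(x)] \le \epsilon$. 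Crucially, $p$ inherits the bound $\|\hat p\|_1 \le A$ and depends only on the $|L_\tau|$ linear forms indexed by $L_\tau$.

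Next, I would construct a \pdt{} computing $g_\epsilon$ directly from the sparse polynomial $p$. The size $2^{O(A^2)}/\epsilon^2$ branch is obtained by adapting the construction of Theorem~\ref{thm:PDT} to $p$ (rather than to $g_\epsilon$): recursively use Theorem~\ref{thm:local} to find affine subspaces of co-dimension $O(A^2)$ on which $\sgn(p)$ is constant, terminating a branch either once $\sgn(p)$ is decided or once the current coset has measure below a per-leaf budget of the form $\epsilon/(\text{tree size})$, so that all leaves that incur error together contribute at most $\epsilon$. For the alternative $(\log(1/\epsilon))^{2A}$ size bound, I would apply Theorem~\ref{thm:PDT} in the reduced ambient space $\F_2^{|L_\tau|}$: viewing $p$ as a function of the linear forms in $L_\tau$ makes $|L_\tau|$ the effective value of $n$, and after choosing parameters so that $|L_\tau| = O(\log(1/\epsilon))$ (at the cost of a coarser approximation that is then refined by further queries), Theorem~\ref{thm:PDT} returns a tree of size $2^{A^2} |L_\tau|^{2A} = 2^{O(A^2)} (\log(1/\epsilon))^{2A}$.

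The main obstacle is that the sign operation can blow up spectral norm, so Theorems~\ref{thm:PDT} and~\ref{thm:sparse} cannot be invoked as a black box on $g_\epsilon$ itself. The resolution is to build the \pdt{} from the polynomial $p$ (whose spectral norm and sparsity are controlled) and to output $\sgn(p)$ at each leaf, rather than reason about the Fourier expansion of $g_\epsilon$. A secondary challenge is obtaining the \emph{additive} depth $O(A^2 + \log(1/\epsilon))$ rather than the multiplicative $A^2\log(1/\epsilon)$ that one would get by naively plugging $s = |L_\tau|$ into Theorem~\ref{thm:sparse}. I expect this to require a doubling-style argument: the initial $O(A^2)$ depth is absorbed by a single invocation of Theorem~\ref{thm:local}, after which each additional query at least halves the measure of the undecided coset, so $O(\log(1/\epsilon))$ further queries drive the remaining error below $\epsilon$.
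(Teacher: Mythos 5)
There is a genuine gap at the heart of your plan. You correctly observe that $\sgn(p)$ may have huge spectral norm, and you propose to sidestep this by ``building the \pdt{} from $p$'' --- but the tools you then invoke, Theorem~\ref{thm:local} and Theorem~\ref{thm:PDT}, as well as the engine behind them, Lemma~\ref{lem:restriction-reduces-norm}, are proved only for \emph{Boolean} functions: the whole argument starts from the identity $f^2=1$ (Lemma~\ref{lem:convolution}), which has no analogue for the real-valued truncation $p$, and nothing controls the spectral norm of $\sgn(p)$. So there is no guarantee that $\sgn(p)$ is constant on any coset of co-dimension $O(A^2)$, and your termination rule ``stop once the current coset has measure below $\epsilon/(\text{tree size})$'' is circular. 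The $(\log(1/\epsilon))^{2A}$ branch has a further problem: with $\tau=\Theta(\epsilon/A)$ you only get $|L_\tau|=O(A^2/\epsilon)$, and raising the threshold until $|L_\tau|=O(\log(1/\epsilon))$ forces the truncation error $\tau A$ far above $\epsilon$; ``refined by further queries'' is not a mechanism. Finally, the additive depth $O(A^2+\log(1/\epsilon))$ is not established: a query splits a coset into two halves, both of which may remain undecided, so nothing you wrote implies each extra query halves the measure of the undecided region.

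For contrast, the paper never sparsifies $f$. It builds a functional \pdt{} on $f$ itself (Lemma~\ref{lem:functional-pdt}), so every restriction stays Boolean with non-increasing spectral norm, and Lemma~\ref{lem:restriction-reduces-norm} guarantees that when one branches on $\chi_{\alpha+\beta}$ (for the two largest coefficients of the current restriction) at least one of the two children loses at least $1/A$ in spectral norm. The tree is truncated at depth $K=\max\{10A^2,2\log(1/\epsilon)\}$; since a uniformly random input follows each branch with probability $1/2$ and any path containing $A^2$ norm-reducing nodes ends at a constant (or highly biased) leaf, a binomial tail bound shows that at most an $\epsilon$ fraction of inputs reach an unbiased depth-$K$ leaf --- this single argument yields the additive depth and, via the same recursion as in Theorem~\ref{thm:PDT} with $K$ in place of $n$, both size bounds. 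The Kushilevitz--Mansour/Goldreich--Levin algorithm (Lemma~\ref{alg:KM}) enters only to locate, approximately, the top two coefficients of the restricted function at each node. If you want to salvage your proposal, run your recursion on the Boolean restrictions of $f$ rather than on $p$; as written, the construction step has no valid tool behind it.
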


Thus, when $A$ is a constant and $\epsilon$ is polynomially small, the depth is  $O(\log n)$ and the size is only poly-logarithmic in $n$. This greatly improves upon the representation guaranteed by Equation~\eqref{eq:KM}. If one insists on outputting a \pdt{}, then, for all ranges of parameters, the tree that we obtain is much smaller than the tree guaranteed by Equation~\eqref{eq:KM}.\\

We also prove analogs of the theorems above for functions $f \fptb$ having small spectral norm. Namely, in the theorems above 
one could instead talk of $f:\Z_p^n\to \B$ and obtain essentially the same results.\footnote{Of course, one would have to speak about the analog of a \pdt{} for the case where the inputs come from $\Z_p^n$.} Theorems~\ref{thm:p-local},~\ref{thm:p-PDT},~\ref{thm:p-sparse} and \ref{thm:p-approximation} are the $\Z_p$ analogs to Theorems~\ref{thm:local},~\ref{thm:PDT},~\ref{thm:sparse} and \ref{thm:approximation}, respectively. We note that in \cite{GreenSanders08b} Green and Sanders extended their result to hold for functions mapping an abelian group $G$ to $\B$, obtaining the same bound as in \cite{GreenSanders08}, so our result for functions on $\Z_p^n$ could be seen as an analog to their result for such groups.

\subsection{Comparison with \cite{GreenSanders08}}

Comparing Theorem~\ref{thm:PDT} to Equation~\eqref{eq:GS} (that was proved in \cite{GreenSanders08}), we note that while Equation~\eqref{eq:GS} does not involve the number of variables (i.e. the upper bound on the number of subspaces only involves $A$), our result does involve $n$. On the other hand, we give a more refined structure - that of a parity decision tree - which is not implied by Equation~\eqref{eq:GS} (see also the discussion above). Moreover, when $A=\Omega((\log \log n)^{1/4})$, our bound is much better than the one given in Equation~\eqref{eq:GS}.\\

Our proof technique is also quite different than that of \cite{GreenSanders08}. Their proof idea is to represent $f$ as $f=f_1+f_2$ where the Fourier supports of $f_1$ and $f_2$ are disjoint, and such that $f_1$ and $f_2$ are {\em close to being integer valued} and have a somewhat smaller spectral norm. Then, using recursion, they represent each $f_i$ as a sum of a small number of characteristic functions of subspaces. In particular, Green and Sanders do not restrict their treatment to Boolean functions but rather study functions that at every point of the Boolean cube obtain a value that is almost an integer. Thus, they prove a more general result, namely, that $f_{\Z}$, the integer part of $f$, can be represented in the form of Equation~\eqref{eq:GS}. We on the other hand only work with Boolean functions, so their result is stronger from that respect. However, while their proof was a bit involved and required using results from additive combinatorics, our approach is more elementary and is based on exploiting the fact that $f$ is Boolean. In particular, our starting point is an analysis of the simple equation $f^2=1$ (when we think of $f$ as mapping $\B^n$ to $\{\pm1\}$). Furthermore, we are able to use the fact that $f$ is Boolean in order to show that it can be computed by a small \pdt{}, which does not seem to follow from \cite{GreenSanders08}.\\

Green and Sanders later extended their technique and proved a similar result for functions over general abelian groups $f:G\to \{0,1\}$ \cite{GreenSanders08b}. Our technique do not extend to general groups, but we do obtain results for the case that $G=\Z_p^n$, which again has the same advantages and disadvantages compared to the result of \cite{GreenSanders08b} (although, the simplicity of our approach is even more evident here).

\subsection{Proof idea}

As mentioned above, our proof relies on the simple equation $f^2=1$ (when we think of $f:
\B^n\to \{\pm 1\}$). By expanding the Fourier representations (See Section~\ref{sec:prelim} for definitions) of both sides we reach the identity
$$
\sum_{\gamma} \hat{f}(\gamma)\hat{f}(\delta+\gamma) = 0,
$$
that holds for all $\delta \neq 0$ (See Lemma~\ref{lem:convolution}).
This identity could be interpreted as saying that the mass on pairs whose product is positive is the same as the mass on pairs whose product is negative. In particular, if we consider the two heaviest elements in the Fourier spectrum, say, $\hat{f}(\alpha)$ and $\hat{f}(\beta)$, and let $\delta=\alpha+\beta$, then  by restricting $f$ to one of the subspaces $\chi_\delta(x)=1$ or $\chi_\delta(x)=-1$, we get a substantial saving in the spectral norm (see Lemma~\ref{lem:restriction-reduces-norm}). This happens since there is a significant $L_1$ mass on pairs $\hat{f}(\gamma),\hat{f}(\delta+\gamma)$ that have different signs.  By repeating this process we manage to prove the existence of small \pdt{} for $f$.

The argument for functions over $\Z_p^n$ is similar, but requires more technical work. For that reason we decided to give a separate proof for the case of functions over the Boolean cube, and then, after the ideas were laid out in their simpler form, to prove the results in the more general case.

\subsection{The work of Tsang et al. \cite{TsangWXZ13}}\label{sec:TWXZ}

Independently and simultaneously to our work, Tsang et al. \cite{TsangWXZ13} obtained  related results.
The main objective of the work \cite{TsangWXZ13} was to study the communication complexity of sparse Boolean functions. These are functions $f$ such that the {\em communication matrix} of the function $F(x,y)=f(x\oplus y)$ has low rank. Resolving the log-rank conjecture from communication complexity for such functions was the main motivation for the conjecture raised in \cite{MontanaroOsborne10} and \cite{ZhangShi10}.

Tsang et al. managed to prove a stronger version of our Theorem~\ref{thm:local}, namely, they proved that $f$ is constant on a subspace of co-dimension at most $O(A)$. Their argument is identical to ours (namely, to the one given in Lemma~\ref{lem:restriction-reduces-norm}) except that they observe that after $O(1/A)$ steps of increasing the largest Fourier coefficient of $f$, it grows to at least $1/2$. From that point on they make use of the simple observation  that the proof of (their equivalent of) Lemma~\ref{lem:restriction-reduces-norm} actually guarantees that the restriction that saves the most in the spectral norm keeps increasing the largest coefficient. Thus, now at each step the spectral norm goes down by some constant factor and hence additional $O(1/A)$ many steps would make $f$ constant.\footnote{Our Lemma~\ref{lem:restriction-reduces-norm} only speaks about the spectral norm, but the effect on the largest Fourier coefficient is obvious from the proof.}

This immediately improves the results in Theorems~\ref{thm:local} and \ref{thm:sparse}; we can now change the factor $A^2$ to $A$ in both.

%

The work \cite{TsangWXZ13} does not contain analogs for Theorems \ref{thm:PDT} and \ref{thm:approximation}.
We also note that Tsang et al. did not study the case of functions from $\Z_p^n$ to $\B$, and so they do not have analogs of Theorems~\ref{thm:p-local}, \ref{thm:p-PDT}, \ref{thm:p-sparse} and \ref{thm:p-approximation}.

\subsection{Organization}

Section~\ref{sec:prelim} contains the basic background and definitions. In Section~\ref{sec:2} we prove our results for functions $f \ftb$. The results for functions on $\Z_p^n$ are given in Section~\ref{sec:p}. Finally, in Section~\ref{sec:open} we discuss   problems left open by this work.


\section{Notation and Basic Results}\label{sec:prelim}

It will be more convenient for us to talk about functions $f:\B^n\to \{\pm 1\}$. Note that if $f:\B^n\to \B$ then $1-2f:\B^n\to \{\pm 1\}$ and $1-2f$ and $f$ have roughly the same spectral norm (up to a multiplicative factor of $2$) and the same Fourier sparsity (up to $\pm 1$).

\subsection{Decision trees and parity decision trees}

In this section we define the basic computational models that we shall consider in the paper.

\begin{definition}[Decision tree]\label{def:dt}
A decision tree is a labeled binary tree $T$. Each internal node of $T$ is labeled with a variable $x_i$, and each leaf by a bit $b\in\bits$. Given an input $x\in\Z_2^n$, a computation over the tree is executed as follows: Starting at the root, stop if it's a leaf, and output its label. Otherwise, query its label $x_i$. If $x_i=0$, then recursively evaluate the left subtree, and if $x_i=1$, evaluate the right subtree.

\end{definition}

A decision tree $T$ computes a function $f$ if for every $x\in\Z_2^n$, the computation of $x$ over $T$ outputs $f(x)$.
The {\em depth} of a decision tree is the maximal length of a path from the root to a leaf. The decision tree complexity of $f$, denoted $D(f)$, is the depth of a minimal-depth tree computing $f$. Since one can always simply query all the variables of the input, it holds that for any Boolean function $f$, $D(f) \le n$. A comprehensive survey of decision tree complexity can be found in \cite{BuhrmanW02}.

In the context of Fourier analysis, even a function with simple Fourier spectrum, such as the parity function over $n$ bits, which has only 1 nonzero Fourier coefficient, requires a full binary decision tree for its computation, and in particular its depth is $n$.
This example suggests that a more suitable computational model for understanding the connection between the computational complexity and the Fourier expansion of a function is the {\em parity decision tree} model, first presented by Kushilevitz and Mansour (\cite{KushilevitzMansour93}).

\begin{definition}[\pdt]\label{def:pdt}
A parity decision tree is a labeled binary tree $T$, in which every internal node is labeled by a linear function $\alpha \in \Z_2^n$, and each leaf with a bit $b\in\bits$. Whenever a computation over an input $x$ arrives at an internal node, it queries $ \ip{\alpha,x} $ (where the inner product is carried modulo 2). If $\ip{\alpha,x}=0$ it recursively evaluates the left subtree, and if $\ip{\alpha,x}=1$, it evaluates the right subtree. When the computation reaches a leaf it outputs its label.

\end{definition}

Namely, a \pdt{} can make an arbitrary linear query in every internal node (and in particular, compute the parity of $n$ bits using a single query). Since a query of a single variable is linear, this model is an extension of the regular decision tree model.

The depth of the minimal-depth parity decision tree which computes $f$ is denoted $\pD(f)$, thus $\pD(f) \le D(f)$. As the example of the parity function shows, the parity decision tree model is strictly stronger than the model of decision trees. We also denote by $\psize(f)$ the size (i.e. number of leaves) of a minimal-size \pdt{}  computing $f$.

As a helpful tool, we extend the parity decision tree model to a {\em functional parity decision tree} model, in which we allow every leaf to be labeled with a Boolean function, rather than only by a constant. A functional \pdt{} $T$ then computes a function $f$ if for every leaf $\ell$ of $T$, its label equals the restriction of $f$ to the affine subspace defined by the constraints that appear on the path from $T$'s root to $\ell$.

\subsection{Fourier Transform}

We represent Boolean functions as functions $f \ftb \subseteq \R$ where $-1$ represents the Boolean value ``True'' and $1$ represents the Boolean value ``False".
For a vector of $n$ bits $\alpha$, $\alpha_i$ denotes its $i$-th coordinate.
The set of $2^n$ group characters $\left\lbrace \chi_\alpha : \Z_2^n \to \bits \mid \alpha\in \Z_2^n \right\rbrace$, with $\chi_\alpha\left(x\right)=(-1)^{\sum_{i=1}^{n} \alpha_i x_i}$ for every $\alpha \in \Z_2^n$, forms a basis of the vector space of functions from $\Z_2^n$ into $\R$. Furthermore, the basis is orthonormal with respect to the inner product\footnote{Later when we study of functions over $\Z_p^n$ we  define the inner product to be $\E_x \left[ f(x)\overline{g(x)} \right]$.}
$$
	\left\langle f,g \right\rangle = \E_x \left[ f(x)g(x) \right]
$$
where the expectation is taken over the uniform distribution over $\Z_2^n$.
The {\em Fourier expansion} of a function $f \ftb$ is its unique representation as a linear combination of those group characters:
$$
f(x) = \sum_{\alpha \in \Z_2^n} \hat{f}(\alpha) \chi_\alpha (x).
$$
Two of the basic identities of Fourier analysis, which follow from the orthonormality of the basis, are:

\begin{enumerate}
\item $\hat{f} (\alpha) = \ip{f,\chi_\alpha}=\E_x \left[ f(x)\chi_\alpha (x) \right] $
\item{(Plancherel's Theorem)} $\ip{f,g} = \E_x \left[ f(x)g(x) \right] = \sum_{\alpha \in \Z_2^n} \hat{f} (\alpha) \hat{g} (\alpha)$. 
\end{enumerate}
The case $f=g$ in Plancherel's theorem is called {\em Parseval's Identity}. Furthermore, when $f$ is Boolean, $f^2 = 1$, which implies
\begin{equation}\label{parseval}
\sum_{\alpha \in \Z_2^n} \hat{f} (\alpha)^2 = 1.
\end{equation}
We define two basic complexity measures for Boolean functions:

\begin{definition}
Let $f \ftb$ be a Boolean function. The {\em sparsity} of $f$, denoted $\spar(f)$, is the number of non-zero Fourier coefficients, namely
$$ \spar(f) = \# \left\lbrace \alpha\in\Z_2^n \mid \hat{f}(\alpha) \neq 0 \right\rbrace. $$
\end{definition}
A function $f$ is said to be {\em $s$-sparse} if $\spar (f) \le s$.

\begin{definition}
Let $f \ftb$ be a Boolean function. The $L_1$ norm (also dubbed the {\em spectral norm}) of $f$ is defined as
$$ \| \hat{f} \|_1 = \sum_{\alpha \in \Z_2^n} | \hat{f} (\alpha) |. $$
\end{definition}

For every $f \ftb$ it holds that $\Lone{\hat{f}} \ge  \|f\|_\infty = 1$ (where $\|f\|_\infty=\max_{x \in \Z_2^n} |f(x)|$). We later show (Lemma \ref{lem:spectral-norm-1}) that equality is obtained if and only if $f=\pm \chi_\alpha$ for some $\alpha \in \Z_2^n$.

These measure are related to parity decision trees using the following simple lemma. For completeness we give the proof of the lemma in Appendix~\ref{app:missing lemma}.

\begin{lemma}
\label{lem:simple facts}
Let $f \ftb$ be a Boolean function computed by a \pdt{} $T$ of depth $k$ and size $m$. Then:
\begin{enumerate}
\item $\spar(f) \le m2^k \le 4^k$.
\item $\Lone{\hat{f}} \le m \le 2^k$.
\end{enumerate}
\end{lemma}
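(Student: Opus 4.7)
The plan is to express $f$ explicitly as a signed sum of indicator functions of affine subspaces, one per leaf, and then bound both the sparsity and the spectral norm by examining the Fourier expansion of each indicator.

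Concretely, fix a leaf $\ell$ of $T$ at depth $d_\ell$, labeled by $v_\ell \in \{\pm 1\}$. The path from the root to $\ell$ imposes constraints of the form $\ip{\alpha_i^\ell, x} = b_i^\ell$ for $i = 1, \dots, d_\ell$, with $\alpha_i^\ell \in \Z_2^n$ and $b_i^\ell \in \{0,1\}$. The inputs reaching $\ell$ form an affine subspace $V_\ell$, and its indicator function has the closed form
\[
\mathbbm{1}_{V_\ell}(x) = \prod_{i=1}^{d_\ell} \frac{1 + (-1)^{b_i^\ell} \chi_{\alpha_i^\ell}(x)}{2},
\]
since each factor equals $1$ exactly when $\ip{\alpha_i^\ell,x} = b_i^\ell$. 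As each input reaches exactly one leaf, we get $f = \sum_\ell v_\ell \cdot \mathbbm{1}_{V_\ell}$.

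Expanding the product for a single leaf gives a sum of $2^{d_\ell}$ characters, each with coefficient of absolute value $2^{-d_\ell}$. This immediately yields $\|\widehat{\mathbbm{1}_{V_\ell}}\|_1 \le 2^{d_\ell} \cdot 2^{-d_\ell} = 1$ and $\spar(\mathbbm{1}_{V_\ell}) \le 2^{d_\ell} \le 2^k$. Summing over leaves and using the triangle inequality for $\|\cdot\|_1$ and the union bound for sparsity gives $\|\hat f\|_1 \le \sum_\ell |v_\ell| \le m$ and $\spar(f) \le m \cdot 2^k$. Finally, since $T$ is a binary tree of depth $k$, its number of leaves satisfies $m \le 2^k$, producing $\|\hat f\|_1 \le 2^k$ and $\spar(f) \le 4^k$.

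There is no substantive obstacle: the only step worth noting is verifying that each depth-$d_\ell$ factor $(1 + (-1)^{b_i^\ell}\chi_{\alpha_i^\ell})/2$ has spectral norm exactly $1$ and sparsity $2$, so multiplicativity of both measures under products of functions with disjoint-variable characters (which here reduces to a direct expansion, not requiring independence) gives the per-leaf bounds. The writeup should just present the decomposition $f = \sum_\ell v_\ell \mathbbm{1}_{V_\ell}$, carry out the one-line Fourier expansion of $\mathbbm{1}_{V_\ell}$, and then apply the triangle inequality and the trivial bound $m \le 2^k$.
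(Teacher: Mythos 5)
Your proposal is correct and follows essentially the same route as the paper's proof: decompose $f = \sum_\ell v_\ell \ind{V_\ell}$ over the leaves, expand each affine-subspace indicator as a product of factors $\frac{1 \pm \chi_{\alpha_i}}{2}$ to get per-leaf bounds $\Lone{\widehat{\ind{V_\ell}}} \le 1$ and $\spar(\ind{V_\ell}) \le 2^k$, and conclude via the triangle inequality and $m \le 2^k$. The only cosmetic difference is that the paper isolates the indicator computation as a separate lemma (Lemma~\ref{lem:characteristic-of-subspace}); your inline version, which does not even assume the path constraints are linearly independent, is equally valid.
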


In the upcoming sections we consider restrictions of Boolean functions to (affine) subspaces of $\Z_2^n$. We denote by $f|_V$ the restriction of $f$ to a subspace $V \subseteq \Z_2^n$. For any $\alpha \neq 0$, the set $\{ x \mid \chi_\alpha (x) = 1\}$ is a subspace of $\Z_2^n$ of co-dimension 1. The restriction of $f$ to this subspace is denoted  $f|_{\chi_\alpha = 1}$. Similarly, the set $\{ x \mid \chi_\alpha (x) = -1\}$ is an affine subspace of  co-dimension 1, and we denote with $f|_{\chi_\alpha = -1}$ the restriction of $f$ to this subspace.
It can be shown (cf. \cite{OdonnellBook}, Chapter 3, Section 3.3) that under such a restriction, the coefficients $\hat{f} (\beta)$ and $\hat{f} (\alpha + \beta)$ (for every $\beta \in \Z_2^n$) collapse to a single Fourier coefficient whose absolute value is $|\hat{f} (\beta) + \hat{f} (\alpha + \beta)|$. Similarly, in the Fourier transform of $f|_{\chi_\alpha = -1}$, they collapse to a single coefficient whose absolute value is $|\hat{f}(\beta) - \hat{f} (\alpha + \beta)| $. This in particular implies that $\Lone{\hat{f}}$ and $\spar(f)$ do not increase when $f$ is restricted to such a subspace. Indeed, both facts follow easily from the representation
\begin{equation}\label{eq:cosets}
	f(x) = \sum_{\beta \in \Z_2^n/\langle \alpha \rangle} \left(\hat{f}(\beta)+\hat{f}(\beta+\alpha)\chi_\alpha(x)\right)\chi_\beta(x)\;,
\end{equation}
where $\Z_2^n/\langle \alpha \rangle$ denotes the cosets of the group $\langle \alpha \rangle=\{0,\alpha\}$ in $\Z_2^n$.
When studying a restricted function, say $f'=f|_{\chi_\alpha(x)=1}$, we shall abuse notation and denote with
$\widehat{f'}(\beta)$ the term corresponding to the coset $\beta + \langle \alpha \rangle$. Namely,
$\widehat{f'}(\beta) = \hat{f}(\beta)+\hat{f}(\beta+\alpha)$. (similarly, for $f''=f|_{\chi_\alpha(x)=-1}$, we shall denote $\widehat{f''}(\beta) = \hat{f}(\beta)-\hat{f}(\beta+\alpha)$.)
Thus, in $f'$ both $\widehat{f'}(\beta)$ and $\widehat{f'}(\beta+\alpha)$ refer to the same Fourier coefficient as we only consider coefficients modulo $\langle \alpha \rangle$ (similarly for $f''$).


\section{Boolean functions with small spectral Norm}\label{sec:2}

In this section we prove our main results for functions over the Boolean cube.
While many of the proofs and techniques used for general primes also apply to the case $p=2$, we find the case $p=2$ substantially simpler, so we present the proofs for this case separately.

\subsection{Basic tools}

In this section we prove the following lemma, which states that for every Boolean function $f \ftb$, with small spectral norm, there exists a linear function $\chi_\gamma$ such that both restrictions $f|_{\chi_{\gamma}=1}$ and $f|_{\chi_{\gamma}=-1}$ have noticeable smaller spectral norms compared to $f$. In Section~\ref{sec:p} we give a generalization of the lemma for functions $f:\Z_p^n\to \{+1,-1\}$ (Lemma~\ref{lem:restriction-in-Fp}).

\begin{lemma}[Main Lemma for functions over $\Z_2^n$]
\label{lem:restriction-reduces-norm}
Let $f \ftb$ be a Boolean function. Let $\hat{f} (\alpha)$ be $f$'s maximal Fourier coefficient in absolute value, and $\hat{f} (\beta)$ be the second largest, and suppose $\hat{f}(\beta) \neq 0$.
Let $f'=f|_{\chi_{\alpha+\beta}=1}$ and $f''=f|_{\chi_{\alpha+\beta}=-1}$. Then, if $\hat{f}(\alpha)\hat{f}(\beta) > 0$ then it holds that
$$
\Lone{\hat{f'}} \le \Lone{\hat{f}} - |\hat{f}(\alpha)|
\quad \text{and}\quad
\Lone{\hat{f''}} \le \Lone{\hat{f}} -|\hat{f} (\beta) |.$$
If $\hat{f}(\alpha)\hat{f}(\beta) < 0$ then
$$
\Lone{\hat{f'}} \le \Lone{\hat{f}} - |\hat{f}(\beta)|
\quad \text{and}\quad
\Lone{\hat{f''}} \le \Lone{\hat{f}} -|\hat{f} (\alpha) |.$$
\end{lemma}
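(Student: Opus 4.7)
The plan is to control $\|\widehat{f'}\|_1$ and $\|\widehat{f''}\|_1$ coset-by-coset in $\Z_2^n/\langle \delta\rangle$ with $\delta := \alpha+\beta$. By Equation~\eqref{eq:cosets}, the Fourier coefficients of $f'$ and $f''$ are $\hat{f}(\gamma)\pm\hat{f}(\gamma+\delta)$, so $\|\widehat{f'}\|_1 = \sum_{\{\gamma,\gamma+\delta\}}|\hat{f}(\gamma)+\hat{f}(\gamma+\delta)|$ and similarly $\|\widehat{f''}\|_1 = \sum_{\{\gamma,\gamma+\delta\}}|\hat{f}(\gamma)-\hat{f}(\gamma+\delta)|$. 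The elementary identities $|a|+|b|-|a+b| = 2\min(|a|,|b|)\cdot \mathbf{1}[ab<0]$ and $|a|+|b|-|a-b| = 2\min(|a|,|b|)\cdot\mathbf{1}[ab>0]$ then give the saving formulas $\|\hat{f}\|_1 - \|\widehat{f'}\|_1 = 2\sum_{S^-}\min(|\hat{f}(\gamma)|,|\hat{f}(\gamma+\delta)|)$ and $\|\hat{f}\|_1 - \|\widehat{f''}\|_1 = 2\sum_{S^+}\min(|\hat{f}(\gamma)|,|\hat{f}(\gamma+\delta)|)$, where $S^+$ (resp.\ $S^-$) denotes the cosets on which $\hat{f}(\gamma)\hat{f}(\gamma+\delta)$ is positive (resp.\ negative).

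My main global input is the identity $f^2 \equiv 1$: taking the Fourier transform at $\delta \neq 0$ gives $\sum_\gamma \hat{f}(\gamma)\hat{f}(\gamma+\delta) = 0$, which, reorganized over cosets, yields $P := \sum_{S^+}|\hat{f}(\gamma)\hat{f}(\gamma+\delta)| = \sum_{S^-}|\hat{f}(\gamma)\hat{f}(\gamma+\delta)| =: N$. Since $\{\alpha,\beta\}$ is the unique coset containing the maximal coefficient $\hat{f}(\alpha)$, in Case A ($\hat{f}(\alpha)\hat{f}(\beta)>0$) it lies in $S^+$, forcing $P \geq |\hat{f}(\alpha)||\hat{f}(\beta)|$ and hence also $N \geq |\hat{f}(\alpha)||\hat{f}(\beta)|$.

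In Case A the bound for $\widehat{f''}$ is immediate, since the coset $\{\alpha,\beta\}$ alone already contributes $2\min(|\hat{f}(\alpha)|,|\hat{f}(\beta)|) = 2|\hat{f}(\beta)|$ to $2\sum_{S^+}\min$. For $\widehat{f'}$ I use the crucial observation that every coset in $S^-$ is distinct from $\{\alpha,\beta\}$, so it contains neither $\alpha$ nor $\beta$; by the maximality of $|\hat{f}(\alpha)|$ \emph{and} the second-maximality of $|\hat{f}(\beta)|$, both elements of every such coset have absolute value at most $|\hat{f}(\beta)|$. Consequently $|\hat{f}(\gamma)\hat{f}(\gamma+\delta)| \leq |\hat{f}(\beta)| \cdot \min(|\hat{f}(\gamma)|,|\hat{f}(\gamma+\delta)|)$ for every coset in $S^-$, so summing yields $\sum_{S^-}\min \geq N/|\hat{f}(\beta)| \geq |\hat{f}(\alpha)|$, and the saving is at least $2|\hat{f}(\alpha)|$, which is stronger than the claimed $|\hat{f}(\alpha)|$.

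Case B ($\hat{f}(\alpha)\hat{f}(\beta)<0$) is entirely symmetric: now $\{\alpha,\beta\}\in S^-$, so its direct contribution $2|\hat{f}(\beta)|$ to $2\sum_{S^-}\min$ takes care of the bound for $\widehat{f'}$, while the same ``max $\leq |\hat{f}(\beta)|$ on every other coset'' argument, applied this time to $S^+$, yields the bound for $\widehat{f''}$. The step I expect to be the main subtlety is precisely that second-maximality observation: without restricting to cosets other than $\{\alpha,\beta\}$ one only has the weaker estimate $\max \leq |\hat{f}(\alpha)|$, which produces a saving of just $2|\hat{f}(\beta)|$ and is insufficient in Case A for $\widehat{f'}$ when $|\hat{f}(\alpha)|$ is much larger than $|\hat{f}(\beta)|$.
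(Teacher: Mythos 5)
Your proposal is correct and follows essentially the same route as the paper: both rest on the convolution identity $\sum_\gamma \hat{f}(\gamma)\hat{f}(\alpha+\beta+\gamma)=0$ from $f^2=1$, split the mass by the sign of $\hat{f}(\gamma)\hat{f}(\gamma+\delta)$, use second-maximality of $|\hat{f}(\beta)|$ on cosets avoiding $\{\alpha,\beta\}$ to lower-bound $\sum\min$ by $|\hat{f}(\alpha)|$, and convert that min-mass into the $L_1$ saving under the restriction (with the $\{\alpha,\beta\}$ coset itself handling the other restriction). Your coset-by-coset bookkeeping is just a slightly sharper accounting (yielding a saving of $2|\hat{f}(\alpha)|$ rather than $|\hat{f}(\alpha)|$), not a different argument.
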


The proof of the lemma follows from analyzing  the simple equation $f^2=1$.

\begin{lemma}
\label{lem:convolution}
Let $f \ftb$ be a Boolean function. For all $\alpha \neq 0$, it holds that
$$
\sum_{\gamma} \hat{f}(\gamma)\hat{f}(\alpha+\gamma) = 0.
$$
\end{lemma}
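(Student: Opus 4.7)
The plan is to exploit the defining identity $f^2 \equiv 1$ (valid because $f$ takes values in $\{\pm 1\}$) and read off what this implies on the Fourier side. There are two natural routes, both very short; I would present the cleaner of the two.

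First I would write the Fourier expansion of $f$ and square it, computing
\[
f(x)^2 \;=\; \Bigl(\sum_{\gamma} \hat f(\gamma)\chi_\gamma(x)\Bigr)\Bigl(\sum_{\delta} \hat f(\delta)\chi_\delta(x)\Bigr)
\;=\; \sum_{\gamma,\delta} \hat f(\gamma)\hat f(\delta)\chi_{\gamma+\delta}(x),
\]
using $\chi_\gamma\chi_\delta = \chi_{\gamma+\delta}$ over $\Z_2^n$. Collecting terms according to $\alpha = \gamma + \delta$ yields
\[
f(x)^2 \;=\; \sum_{\alpha} \Bigl(\sum_{\gamma} \hat f(\gamma)\hat f(\alpha+\gamma)\Bigr)\chi_\alpha(x).
\]
Since $f^2 \equiv 1$, the uniqueness of the Fourier expansion forces the coefficient of $\chi_\alpha$ to vanish for every $\alpha \neq 0$, which is exactly the claimed identity.

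Alternatively (and this is the sentence I would probably include as a remark), one can view the sum as an inner product: using $\hat f(\alpha+\gamma) = \widehat{f\cdot \chi_\alpha}(\gamma)$ and Plancherel,
\[
\sum_\gamma \hat f(\gamma)\hat f(\alpha+\gamma) \;=\; \langle f,\, f\cdot \chi_\alpha\rangle \;=\; \E_x[f(x)^2\chi_\alpha(x)] \;=\; \E_x[\chi_\alpha(x)] \;=\; 0
\]
for every $\alpha \neq 0$. There is no real obstacle here: the only thing to be careful about is just to note explicitly that $f$ being $\{\pm 1\}$-valued (rather than $\{0,1\}$-valued) is what makes $f^2 \equiv 1$, which in turn is what kills all nonzero Fourier coefficients of $f^2$.
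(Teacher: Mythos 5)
Your first argument is exactly the paper's proof: expand $f^2=1$ in the Fourier basis, collect coefficients of $\chi_\alpha$, and invoke uniqueness of the Fourier expansion to conclude the coefficient vanishes for $\alpha\neq 0$. The Plancherel remark is a fine equivalent restatement, but the main route is the same as in the paper and is correct.
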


\begin{proof}
Since $f$ is Boolean we have that $f^2 = 1$. In the Fourier representation,
$$
\left( \sum_{\gamma} \hat{f}(\gamma)\chi_\gamma (x) \right)
\left( \sum_{\beta} \hat{f}(\beta)\chi_\beta (x) \right) = 1.
$$
Then $\sum_{\gamma} \hat{f}(\gamma)\hat{f}(\alpha+\gamma)$ is the Fourier coefficient $\widehat{f^2}(\alpha)$ of the function $f^2$ at $\alpha$. However, if $\alpha \neq 0$ then this coefficient equals 0 by the uniqueness of the Fourier expansion of the function $f^2=1$.
\end{proof}

\begin{proof}[Proof of Lemma \ref{lem:restriction-reduces-norm}]
Without loss of generality assume that $\hat{f}(\alpha)\hat{f}(\beta) > 0$, i.e. they have the same sign (the other case is completely analogous.) By Lemma \ref{lem:convolution},
\begin{equation}
\label{eq:convolution}
\sum_{\gamma\in\Z_2^n}\hat{f}(\gamma)\hat{f}(\alpha+\beta+\gamma)=0.
\end{equation}
Let $N_{\alpha+\beta} \subseteq \Z_2^n$ be the set of vectors $\gamma$ such that $\hat{f}(\gamma)\hat{f}(\alpha+\beta+\gamma) < 0$ (Note that by assumption, $\alpha, \beta \not \in N_{\alpha+\beta}$). Switching sides in \eqref{eq:convolution}, we get:
$$
2\left| \hat{f} (\alpha) \hat{f}(\beta) \right|=
\sum_{\gamma \in N_{\alpha+\beta}}
\left| \hat{f}(\gamma)\hat{f}(\alpha+\beta+\gamma) \right| -
\sum_{\substack{\gamma \not\in N_{\alpha+\beta} \\ \gamma\neq\alpha,\beta}} \left|\hat{f}(\gamma)\hat{f}(\alpha+\beta+\gamma) \right|.
$$
In particular,
\begin{equation}
\label{ineq:two-largest-coeffs}
| \hat{f} (\alpha) | | \hat{f}(\beta) | \le \frac{1}{2} \sum_{\gamma \in N_{\alpha+\beta}}
\left| \hat{f}(\gamma)\hat{f}(\alpha+\beta+\gamma) \right|.
\end{equation}
We now use the fact that that $\hat{f}(\beta)$ is the second largest in absolute value, and $\hat{f}(\alpha)$ does not appear in the sum, to bound the right hand side:
\begin{equation}
\label{ineq:sum-of-coeffs}
\sum_{\gamma \in N_{\alpha+\beta}}
\left| \hat{f}(\gamma)\hat{f}(\alpha+\beta+\gamma) \right|
\le
|\hat{f}(\beta)| \sum_{\gamma \in N_{\alpha+\beta}} \min
\left\lbrace
|\hat{f}(\gamma)|, |\hat{f}(\alpha+\beta+\gamma)| \right \rbrace.
\end{equation}
Then \eqref{ineq:two-largest-coeffs} and \eqref{ineq:sum-of-coeffs} (as well as the assumption $|\hat{f}(\beta)|>0$) together imply
\begin{equation}
\label{ineq:largest-coeff}
| \hat{f} (\alpha) | \le \frac{1}{2}
\sum_{\gamma \in N_{\alpha+\beta}} \min
\left\lbrace
|\hat{f}(\gamma)|, |\hat{f}(\alpha+\beta+\gamma)| \right \rbrace.
\end{equation}
Let $f'=f|_{\chi_{\alpha+\beta}=1}$. Then for every $\gamma$ the coefficients $\hat{f} (\gamma) $ and $\hat{f} (\alpha+\beta+\gamma)$ collapse to a single coefficient whose absolute value is
$ | \hat{f} (\gamma) + \hat{f} (\alpha+\beta+\gamma) | $ (recall Equation~\eqref{eq:cosets}). For $\gamma \in N_{\alpha+\beta}$, $$
|\hat{f}(\gamma) + \hat{f}(\alpha+\beta+\gamma)| =
\left\vert |\hat{f}(\gamma) | - |\hat{f}(\alpha+\beta+\gamma)| \right\vert
$$
which reduces the $L_1$ norm of $f'$ compared to that of $f$ by at least $\min(|\hat{f}(\gamma) | , |\hat{f}(\alpha+\beta+\gamma)|)$. In total, since both $\gamma$ and $\alpha+\beta+\gamma$ belong to $ N_{\alpha+\beta}$,
we get:
$$
\Lone{\widehat{f'}} \le \Lone{\hat{f}} - \frac{1}{2} \sum_{\gamma \in N_{\alpha+\beta}}
\min
\left\lbrace
|\hat{f}(\gamma)|, |\hat{f}(\alpha+\beta+\gamma)| \right \rbrace.
$$
Therefore by \eqref{ineq:largest-coeff} we have
$$
\Lone{\widehat{f'}} \le \Lone{\hat{f}} - |\hat{f}(\alpha)|.
$$
When we consider $f''=f|_{\chi_{\alpha+\beta}=-1}$ we clearly have that for $\gamma=\alpha$,
$$
|\widehat{f''}(\gamma)| =  | \hat{f} (\gamma) -\hat{f} (\alpha+\beta+\gamma) | = | \hat{f} (\alpha)| -|\hat{f} (\beta) |.$$
Hence,
$$
\Lone{\widehat{f''}} \le \Lone{\hat{f}} -|\hat{f} (\beta) |.$$
\end{proof}

Next, we show that any Boolean function with small spectral norm has a large Fourier coefficient.

\begin{lemma}
\label{lem:largest-coeff}
Let $f \ftb$ be a Boolean function. Denote $A = \Lone{\hat{f}}$, and let $\hat{f} (\alpha)$ be $f$'s maximal Fourier coefficient in absolute value. Then $|\hat{f}(\alpha)| ~ \ge ~ 1/A$. Furthermore, let $\hat{f} (\beta)$ be $f$'s second largest Fourier coefficient in absolute value. Then $|\hat{f}(\beta)| > (1-\hat{f}(\alpha)^2)/\Lone{\hat{f}} = (1-\hat{f}(\alpha)^2)/A$.
\end{lemma}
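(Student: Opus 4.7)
The plan is to prove both inequalities by combining Parseval's identity $\sum_{\gamma} \hat{f}(\gamma)^2 = 1$ with the hypothesis $\sum_{\gamma} |\hat{f}(\gamma)| = A$, exploiting the fact that the largest (respectively, second largest) coefficient dominates each term of the sum of squares.

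For the first claim, I would bound each squared coefficient by the product of its absolute value with the maximum absolute value $|\hat{f}(\alpha)|$. Summing over all $\gamma$ gives
\[
1 \;=\; \sum_{\gamma} \hat{f}(\gamma)^2 \;\le\; |\hat{f}(\alpha)| \cdot \sum_{\gamma} |\hat{f}(\gamma)| \;=\; |\hat{f}(\alpha)| \cdot A,
\]
so $|\hat{f}(\alpha)| \ge 1/A$.

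For the second claim, I would split off the $\alpha$-term and apply the same idea with $|\hat{f}(\beta)|$ playing the role of the dominant value among the remaining coefficients. Since $|\hat{f}(\gamma)| \le |\hat{f}(\beta)|$ for all $\gamma \ne \alpha$, Parseval yields
\[
1 - \hat{f}(\alpha)^2 \;=\; \sum_{\gamma \ne \alpha} \hat{f}(\gamma)^2 \;\le\; |\hat{f}(\beta)| \cdot \sum_{\gamma \ne \alpha} |\hat{f}(\gamma)| \;=\; |\hat{f}(\beta)| \cdot (A - |\hat{f}(\alpha)|).
\]
To obtain the strict inequality in the statement, I would then use the first part: since $A \ge 1$ implies $|\hat{f}(\alpha)| \ge 1/A > 0$, we have $A - |\hat{f}(\alpha)| < A$, and hence $|\hat{f}(\beta)| > (1 - \hat{f}(\alpha)^2)/A$ whenever $\hat{f}(\beta) \ne 0$ (which is the regime of interest, matching the hypothesis of Lemma~\ref{lem:restriction-reduces-norm}).

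There is no real obstacle here; the proof is a one-line application of Parseval against the $\ell_1$ bound. The only subtle point is justifying the strictness in the second inequality, which I would handle exactly by noting that the $\alpha$-term contributes a strictly positive amount to the spectral norm.
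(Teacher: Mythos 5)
Your proposal is correct and follows essentially the same route as the paper: Parseval's identity combined with bounding $\sum_\gamma \hat{f}(\gamma)^2$ by the largest (resp.\ second largest) coefficient times the $\ell_1$ mass. Your explicit remark that the strict inequality requires $|\hat{f}(\alpha)|>0$ and $\hat{f}(\beta)\neq 0$ is in fact slightly more careful than the paper's one-line derivation, and it matches the regime in which the lemma is applied.
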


\begin{proof}
By Parseval's identity,
$$ 1 = \E [ f^2 ] = \sum_{\gamma} \hat{f} (\gamma)^2.  $$
Now note that $$  1 = \sum_{\gamma} \hat{f} (\gamma)^2 \le |\hat{f}(\alpha)| \sum_{\gamma} |\hat{f}(\gamma)| \le A |\hat{f}(\alpha)| ,$$
which implies that indeed  $|\hat{f}(\alpha)| \ge 1/A$.
The second statement follows similarly, since
$$
1-\hat{f}(\alpha)^2 = \sum_{\gamma\neq\alpha} \hat{f} (\gamma)^2 \le |\hat{f}(\beta)| \sum_{\gamma \neq \alpha} |\hat{f}(\gamma)| < \Lone{\hat{f}}\cdot |\hat{f}(\beta)| = A |\hat{f}(\beta)|.
$$
\end{proof}

\begin{corollary}\label{cor:restriction-reduces-norm}
Let $f \ftb$ be a Boolean function such that $ \| \hat{f} \|_1 = A>1$. Then there exists $\gamma \in \Z_2^n$ and $b \in \bits$ such that $ \Lone{\widehat{f|_{\chi_\gamma = b}}} \le A - 1/A$.
\end{corollary}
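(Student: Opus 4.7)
The plan is to derive this as a straightforward consequence of Lemma~\ref{lem:restriction-reduces-norm} combined with the lower bound on the largest Fourier coefficient from Lemma~\ref{lem:largest-coeff}. First I would verify that the hypotheses of Lemma~\ref{lem:restriction-reduces-norm} are satisfied: I need the second largest coefficient $\hat{f}(\beta)$ to be nonzero. If $f$ had only a single nonzero Fourier coefficient $\hat{f}(\alpha)$, then Parseval's identity \eqref{parseval} would force $\hat{f}(\alpha)^2 = 1$, giving $A = |\hat{f}(\alpha)| = 1$ and contradicting the assumption $A > 1$. Hence at least two Fourier coefficients are nonzero, and picking $\hat{f}(\alpha)$ as the largest (in absolute value) and $\hat{f}(\beta)$ as the second largest, both are nonzero, and $\alpha \neq \beta$, so $\alpha+\beta \neq 0$ in $\Z_2^n$.

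Next, set $\gamma := \alpha + \beta$, and consider the two restrictions $f' = f|_{\chi_\gamma = 1}$ and $f'' = f|_{\chi_\gamma = -1}$ provided by Lemma~\ref{lem:restriction-reduces-norm}. Regardless of the sign of the product $\hat{f}(\alpha)\hat{f}(\beta)$, the lemma guarantees that one of the two restrictions has spectral norm at most $\Lone{\hat f} - |\hat f(\alpha)|$: when $\hat{f}(\alpha)\hat{f}(\beta) > 0$ this is $f'$, and when $\hat{f}(\alpha)\hat{f}(\beta) < 0$ this is $f''$. (Equivalently, among the two drops $|\hat f(\alpha)|$ and $|\hat f(\beta)|$, at least one of the restrictions realizes the larger drop, namely $|\hat f(\alpha)|$.)

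Finally I would apply Lemma~\ref{lem:largest-coeff}, which states that $|\hat f(\alpha)| \ge 1/A$. Combining this with the bound from the previous step yields a choice of $b \in \{\pm 1\}$ such that
\[
\Lone{\widehat{f|_{\chi_\gamma = b}}} \le A - |\hat f(\alpha)| \le A - 1/A,
\]
which is exactly the desired conclusion. There is essentially no obstacle here; the corollary is a mechanical combination of the two lemmas, with the only minor point being the degenerate case $\hat{f}(\beta) = 0$, handled by the $A > 1$ assumption.
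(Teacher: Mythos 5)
Your proof is correct and follows the paper's own route exactly: the paper likewise deduces the corollary immediately from Lemma~\ref{lem:restriction-reduces-norm} and Lemma~\ref{lem:largest-coeff}, using the assumption $A>1$ to guarantee that the second largest coefficient is nonzero. Your Parseval argument for that nondegeneracy and the observation that $\alpha+\beta\neq 0$ are just explicit fillings-in of details the paper leaves implicit.
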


\begin{proof}
The assumption $A>1$ implies the second largest coefficient, $\hat{f}(\beta)$, is non-zero, and then the result is immediate from Lemma~\ref{lem:restriction-reduces-norm} and Lemma~\ref{lem:largest-coeff}.
\end{proof}

\subsection{Proofs of Theorems}

We now show how Theorems~\ref{thm:local},\ref{thm:PDT},\ref{thm:sparse} and \ref{thm:approximation} follow as simple consequences of Lemma~\ref{lem:restriction-reduces-norm}.

\begin{lemma}
\label{lem:spectral-norm-1}
Let $f \ftb$ be a Boolean function such that $\Lone{\hat{f}} = 1$. Then $f = \pm \chi_\alpha$ for some $\alpha \in \Z_2^n$.
\end{lemma}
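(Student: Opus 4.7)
The plan is to compare the identity $\|\hat{f}\|_1 = 1$ with Parseval's identity $\sum_\gamma \hat{f}(\gamma)^2 = 1$. Subtracting one from the other gives
\[
\sum_\gamma |\hat{f}(\gamma)| \bigl( 1 - |\hat{f}(\gamma)| \bigr) = 0.
\]
Each term in this sum is non-negative, because from $\|\hat{f}\|_1 = 1$ we immediately get $|\hat{f}(\gamma)| \le 1$ for every $\gamma$. A sum of non-negative terms is zero only when every term is zero, so for each $\gamma$ either $\hat{f}(\gamma) = 0$ or $|\hat{f}(\gamma)| = 1$.

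Now I would argue that exactly one Fourier coefficient is non-zero. If two distinct $\gamma$'s had $|\hat{f}(\gamma)| = 1$, then $\|\hat{f}\|_1$ would already be at least $2$, contradicting the assumption. And the zero function obviously does not satisfy $\|\hat{f}\|_1 = 1$. Hence there is a unique $\alpha \in \Z_2^n$ with $\hat{f}(\alpha) = \pm 1$, and all other Fourier coefficients vanish. Reading off the Fourier expansion gives $f(x) = \pm \chi_\alpha(x)$, as desired.

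I do not anticipate any real obstacle here; the argument is just a one-line application of Parseval combined with the boundary case of the inequality $\sum |\hat{f}(\gamma)|^2 \le \|\hat{f}\|_1 \cdot \max_\gamma |\hat{f}(\gamma)|$ that was already used in the proof of Lemma~\ref{lem:largest-coeff}. (One could also derive the lemma from Lemma~\ref{lem:restriction-reduces-norm} by contradiction, assuming the second-largest coefficient is non-zero and then producing a restriction with negative spectral norm, but the Parseval-based proof is cleaner and does not require handling the degenerate case separately.)
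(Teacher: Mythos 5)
Your argument is correct and is essentially the paper's own proof: both compare Parseval's identity $\sum_\gamma \hat{f}(\gamma)^2 = 1$ with $\|\hat{f}\|_1 = 1$ and use $|\hat{f}(\gamma)| \le 1$ to force every coefficient to be $0$ or $\pm 1$, whence exactly one coefficient survives and $f = \pm\chi_\alpha$. No gaps; nothing further is needed.
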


\begin{proof}
By Parseval's identity and the assumption, we get
$$
 \sum_{\gamma} \hat{f} (\gamma)^2 = 1 = \sum_{\gamma} |\hat{f} (\gamma)|.
$$
For all $\gamma$ we have that $|\hat{f} (\gamma)| \in [0,1]$, so $|\hat{f} (\gamma)| < \hat{f} (\gamma)^2$ unless $|\hat{f} (\gamma)| = 1$ or $\hat{f}(\gamma) = 0$, and the proposition follows.
\end{proof}

Corollary \ref{cor:restriction-reduces-norm} and Lemma \ref{lem:spectral-norm-1} imply Theorem \ref{thm:local}:

\begin{proof}[Proof of Theorem \ref{thm:local}]
Apply Corollary \ref{cor:restriction-reduces-norm} iteratively on $f$. After less than $A^2$  steps, we are left with a function $g$ which is a restriction of $f$ on an affine subspace defined by the restrictions so far, such that $\Lone{\hat{g}} = 1$. By Lemma \ref{lem:spectral-norm-1}, $g = \pm \chi_\alpha$ for some $\alpha \in \Z_2^n$. If $\alpha \neq 0$ we further restrict $g$ on $\chi_\alpha = 1$ to get a restriction of $f$ which is constant.
\end{proof}

We note that the proof of Theorem~\ref{thm:local} actually implies that $f$ is constant on a subspace of co-dimension at most ${A+1 \choose 2}$. As mentioned earlier, a slight twist in the proof improves the co-dimension to $O(A)$ \cite{TsangWXZ13}.

\begin{proof}[Proof of Theorem \ref{thm:PDT}]

Let
$$
L(n,A) \eqdef \max_{\substack{f\ftb \\ \Lone{\hat{f}} \le A}} \psize(f).
$$
We show, by induction on $n$, that $L(n,A) \le 2^{A^2} \cdot n^{2A}$.

For $n=1$ the result is trivial.

Let $n>1$ and further assume that $A>1$ (if $A=1$ then the claim follows from Lemma~\ref{lem:spectral-norm-1}).
Let $\hat{f}(\alpha),\hat{f}(\beta)$ be the first and second largest Fourier coefficients in absolute value, respectively.
By Lemma~\ref{lem:largest-coeff} we are in one of the following cases:
\begin{enumerate}
\item $|\hat{f}(\alpha)| \ge 1/2$
\item $1/2 > |\hat{f}(\alpha)| \ge 1/A$ and $|\hat{f}(\beta)| > \frac{1-\hat{f}(\alpha)^2}{A} \ge \frac{3}{4A}$.
\end{enumerate}

Consider the  tree whose first query is the linear function $\chi_\gamma$ where $\gamma = \alpha + \beta$
(i.e. we branch left or right according to the value of $\langle x,\gamma\rangle$).
By the choice of $\gamma$ we obtain the following recursion:
In case 1, $$L(n,A) \le L(n-1,A-1/2) + L(n-1, A);$$ While in case 2, $$L(n,A) \le L(n-1,A-1/A) + L(n-1, A-3/(4A)).$$
Note also that in the second case $A \ge 2$, or else $|\hat{f}(\alpha)| \ge 1/2$ by Lemma~\ref{lem:largest-coeff}.
Induction follows in the first case as
\begin{align*}
L(n,A) &\leq L(n-1,A-1/2) + L(n-1,A)\\
        &\leq 2^{(A-1/2)^2} \cdot (n-1)^{2(A-1/2)} + 2^{A^2} \cdot (n-1)^{2A}  \\
		&\leq 2^{A^2} \cdot (n-1)^{2(A-1/2)} \left( 1 + (n - 1) \right) \\
		&< 2^{A^2} \cdot n^{2A}\;.
\end{align*}
In the second case we have
\begin{align*}
L(n,A)  &\leq L(n-1,A-1/A) + L(n-1,A-3/(4A))\\
        &\leq 2^{(A-1/A)^2} \cdot (n-1)^{2(A-1/A)} + 2^{(A-3/(4A))^2} \cdot (n-1)^{2(A-3/(4A))}  \\
		&\leq 2^{A^2} \cdot n^{2A} \left( 2^{-2+1/A^2} + 2^{-3/2+(3/4)^2/A^2} \right) \\
        &\leq 2^{A^2} \cdot n^{2A}\;,
\end{align*}
where in the last inequality we used the fact that $A \ge 2$.
\end{proof}

As the AND function demonstrates, this argument gives a result that is tight up to a polynomial factor in some cases.

\begin{proof}[Proof of Theorem \ref{thm:sparse}]
By Theorem \ref{thm:local}, there exist $A^2$ linear functions $\alpha_1,\ldots,\alpha_{A^2}$ that can be fixed to values $b_1,\ldots,b_{A^2}$, respectively, where $b_i \in \bits$ for $1\le i \le A^2$, such that $f$ restricted to the subspace $\{x\mid \chi_{\alpha_i}(x)=b_i \;,\; \forall 1\le i \le A^2\}$ is constant. This implies that for any non-zero coefficient $\hat{f}(\beta)$ there exists at least one other non-zero coefficient $\hat{f} (\beta + \gamma)$ for $\gamma \in \text{span}\{\alpha_1,\ldots,\alpha_{A^2}\}$. Indeed, if no such coefficient exists then the restriction $f|_{\chi_{\alpha_1}(x)=b_1,\ldots,\chi_{\alpha_{A^2}}=b^2}$ will have the non-constant term $\hat{f}(\beta)\cdot \chi_\beta$ (for example, this can be easily obtained from Equation~\eqref{eq:cosets}).
Therefore, for any other fixing of $\chi_{\alpha_1},\ldots,\chi_{\alpha_{A^2}}$, both $\hat{f}(\beta)\chi_\beta$ and  $\hat{f} (\beta + \gamma)\chi_{\beta+\gamma}$ collapse to the same (perhaps non-zero) linear function, which implies that $\spar(f|_{\chi_{\alpha_1}=b'_1,\ldots,\chi_{\alpha_{A^2}}=b'_{A^2}}) \le \spar(f)/2$ for any choice of $b'_1,\ldots,b'_{A^2}$. In other words, if we consider the tree of depth $A^2$ in which on level $i$ all nodes branch according to $\ip{\alpha_i,x}$ then restricting $f$ to any path yields a new function with half the sparsity. Thus, we can continue this process by induction for at most $\log s$ steps, until all the functions in the leaves are constant. The resulting tree has depth at most $A^2\log s$ as claimed.
\end{proof}

Our next goal is proving Theorem \ref{thm:approximation}. To this end, we use a lemma which shows there exists a low depth functional \pdt{} which computes a function $g$ such that $\Pr_x [f(x) \neq g(x)] \le \epsilon$, where $x$ is drawn from the uniform distribution over $\Z_2^n$. Recall that the {\em bias} of a Boolean function $f$ is defined to be
$$
\bias(f) \eqdef \left| \Pr_x[f(x)=1] - \Pr_x[f(x)=-1] \right|.
$$
Alternatively, $\bias(f) = |\hat{f}(0)|$.

\begin{lemma}
\label{lem:functional-pdt}
\sloppy Let $f \ftb$ be a Boolean function with $\Lone{\hat{f}} \leq A$. Then, there exists a functional \pdt{} of depth at most $O(A^2 + \log (1/\epsilon))$ that computes a function $g$ such that $\Pr_x [f(x) \neq g(x)] \le \epsilon$. Furthermore, the size of the tree is at most $2^{O(A^2)}\min\{1/\epsilon^2,O(\log(1/\epsilon))^{2A}\}$.
\end{lemma}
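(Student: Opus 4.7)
The plan is to extend the recursive construction of Theorem~\ref{thm:PDT} into an approximating tree whose leaves may be non-constant Boolean labels, and to equip the recursion with an early-termination rule that trades off tree size against approximation error. At every node, let $f'$ be the current restricted function and branch on $\chi_{\alpha+\beta}$, where $\widehat{f'}(\alpha), \widehat{f'}(\beta)$ are the two largest Fourier coefficients in absolute value, exactly as in Lemma~\ref{lem:restriction-reduces-norm}. I declare a node a leaf if either (a) $\Lone{\widehat{f'}}=1$, in which case Lemma~\ref{lem:spectral-norm-1} gives $f'=\pm\chi_\delta$ which serves as the functional label (zero error), or (b) the largest Fourier coefficient has grown large enough that $|\widehat{f'}(\alpha)| \geq 1-2\epsilon$, in which case I label the leaf with $\mathrm{sgn}(\widehat{f'}(\alpha))\cdot\chi_\alpha$ and incur error at most $(1-|\widehat{f'}(\alpha)|)/2 \leq \epsilon$ on the subspace associated with that leaf. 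Averaging over leaves, the final Boolean function $g$ satisfies $\Pr_x[f(x) \neq g(x)] \leq \epsilon$.

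For the depth bound I split each root-to-leaf path into two phases. In Phase~1, while $|\widehat{f'}(\alpha)|\leq 1/2$, Lemma~\ref{lem:largest-coeff} gives $|\widehat{f'}(\beta)| > (1-\widehat{f'}(\alpha)^2)/\Lone{\widehat{f'}} \geq 3/(4A)$, so Lemma~\ref{lem:restriction-reduces-norm} ensures that $\Lone{\widehat{f'}}$ decreases by $\Omega(1/A)$ on both branches; since $\Lone{\widehat{f'}}\leq A$, this phase terminates within $O(A^2)$ steps. In Phase~2 ($|\widehat{f'}(\alpha)|>1/2$), I track the gap $\Phi = 1-|\widehat{f'}(\alpha)|$ and show, by inspecting how the coefficients at $\alpha$ and $\beta$ collapse under the branching, that the ``good'' branch shrinks $\Phi$ by a factor of $1-\Omega(1/A)$ (using the lower bound $|\widehat{f'}(\beta)| \geq (1-|\widehat{f'}(\alpha)|)(1+|\widehat{f'}(\alpha)|)/A$), while on the ``bad'' branch growth in $\Phi$ is balanced against a significant drop in $\Lone$. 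An amortization over the remaining $\Lone$-budget shows that no path spends more than $O(\log(1/\epsilon))$ additional steps in Phase~2.

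For the size bound I establish the two claimed bounds separately and take the minimum. For $2^{O(A^2)}(\log(1/\epsilon))^{2A}$, I re-run the two-case recurrence of Theorem~\ref{thm:PDT}, with the parameter $n$ replaced by the depth budget $O(A^2 + \log(1/\epsilon))$; the same arithmetic as in that proof then solves to $2^{O(A^2)}(\log(1/\epsilon))^{2A}$ after absorbing the $A^2$ summand into the leading exponential factor. For $2^{O(A^2)}/\epsilon^2$, I observe that after Phase~1 concludes---generating at most $2^{O(A^2)}$ intermediate subproblems---each restricted function has $\Lone{\widehat{f'}}=O(1)$ and can therefore be $\epsilon$-approximated, \`a la Kushilevitz--Mansour, by the sign of a partial Fourier sum of its $O(1/\epsilon^2)$ heaviest coefficients; building a $\oplus$-DT for each such partial sum yields a sub-tree of size $O(1/\epsilon^2)$, and combining gives the claimed global bound.

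The main obstacle is the Phase~2 depth analysis: since the ``bad'' branch can actually \emph{increase} the gap $\Phi$, a naive per-branch recursion fails. The delicate step is the amortization argument---bounding the number of bad Phase~2 steps on any single path by connecting the growth of $\Phi$ there to a guaranteed loss of $\Lone$-budget---which certifies that no root-to-leaf path can remain in Phase~2 for more than $O(\log(1/\epsilon))$ total steps.
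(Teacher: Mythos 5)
There is a genuine gap at the heart of your depth analysis. Because your construction has no depth cutoff, you must prove that \emph{every} root-to-leaf path terminates within $O(A^2+\log(1/\epsilon))$ steps, and the Phase~2 amortization you invoke is neither proved nor plausible with these parameters. Even granting your per-step estimates, the good branch only improves the gap $\Phi=1-|\widehat{f'}(\alpha)|$ by a factor $1-\Theta(1/A)$ (the collapse adds $|\widehat{f'}(\beta)|\ge \Phi(1+|\widehat{f'}(\alpha)|)/A$ to the top coefficient), so driving $\Phi$ from a constant down to $2\epsilon$ already costs $\Omega(A\log(1/\epsilon))$ steps, which exceeds the claimed $O(A^2+\log(1/\epsilon))$ in the regime $1\ll A\ll \log(1/\epsilon)$. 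Worse, on bad Phase~2 branches the only guaranteed loss of spectral norm from Lemma~\ref{lem:restriction-reduces-norm} is $|\widehat{f'}(\beta)|$, which at a non-leaf node can be as small as $\Theta(\epsilon/A)$ (when $|\widehat{f'}(\alpha)|$ sits just below $1-2\epsilon$); the $L_1$-budget argument by itself therefore allows on the order of $A^2/\epsilon$ bad steps on a single path, and you give no mechanism---nor does one obviously exist---forcing $\Phi$ to grow fast enough on bad branches to pay for them.

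The paper sidesteps per-path termination entirely: it truncates the recursion at depth $K=\max\{10A^2,2\log(1/\epsilon)\}$ and argues \emph{distributionally}. At every internal node at least one of the two branches reduces the norm by $1/A$ (Lemmas~\ref{lem:restriction-reduces-norm} and~\ref{lem:largest-coeff}), a uniformly random input follows each branch with probability exactly $1/2$ independently along its path, and once it has followed $A^2$ norm-reducing branches its leaf is constant; hence the measure of inputs reaching a non-highly-biased depth-$K$ leaf is at most $\sum_{i<A^2}\binom{K}{i}2^{-K}\le 2^{-K/2}\le\epsilon$. Bad leaves are allowed to exist; they simply carry little mass. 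With the cutoff in hand both size bounds are immediate: the trivial $2^K\le 2^{O(A^2)}\epsilon^{-2}$, and the Theorem~\ref{thm:PDT} recurrence with $K$ in place of $n$, giving $2^{A^2}K^{2A}$. Your alternative route to the $2^{O(A^2)}/\epsilon^2$ bound also fails as stated: after Phase~1 the restricted functions need not have constant spectral norm (a top coefficient above $1/2$ does not bound $\Lone{\widehat{f'}}$), and the sign of a Kushilevitz--Mansour approximation with $O(1/\epsilon^2)$ terms is not computable by a \pdt{} of size $O(1/\epsilon^2)$: computing it requires querying all the characters in the sum, i.e.\ depth $\Theta(1/\epsilon^2)$ and size as large as $2^{\Theta(1/\epsilon^2)}$.
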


\begin{proof}
Let $K=\max\left\lbrace 10A^2, 2\log(1/\epsilon)\right\rbrace$ be a bound on the depth of the tree. In order to construct the functional decision tree, we use a recursive argument that stops whenever we reach a constant leaf, or after $K$ levels of recursion, and then show that for a uniformly random $x \in \Z_2^n$, $x$ arrives at a highly biased leaf with probability $\geq 1-\epsilon$, hence proving the statement of the lemma.

Let $\hat{f}(\alpha)$ be $f$'s largest coefficient in absolute value, and $\hat{f}(\beta)$ the second largest. Note that if $|\hat{f}(0)| > 1-\epsilon$ we are done. Hence, we consider two cases:

\begin{enumerate}
\item $|\hat{f}(\alpha)| > 1-\epsilon$ for $\alpha \neq 0$:

We first show that if $|\hat{f}(\alpha)| > 1-\epsilon$ then $|\hat{f}(0)| < \epsilon$. By considering $-f$ instead of $f$, if needed, we may assume without the loss of generality $\hat{f}(\alpha) > 1-\epsilon$. Note that
$$
1-\epsilon < \hat{f}(\alpha) = \Pr[f=\chi_\alpha] - \Pr [f\neq\chi_\alpha] = (1-\Pr[f\neq\chi_\alpha]) - \Pr[f\neq\chi_\alpha],
$$
so $\Pr[f\neq\chi_\alpha] < \epsilon/2$. Now, since $\E[\chi_\alpha] = 0$, we have
$$
|\hat{f}(0)| = |\E[f]| =|\E[f] - \E[\chi_\alpha]| = |\E[f-\chi_\alpha]| \le \E[|f-\chi_\alpha|] = 2\Pr[f\neq\chi_\alpha] < \epsilon.
$$
In this case we query on $\chi_\alpha$. Note that no matter what value $\chi_\alpha$ obtains, the restricted function has bias at least $|\hat{f}(\alpha)|-|\hat{f}(0)|>1-2\epsilon$, and we terminate the recursion.

\item $|\hat{f}(\alpha)| \le 1-\epsilon$:

In this case we query on $\chi_{\alpha+\beta}$. Let $f'=f|_{\chi_{\alpha+\beta}=1}$ and $f''=f|_{\chi_{\alpha+\beta}=-1}$ By Lemma \ref{lem:restriction-reduces-norm}, for at least one of $f'$ and $f''$, the spectral norm drops by at least $1/A$. We continue by induction the construction on $f'$ and $f''$, terminating when all the leaves are highly biased (in particular this includes the case of a constant leaf), or after at most $K$ levels of recursion.
\end{enumerate}

It remains to be shown that the fraction of inputs $x \in \Z_2^n$ that arrive at an unbiased leaf is at most $\epsilon$. We say an internal node labeled $\chi_\gamma$ is norm-reducing for $x$, if $\chi_{\gamma}(x)=b$ and the restriction on $\chi_\gamma=b$ reduces the spectral norm by at least $1/A$. Clearly, a computation over any input $x$ which traverses $A^2$ norm reducing nodes for $x$ arrives at a constant leaf. Furthermore, by construction, all the leaves which are not highly biased appear in the $K$-th level of the tree. Hence, an input which arrives at an unbiased node satisfies $K$ independent linear equations, for which at most $A^2$ are norm reducing. Since for every fixed $0 \neq \gamma \in \Z_2^n$ and $b \in \bits$ the probability that $\chi_\gamma(x)=b$ is exactly $1/2$, the probability that $x$ arrives at a non highly biased node is bounded by\footnote{We count how many words in $\{0,1\}^K$ with fewer than $A^2$ $1$'s are there.}
$$
\sum_{i=0}^{A^2-1} \frac{{K \choose i}}{2^K} \le
2^{-K} A^2 {K \choose A^2} \le 2^{(-K/2)} \le \epsilon
$$
by the choice of $K$.

To prove the upper bound on the size of the tree we first note that $2^K$ is a trivial upper bound. Moreover, as in the proof of Theorem~\ref{thm:PDT}, the construction that we have satisfies the recursion formula
\begin{align*}
S(K-d,B) \le & \max \left\lbrace S(K-(d+1),B-1/2) + S(K-(d+1), B), \right. \\
        & \left. S(K-(d+1),B-1/B) + S(K-(d+1), B-3/(4B)) \right\rbrace,
\end{align*}
where $S(K-d,B)$ stands for the number of leaves in the tree rooted at a node $v$ at depth $d$ such that the function $f_v$ computed at $v$ satisfies $\Lone{\hat{f_v}}\leq B$. As before, the solution to this recursion is $S(K,A)\leq 2^{A^2}K^{2A}$.
Overall, we have that the size of the tree the approximating parity decision is at most:
\begin{align*}
\min\left\lbrace{ 2^K, 2^{A^2}K^{2A}}\right\rbrace
&= \min\left\lbrace \max\left\lbrace{2^{10A^2}, \epsilon^{-2}}\right\rbrace, 2^{A^2} \cdot \max\left\lbrace{ (2\log(1/\epsilon))^{2A}, (10A^2)^{2A}}\right\rbrace  \right\rbrace\\
&\le \min\left\lbrace 2^{10A^2} \cdot \epsilon^{-2}, 2^{A^2} \cdot (10A^2)^{2A} \cdot (2\log(1/\epsilon))^{2A}\right\rbrace\\
&\le 2^{O(A^2)} \cdot \min \left\lbrace \epsilon^{-2}, O(\log(1/\epsilon))^{2A} \right\rbrace
\end{align*}
as claimed.

\end{proof}


Note that if we replace each highly biased function-labeled leaf in the functional \pdt{} from Lemma \ref{lem:functional-pdt} with the constant it is biased towards (i.e. by the sign of its constant term), the total error would increase by at most $\epsilon$. That is, it can be easily converted to a regular \pdt{} of a function $g$ which $\epsilon$-approximates $f$.
In fact, in the proof of Lemma~\ref{lem:functional-pdt}, we could have continued the recursion until reaching a constant leaf or depth $K$, but for the sake of understanding the proof of Theorem~\ref{thm:approximation} it may be more clear to keep the current version in mind.

The proof of Theorem \ref{thm:approximation} follows by combining Lemma~\ref{lem:functional-pdt} with the well known result of Goldreich and Levin \cite{GoldreichLevin89} and of Kushilevitz and Mansour \cite{KushilevitzMansour93}, who showed that given a query oracle to a function $f$, with high probability, one can approximate its large Fourier coefficients in polynomial time.

\begin{lemma}[\cite{GoldreichLevin89,KushilevitzMansour93}]
\label{alg:KM}
There exists a randomized algorithm, such that given a query oracle to a function $f \ftb$, and parameters $\delta, \theta,\eta$, outputs, with probability at least $1-\delta$, a list containing all of $f$'s Fourier coefficients whose absolute value is at least $\theta$. Furthermore, the algorithm outputs an additive approximation of at most $\eta$ to each of these coefficients. The algorithm runs in polynomial time in $n$, $1/\theta$, $1/\eta$ and $\log(1/\delta)$.
\end{lemma}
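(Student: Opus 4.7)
The plan is to implement the classical Kushilevitz--Mansour bucketing algorithm. For a prefix $\beta \in \B^k$ with $0 \leq k \leq n$, define the bucket weight
$$W_\beta \eqdef \sum_{\alpha \in \B^n :\, \alpha_{1}\cdots\alpha_{k}=\beta} \hat{f}(\alpha)^2.$$
Three structural facts drive the algorithm: (i) by Parseval $\sum_{\beta \in \B^k} W_\beta = 1$, so at most $1/\theta^2$ length-$k$ prefixes can satisfy $W_\beta \geq \theta^2$; (ii) $W_\beta = W_{\beta 0} + W_{\beta 1}$, so heaviness only passes downward to a bounded-size frontier; and (iii) writing $x=(u,v)$ with $u\in\B^k$ and $v\in\B^{n-k}$, one has the identity
$$W_\beta \;=\; \E_{u,u',v}\bigl[f(u,v)\,f(u',v)\,\chi_\beta(u+u')\bigr],$$
obtained by squaring $g_\beta(v)\eqdef\E_u[f(u,v)\chi_\beta(u)]$ and applying Parseval in the suffix coordinates. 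This identity lets us estimate $W_\beta$ to any additive accuracy $\tau$ from $O(\tau^{-2}\log(1/\delta'))$ oracle queries, by Hoeffding.

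First I would build the set of heavy prefixes level by level: start with $L_0=\{\emptyset\}$, and for each $\beta\in L_k$ compute empirical estimates $\tilde{W}_{\beta 0}, \tilde{W}_{\beta 1}$ of additive accuracy $\theta^2/4$, placing $\beta b$ into $L_{k+1}$ exactly when $\tilde{W}_{\beta b}\geq 3\theta^2/4$. If all estimates are accurate, then every $\beta$ with $W_\beta\geq \theta^2$ survives at every level, no $\beta$ with $W_\beta < \theta^2/2$ is retained, and (i) forces $|L_k|\leq 2/\theta^2$ throughout. A union bound over the $O(n/\theta^2)$ bucket estimates, with per-estimate confidence parameter $\delta' = \delta/\poly(n,1/\theta)$, makes this accuracy hold simultaneously with probability at least $1-\delta/2$. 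After $n$ rounds $L_n\subseteq\B^n$ contains every $\alpha$ with $|\hat{f}(\alpha)|\geq\theta$.

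Next, for each $\alpha\in L_n$, I would estimate $\hat{f}(\alpha)=\E_x[f(x)\chi_\alpha(x)]$ to additive error $\eta$ by averaging $f(x)\chi_\alpha(x)$ over $O(\eta^{-2}\log(1/\delta''))$ uniform samples. Taking $\delta''=\delta/(2|L_n|)$ and applying Hoeffding plus a union bound makes all of these approximations simultaneously $\eta$-accurate with probability at least $1-\delta/2$. The algorithm outputs the pairs $(\alpha,\tilde{f}(\alpha))$ for $\alpha\in L_n$; by a final union bound the overall failure probability is at most $\delta$, and the total query and time complexity is $\poly(n,1/\theta,1/\eta,\log(1/\delta))$.

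The main obstacle I expect is the derivation and bookkeeping around the estimator in (iii) --- this is exactly what distinguishes the algorithm from the infeasible naive one that estimates all $2^n$ Fourier coefficients, and it forces the correlated sampling pattern $(u,v),(u',v)$ sharing a common suffix. Once the identity and its Hoeffding-based estimator are in place, Parseval controls the size of the surviving frontier at each level and the rest is a disciplined union bound over the at most $O(n/\theta^2)$ bucket estimates and the at most $O(1/\theta^2)$ final coefficient estimates.
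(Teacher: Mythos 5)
Your proposal is correct: it is the standard Kushilevitz--Mansour bucketing argument (prefix buckets $W_\beta$, the estimator $\E_{u,u',v}[f(u,v)f(u',v)\chi_\beta(u+u')]$, Parseval to bound the surviving frontier, Hoeffding plus union bounds, and a final direct estimation of each listed coefficient), which is exactly the argument of the cited works. The paper itself does not prove this lemma but imports it from \cite{GoldreichLevin89,KushilevitzMansour93}, so your reconstruction matches the intended proof.
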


\begin{proof}[Proof of Theorem \ref{thm:approximation}]
We use the algorithm from Lemma \ref{alg:KM} to find $f$'s largest Fourier coefficient in absolute value, $\hat{f}(\alpha)$. Whenever $|\hat{f}(\alpha)|\le 1-\epsilon$, Lemma \ref{lem:largest-coeff} implies $|\hat{f}(\beta)| > \frac{1-\hat{f}(\alpha)^2}{\Lone{\hat{f}}} \ge \frac{1-\hat{f}(\alpha)^2}{A}> \epsilon/A$, so the same algorithm can be used to find the second largest coefficient, $\hat{f}(\beta)$, in time $\poly(n,A,1/\epsilon,\log (1/\delta))$. We use Lemma \ref{lem:functional-pdt} to construct a functional \pdt, and replace every function-labeled leaf with the constant it's biased towards. The bound on the running time follows from the size of the \pdt{} and the running time of the algorithm from Lemma \ref{alg:KM}.

In fact, there is a slight inaccuracy in the argument above. Note that Lemma~\ref{alg:KM} only guarantees that we find a coefficient that is approximately the largest one. However, if it is the case that the second largest coefficient is very close to the largest one, then in Lemma~\ref{lem:functional-pdt} when we branch according to $\chi_{\alpha+\beta}$ both children have significantly smaller spectral norm.

If it is the case that we correctly identified the largest Fourier coefficient but failed to identify the second largest then we note that if our approximation is good enough, say better than $\epsilon/2A$, then even if we are mistaken and branch according to $\chi_{\alpha+\beta'}$ where $\left| |\hat{f}(\beta)|-|\hat{f}(\beta')|\right| < \epsilon/2A$, the the argument in Lemma~\ref{lem:functional-pdt} still works, perhaps with a slightly worse constant in the big O.
\end{proof}


\section{Functions over $\Z_p^n$ with small spectral norm}\label{sec:p}

In this section, we extend our results to functions $f \fptb$ where $p$ is any fixed prime.   Throughout this section we assume $p>2$. We start by giving some basic facts on the Fourier transform over $\Z_p^n$.

\subsection{Preliminaries}

Let $\omega=e^{\frac{2\pi i}{p}} \in \C$ be a primitive root of unity of order $p$. The set of $p^n$ group characters $$
\{\chi_\alpha:\Z_p^n \to \C \mid \alpha \in \Z_p^n\}
$$ where $\chi_\alpha (x) = \omega^{\ip{\alpha,x}}$, is a basis for the vector space of functions from $\Z_p^n$ into $\C$, and is orthonormal with respect to the inner product $\ip{f,g} = \E_x [f(x)\overline{g(x)}]$.\footnote{For a complex number $z$, we denote by $\overline{z}$ its complex conjugate.} We now have that $\hat{f}(\alpha) = \E_x[f(x)\overline{\chi_\alpha(x)}]$ and $f=\sum_{\alpha\in\Z_p^n}\hat{f}(\alpha)\chi_\alpha$. Plancherel's theorem holds here as well and the sparsity and $L_1$ norm are defined in the same way as they were defined for functions $f \ftb$.
Lemma \ref{lem:largest-coeff} also extends to functions $f \fptb$, with virtually the same proof. When $f$ is real-valued (and in particular, a Boolean function), then $\hat{f}(0)=\E[f]$ is real, and it can also be directly verified that $\hat{f}(\alpha) = \overline{\hat{f}(-\alpha)}$.

We have the analog to Equation~\eqref{eq:cosets}:
\begin{equation}\label{eq:p-cosets}
	f(x) = \sum_{\beta \in \Z_p^n/\langle \alpha \rangle} \left(\sum_{k=0}^{p-1}\hat{f}(\beta+k\cdot \alpha)(\chi_\alpha(x))^k\right)\chi_\beta(x)\;.
\end{equation}
Hence, when $f$ is restricted to an affine subspace on which $\chi_\alpha = \omega^\lambda$ (where $0\le \lambda \le p-1$), then for every\footnote{Recall that $\langle \alpha \rangle$ is the additive group generated by $\alpha$ and $\Z_p^n / \langle \alpha \rangle$ is the set of cosets of $\langle \alpha \rangle$.} $\beta \in \Z_p^n / \langle \alpha \rangle$ we have
$$
\hat{g}(\beta) = \sum_{k=0}^{p-1} \omega^{\lambda k} \hat{f}(\beta + k\alpha).
$$
For every $\beta \in \Z_p^n$, we denote by $[\beta]_\alpha = \beta + \langle \alpha \rangle$ the coset of $\langle \alpha \rangle$ in which $\beta$ resides.

Lemma \ref{lem:convolution} now becomes:
\begin{equation}
\label{eq:convolution-p}
\sum_{\alpha \in \Z_p^n} \hat{f}(\alpha) \hat{f}(\beta - \alpha) = 0
\end{equation}
for all $0 \neq \beta \in \Z_p^n$.

As a generalization of the \pdt{} model, we define a $p$-ary linear decision tree, denoted $\oplus_p$-DT, to be a computation tree where every internal node $v$ is labeled by a linear function $\gamma \in \Z_p^n$ and has $p$ children. The edges between $v$ and its children are labeled $0,1,\ldots,p-1$, and on an input $x$, it computes $\ip{\gamma,x} \mod p$ and branches accordingly. We carry along from the binary case the notation $\ppD(f)$ and $\ppsize(f)$, and define them to be the depth (respectively, size) of a minimal-depth (resp. size) \ppdt{} computing $f$.

\subsection{Basic tools}

In this section we prove the basic tools required for generalizing the theorems for functions defined on $\Z_2^n$
to functions $f \fptb$. As a generalization of Lemma \ref{lem:restriction-reduces-norm}, we show a slightly more complex and detailed argument:

%

\begin{lemma}[Main Lemma for functions over $\Z_p^n$]
\label{lem:restriction-in-Fp}
Let $f \fptb$ be a non-constant Boolean function such that $\Lone{\hat{f}} = A$.
Let $\hat{f}(\alpha)$ be its largest coefficient in absolute value, and $\hat{f}(\beta)$ be the second largest.
Then there exist 
a universal constant $c_0$ and a constant $c_1 = c_1 (p)=O(1/p^2)$ such that
\begin{enumerate}
\item \label{item:all_reduce_f_beta} For all $\lambda \in \Z_p$, $\Lone{\widehat{f|_{\chi_{\beta-\alpha}=\omega^\lambda}}} \le A - c_0|\hat{f}(\beta)|$.
\item \label{item:one_reduce_f_alpha} There exists at least $m:=\lfloor{p/3\rfloor}$ distinct elements $\lambda_1, \ldots, \lambda_m \in \Z_p$ such that $\Lone{\widehat{f|_{\chi_{\beta-\alpha}=\omega^{\lambda_k}}}} \le A - c_0|\hat{f}(\alpha)| \le A - c_0 / A$ for $k=1,\ldots,m$.
\item \label{item:p-1_reduce_f_alpha} There exists at least $p-1$ distinct elements $\lambda_1,\ldots,\lambda_{p-1} \in \Z_p$ such that
$\Lone{\widehat{f|_{\chi_{\beta-\alpha}=\omega^{\lambda_k}}}} \le A - c_1 \cdot |\hat{f}(\alpha)| \le A-c_1/A$ for $k=1,\ldots,p-1$.
\end{enumerate}
\end{lemma}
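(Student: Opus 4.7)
I plan to extend the $p=2$ argument of Lemma~\ref{lem:restriction-reduces-norm} by working in the coset structure induced by $\delta := \beta - \alpha$. Partition $\Z_p^n$ into cosets of $\langle \delta \rangle$, and let $C^\ast = [\alpha]_\delta$ be the critical coset containing both $\alpha$ and $\beta$. Write $u_j := \hat{f}(\alpha + j\delta)$, so that $u_0 = \hat{f}(\alpha)$ and $u_1 = \hat{f}(\beta)$, and analogously $u_j^{(C)}$ for each other coset $C$. By Equation~\eqref{eq:p-cosets}, the restriction $f|_{\chi_\delta = \omega^\lambda}$ collapses the coefficients of each coset $C$ into the single number $S_C(\lambda) = \sum_{j=0}^{p-1} \omega^{\lambda j} u_j^{(C)}$, and the total spectral-norm reduction is
$$\sigma(\lambda) = \sum_C \Bigl( \sum_j |u_j^{(C)}| - |S_C(\lambda)| \Bigr).$$
Each of the three parts of the lemma is a lower bound on $\sigma(\lambda)$ on an appropriate subset of $\lambda$'s.

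The workhorse is the identity $|z_1|+|z_2|-|z_1+z_2| = 2|z_1||z_2|(1-\cos\phi)/(|z_1|+|z_2|+|z_1+z_2|)$ for complex $z_1,z_2$ separated by angle $\phi$; taking $z_1 = u_0$, $z_2 = \omega^\lambda u_1$ and absorbing the remaining coefficients of $C^\ast$ through the triangle inequality yields
$$\sigma_{C^\ast}(\lambda) \;\ge\; \frac{|u_0|\,|u_1|\,(1-\cos\theta_\lambda)}{|u_0|+|u_1|} \;\ge\; \tfrac{1}{2}\,|\hat{f}(\beta)|\,(1-\cos\theta_\lambda),$$
where $\theta_\lambda := \arg(\omega^\lambda u_1 / u_0)$. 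Because $\{\theta_\lambda\}_{\lambda\in\Z_p}$ is a uniform rotation of a single base angle by increments of $2\pi/p$, at most one index $\lambda^\ast$ satisfies $|\theta_{\lambda^\ast}| < \pi/p$; for the $p-1$ other values we get $1 - \cos\theta_\lambda = \Omega(1/p^2)$, and at the coarser threshold $|\theta_\lambda| \ge \pi/3$ at least $\lfloor p/3 \rfloor$ values give $1 - \cos\theta_\lambda = \Omega(1)$. This directly yields Parts~\ref{item:p-1_reduce_f_alpha} and~\ref{item:one_reduce_f_alpha} up to the point where the bound is expressed in terms of $|\hat{f}(\beta)|$ instead of $|\hat{f}(\alpha)|$. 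I plan to convert between the two via the $\Z_p^n$ analog of Lemma~\ref{lem:largest-coeff}, which gives $|\hat{f}(\beta)| \ge (1-|\hat{f}(\alpha)|^2)/A$, splitting cases on whether $|\hat{f}(\alpha)|$ exceeds a fixed constant; in the degenerate regime where $|\hat{f}(\alpha)|$ is close to $1$ the convolution identity (used below) is also needed to guarantee saving coming from other cosets.

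The main obstacle is Part~\ref{item:all_reduce_f_beta}: for the single excluded index $\lambda^\ast$, the $C^\ast$ contribution $\sigma_{C^\ast}(\lambda^\ast)$ can be zero (for instance when $u_j = 0$ for all $j \ge 2$ and $\omega^{\lambda^\ast} u_1$ is perfectly phase-aligned with $u_0$), so the required saving must be extracted from other cosets. My plan is to apply the convolution identity of Equation~\eqref{eq:convolution-p} with $\eta = \alpha + \beta$. The pairs $(\alpha,\beta)$ and $(\beta,\alpha)$ contribute $2\hat{f}(\alpha)\hat{f}(\beta)$, forcing the remaining cross-coset pairs to sum to $-2\hat{f}(\alpha)\hat{f}(\beta)$; organizing these pairs by their $\langle\delta\rangle$-coset structure produces pairs of cosets $(C,C')$ with joint Fourier mass at least of order $|\hat{f}(\beta)|$, on which the geometric estimate above can be rerun to show that, at restriction $\lambda^\ast$, the combined saving across these cosets is at least $c_0\,|\hat{f}(\beta)|$. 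Rigorously tracking this cross-coset cancellation uniformly in $\lambda$ (and in particular at the worst index $\lambda^\ast$) is the most delicate step of the proof, and is the place where the Boolean identity $f^2 = 1$, rather than merely a spectral-norm bound, is indispensable.
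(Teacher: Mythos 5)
Your per-coset geometric setup and the rotation-counting observation (at most one $\lambda$ leaves $\hat{f}(\alpha)$ and $\omega^\lambda\hat{f}(\beta)$ within angle $\pi/p$, while many $\lambda$ give a constant angle) reproduce the easy half of the argument, but the proposal has a genuine gap at exactly the point where the lemma has its force: Items~\ref{item:one_reduce_f_alpha} and~\ref{item:p-1_reduce_f_alpha} demand a saving proportional to $|\hat{f}(\alpha)|$, and your coset-$C^\ast$ bound only yields a saving proportional to $|\hat{f}(\beta)|$. The proposed conversion via $|\hat{f}(\beta)|\ge(1-|\hat{f}(\alpha)|^2)/A$ cannot bridge this: take $|\hat{f}(\alpha)|=1/2$ (so $\alpha=0$, since for $\alpha\neq 0$ conjugate symmetry forces $|\hat{f}(\beta)|=|\hat{f}(-\alpha)|=|\hat{f}(\alpha)|$) with $A$ large; then $|\hat{f}(\beta)|$ may be $\Theta(1/A)\ll|\hat{f}(\alpha)|$, so your bound gives $A-\Theta(1/A)$ where the lemma claims $A-c|\hat{f}(\alpha)|=A-\Omega(1)$, and this stronger form is what the recursion in Theorem~\ref{thm:p-PDT} (case $|\hat{f}(\alpha)|\ge 1/2$) actually uses. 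The paper gets the $|\hat{f}(\alpha)|$-scale saving from the convolution identity itself: in Lemma~\ref{lem:p-two-largest-close} the identity at frequency $\alpha-\beta$ forces a large negative-real-part mass on collapsing pairs, and the normalization $\min\{|\hat{f}(\gamma)|,|\hat{f}(\myeta+\gamma)|\}\ge|\hat{f}(\gamma)||\hat{f}(\myeta+\gamma)|/|\hat{f}(\beta)|$ (valid because $\gamma\neq\alpha,-\beta$) converts the product bound $\Omega(|\hat{f}(\alpha)||\hat{f}(\beta)|)$ into a saving $\Omega(|\hat{f}(\alpha)|)$; for Item~\ref{item:p-1_reduce_f_alpha} with $\alpha=0$ it uses Lemma~\ref{lem:largest-coeff-inequality-Fp} ($2|\hat{f}(0)|\le W$) plus a counting argument over $\lambda$ showing all but one $\lambda$ capture weight $W/2$. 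None of this machinery appears in your plan, and without it the claimed bounds are not reachable.

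A second, related flaw is your choice of frequency for the hard aligned case of Item~\ref{item:all_reduce_f_beta}: you invoke Equation~\eqref{eq:convolution-p} at $\myeta=\alpha+\beta$, imitating the $p=2$ proof. For $p>2$ this is the wrong generalization: the pairs $(\gamma,\alpha+\beta-\gamma)$ lie (in general) in \emph{different} cosets of $\langle\beta-\alpha\rangle$, whereas the spectral-norm drop under $\chi_{\beta-\alpha}=\omega^{\lambda^\ast}$ is determined entirely by cancellations \emph{within} each coset; a constraint on cross-coset products carries no phase information about which $\omega^{\lambda}$-twist causes within-coset cancellation, so the ``combined saving across these cosets'' you hope for does not follow. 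The correct move (as in Lemma~\ref{lem:p-two-largest-close}) is to use the convolution at $\beta-\alpha$ (equivalently $\alpha-\beta$) together with $\hat{f}(-\gamma)=\overline{\hat{f}(\gamma)}$, so that each product $\hat{f}(\gamma)\hat{f}(\alpha-\beta-\gamma)=\hat{f}(\gamma)\overline{\hat{f}(\gamma+\beta-\alpha)}$ records precisely the relative phase of the two coefficients that collapse; over $\Z_2$ the two frequencies coincide, which is why the discrepancy is invisible in the binary proof you are extrapolating from. (Your within-coset treatment does have one nice feature: by bounding each coset separately you avoid the overcounting issue the paper handles with Lemma~\ref{lem:overcounting}, but that only covers the part of the argument you already have.)
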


As before we first prove a claim characterizing functions with very small spectral norm. Observe that when $p>2$, the characters themselves are not Boolean functions any more. The following is a variant of Lemma \ref{lem:spectral-norm-1} for $\Z_p^n$ with $p>2$.

\begin{lemma}
\label{lem:spectral-norm-p-1}
Let $f \fptb$ be a Boolean function such that $\Lone{\hat{f}} = 1$. Then $f = \pm 1$.
\end{lemma}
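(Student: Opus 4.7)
The plan is to combine Parseval's identity with the hypothesis $\|\hat{f}\|_1 = 1$ to force the Fourier spectrum of $f$ to be supported on a single character, and then to use the Boolean constraint together with $p > 2$ to rule out any nontrivial character.

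First I would observe that since $f$ is Boolean we have $|f(x)| = 1$ for all $x$, so Parseval's identity gives
\[
\sum_{\alpha \in \Z_p^n} |\hat{f}(\alpha)|^2 \;=\; \E_x[|f(x)|^2] \;=\; 1.
\]
On the other hand, the hypothesis reads $\sum_\alpha |\hat{f}(\alpha)| = 1$. Since each $|\hat{f}(\alpha)| \le \|\hat{f}\|_1 = 1$, we have $|\hat{f}(\alpha)|^2 \le |\hat{f}(\alpha)|$ termwise, and equality of the two sums forces $|\hat{f}(\alpha)|^2 = |\hat{f}(\alpha)|$ for every $\alpha$. Hence each $|\hat{f}(\alpha)| \in \{0,1\}$, and because the $|\hat{f}(\alpha)|^2$ sum to $1$, exactly one Fourier coefficient, say $\hat{f}(\alpha^*)$, has absolute value $1$, while all others vanish. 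Therefore $f = c \cdot \chi_{\alpha^*}$ for some complex number $c$ of modulus $1$.

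The remaining step is to show $\alpha^* = 0$, which is where the case $p > 2$ differs from the binary case handled in Lemma~\ref{lem:spectral-norm-1}. If $\alpha^* \neq 0$, pick an index $i$ with $\alpha^*_i \neq 0$; then as $x_i$ ranges over $\Z_p$, the inner product $\ip{\alpha^*, x}$ ranges over all of $\Z_p$, so $\chi_{\alpha^*}(x) = \omega^{\ip{\alpha^*, x}}$ attains all $p$ distinct $p$-th roots of unity. Consequently $f = c \cdot \chi_{\alpha^*}$ takes $p \ge 3$ distinct values on $\Z_p^n$, contradicting $f(x) \in \{\pm 1\}$. Hence $\alpha^* = 0$, and $f = c$ is a constant of modulus $1$; since $f$ is real-valued and Boolean, $c = \pm 1$.

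The only subtle point is the last paragraph, which uses that characters of $\Z_p^n$ for odd $p$ are not Boolean-valued unless they are trivial; everything else is a direct extremality argument between $\ell_1$ and $\ell_2$ norms. No step presents a real obstacle.
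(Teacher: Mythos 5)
Your proof is correct and follows essentially the same route as the paper: Parseval's identity together with $\|\hat{f}\|_1=1$ forces the spectrum onto a single character of unit-modulus coefficient, and then the Boolean constraint (with $p>2$) rules out a nontrivial character and pins the constant to $\pm 1$. The only difference is that you spell out why $c\cdot\chi_{\alpha^*}$ cannot be $\{\pm 1\}$-valued for $\alpha^*\neq 0$ (it attains $p\ge 3$ distinct values), a point the paper asserts without elaboration.
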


\begin{proof}
Once more, using Parseval's identity and the assumption:
$$
 \sum_{\gamma} |\hat{f} (\gamma)|^2 = 1 = \sum_{\gamma} |\hat{f} (\gamma)|.
$$
As before, $|\hat{f} (\gamma)| \in [0,1]$, which implies $|\hat{f} (\alpha)| = 1$ for exactly one $\alpha \in \Z_p^n$, i.e. $f = z\cdot \chi_\alpha$ where $z\in\C$ and $|z|=1$. Since $f$ is Boolean and $f(0) = z$, we get $z=\pm1$, and $\pm \chi_\alpha$ is Boolean (when $p>2$) only when $\alpha=0$.
\end{proof}

The following is a purely geometric lemma we use in our analysis. Since the Fourier coefficients now are complex numbers we need to bound the decrease in the spectral norm when two coefficients that are not aligned in the same direction collapse to the same coefficient.

\begin{lemma}
\label{lem:triangle-inequality}
Let $z_1,z_2 \in \C$ such that $|z_1| = R$, $|z_2|=r$ and $r \le R$. Suppose the angle between $z_1$ and $z_2$ is
 $\theta$. Then, for $C=C(\theta) = (1-\cos(\theta)) / 2$ it holds that
$$|z_1| + |z_2| - |z_1 + z_2| \ge Cr.$$
\end{lemma}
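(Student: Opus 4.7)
The plan is to reduce the inequality to a direct application of the law of cosines, followed by a simple ``conjugate trick'' to extract the factor $r$.

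First, I would express $|z_1 + z_2|$ explicitly. Viewing $z_1, z_2$ as vectors in the plane with angle $\theta$ between them, the law of cosines gives
$$
|z_1 + z_2|^2 = R^2 + r^2 + 2Rr\cos\theta.
$$
The key identity is then
$$
(R + r)^2 - |z_1 + z_2|^2 = 2Rr(1 - \cos\theta),
$$
which we factor on the left-hand side as $(R + r - |z_1 + z_2|)(R + r + |z_1 + z_2|)$.

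Next, since $|z_1 + z_2| \le R + r$ by the triangle inequality, the sum $R + r + |z_1 + z_2|$ is at most $2(R + r)$, which is in turn at most $4R$ because $r \le R$. Dividing through by this factor,
$$
R + r - |z_1 + z_2| \;=\; \frac{2Rr(1 - \cos\theta)}{R + r + |z_1 + z_2|} \;\ge\; \frac{2Rr(1-\cos\theta)}{4R} \;=\; \frac{(1-\cos\theta)}{2}\, r,
$$
which is exactly $Cr$ for $C = (1-\cos\theta)/2$, as required.

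The argument is completely elementary; there is no real obstacle. The only subtle point is choosing the right denominator bound: using $R + r + |z_1+z_2| \le 2(R+r)$ alone would yield a factor of $\tfrac{Rr}{R+r}$, which is not cleanly of the form $Cr$, so I use the additional estimate $R + r \le 2R$ (valid since $r \le R$) to get the clean linear-in-$r$ dependence that the downstream applications in Lemma~\ref{lem:restriction-in-Fp} will need.
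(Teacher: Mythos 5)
Your proof is correct. It starts from the same place as the paper's argument --- reducing via the law of cosines to comparing $R+r$ with $\sqrt{R^2+r^2+2Rr\cos\theta}$ --- but finishes differently. The paper moves the claimed lower bound $\tfrac{1-\cos\theta}{2}r$ to the left, squares, factors out $r$, and observes that the resulting expression is linear in $r$ and non-negative at both $r=0$ and $r=R$, hence on all of $[0,R]$. You instead rationalize: write $(R+r)^2-|z_1+z_2|^2=2Rr(1-\cos\theta)$, divide by the conjugate factor $R+r+|z_1+z_2|$, and bound that factor by $2(R+r)\le 4R$ using $r\le R$. Your route avoids any case analysis or endpoint check and makes it transparent exactly where the factor $\tfrac{1}{2}$ and the hypothesis $r\le R$ enter, so it is arguably the cleaner of the two; the paper's version is equally elementary but leaves the ``linear in $r$, check two endpoints'' verification to the reader. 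The only point worth adding to yours is the degenerate case $R=0$ (forced by $r\le R$ to mean $z_1=z_2=0$), where the division by $R+r+|z_1+z_2|$ is not available but the inequality reads $0\ge 0$ and holds trivially.
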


We give the simple proof in Appendix~\ref{app:triangle}.\\

The next lemma is similar to the inequalities of the type we used in the proof of Lemma \ref{lem:restriction-reduces-norm}.
\begin{lemma}
\label{lem:largest-coeff-inequality-Fp}
Let $f \fptb$ be a non-constant Boolean function, and suppose $\hat{f}(0)$ is the largest Fourier coefficient in absolute value and $\hat{f}(\beta)$ is the second largest. Then
$$
2|\hat{f}(0)| \le \sum_{\substack{\gamma \in \Z_p^n \\ \gamma \neq 0,\beta}}
\min \left\lbrace |\hat{f}(\gamma)|, |\hat{f}(\gamma-\beta)| \right\rbrace.
$$
\end{lemma}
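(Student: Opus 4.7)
The plan is to mimic the strategy from Lemma~\ref{lem:restriction-reduces-norm}: start from the identity $f^2 = 1$, isolate the two dominant terms in the resulting convolution identity, and then use the assumption that $|\hat{f}(\beta)|$ is the second largest coefficient to replace a product of two Fourier coefficients by the product of $|\hat{f}(\beta)|$ and a minimum.

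Concretely, I would first apply Equation~\eqref{eq:convolution-p} at the index $\beta$, obtaining
\[
\sum_{\alpha \in \Z_p^n} \hat{f}(\alpha)\,\hat{f}(\beta-\alpha) = 0.
\]
The terms $\alpha = 0$ and $\alpha = \beta$ contribute $\hat{f}(0)\hat{f}(\beta) + \hat{f}(\beta)\hat{f}(0) = 2\hat{f}(0)\hat{f}(\beta)$, so separating them and applying the triangle inequality yields
\[
2\,|\hat{f}(0)|\,|\hat{f}(\beta)| \;\le\; \sum_{\alpha \neq 0,\beta} |\hat{f}(\alpha)|\,|\hat{f}(\beta-\alpha)|.
\]
I would verify here that for $\alpha \in \Z_p^n \setminus \{0,\beta\}$ we also have $\beta - \alpha \in \Z_p^n \setminus \{0,\beta\}$, so neither factor on the right-hand side can be $|\hat{f}(0)|$ or $|\hat{f}(\beta)|$.

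The central step is to use the fact that $|\hat{f}(\beta)|$ is the second largest coefficient: for every $\alpha \neq 0,\beta$, both $|\hat{f}(\alpha)|$ and $|\hat{f}(\beta-\alpha)|$ are at most $|\hat{f}(\beta)|$, hence their product satisfies
\[
|\hat{f}(\alpha)|\,|\hat{f}(\beta-\alpha)| \;\le\; |\hat{f}(\beta)| \cdot \min\bigl\{|\hat{f}(\alpha)|,\, |\hat{f}(\beta-\alpha)|\bigr\}.
\]
Plugging this bound in and dividing both sides by $|\hat{f}(\beta)|$, which is strictly positive because $f$ is non-constant (so $f$ has at least two nonzero Fourier coefficients, one of which must be $\hat{f}(\beta)$ as the second largest), I obtain
\[
2\,|\hat{f}(0)| \;\le\; \sum_{\alpha \neq 0,\beta} \min\bigl\{|\hat{f}(\alpha)|,\,|\hat{f}(\beta-\alpha)|\bigr\}.
\]
Finally, since $f$ is real-valued we have $|\hat{f}(-\gamma)| = |\overline{\hat{f}(\gamma)}| = |\hat{f}(\gamma)|$, so $|\hat{f}(\beta-\alpha)| = |\hat{f}(\alpha-\beta)|$ and re-indexing the sum by $\gamma = \alpha$ gives exactly the claimed inequality.

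I do not expect real obstacles here; the only subtleties are (a) making sure that the indices $\alpha = 0$ and $\alpha = \beta$ are the \emph{only} dominant terms we need to pull out (which is automatic because $|\hat{f}(\beta)|$ is the second largest, so any other term is already bounded by $|\hat{f}(\beta)|^2$), and (b) justifying the division by $|\hat{f}(\beta)|$, which is where the non-constancy hypothesis is used.
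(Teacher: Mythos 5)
Your proposal is correct and follows essentially the same route as the paper's proof: apply the convolution identity \eqref{eq:convolution-p} at $\beta$, pull out the two terms contributing $2\hat{f}(0)\hat{f}(\beta)$, apply the triangle inequality, bound each product by $|\hat{f}(\beta)|\cdot\min\{|\hat{f}(\gamma)|,|\hat{f}(\beta-\gamma)|\}$, divide by $|\hat{f}(\beta)|>0$, and use real-valuedness to pass from $|\hat{f}(\beta-\gamma)|$ to $|\hat{f}(\gamma-\beta)|$. The only cosmetic difference is the justification that $\hat{f}(\beta)\neq 0$: the paper invokes Lemma~\ref{lem:spectral-norm-p-1} (non-constant implies $\Lone{\hat{f}}>1$), while your parenthetical argument should be backed the same way (or by noting $\hat{f}(\gamma)\neq 0$ for some $\gamma\neq 0$ forces $\hat{f}(-\gamma)\neq 0$ as well), since non-constancy alone does not immediately give two nonzero coefficients without using that $f$ is Boolean or real-valued.
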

\begin{proof}
By rearranging Equation \eqref{eq:convolution-p} with respect to $\beta$, we get:
$$
|2\hat{f}(0)\hat{f}(\beta)| = \left| \sum_{\substack{\gamma \in \Z_p^n \\ \gamma \neq 0,\beta}} \hat{f}(\gamma)\hat{f}(\beta-\gamma) \right|
$$
Now apply the triangle inequality to the right hand side, and then utilize the fact that $\hat{f}(\beta)$ is the second largest in absolute value and $\hat{f}(0)$ does not appear in the right hand side, to obtain
$$2|\hat{f}(0)||\hat{f}(\beta)| \le |\hat{f}(\beta)| \sum_{\substack{\gamma \in \Z_p^n \\ \gamma \neq 0,\beta}}
\min \left\lbrace |\hat{f}(\gamma)|, |\hat{f}(\beta-\gamma)| \right\rbrace.$$
Since $f$ is real-valued, $\hat{f}(\beta-\gamma) = \overline{\hat{f}(\gamma-\beta)}$ (and in particular, they have the same absolute value), and since $f$ is non-constant, by Lemma \ref{lem:spectral-norm-p-1} we have $\Lone{\hat{f}} > 1$, i.e. $\hat{f}(\beta)\neq 0$, which implies the desired inequality.
\end{proof}

When analyzing the loss in the $L_1$ norm which is caused by restriction on $\chi_\myeta$, it will be convenient to sum over the individual losses on pairs $\hat{f}(\mygamma), \hat{f}(\mygamma+\myeta)$ that collapse to the same coefficient. However, letting $\mygamma$ run over all of $\Z_p^n$, these pairs are not pairwise disjoint, so we might over-count the losses. The following lemma generously accounts for such over-counting issues, by showing that summing over all (not pairwise disjoint) pairs differs from the true counting by at most a constant factor.

\begin{lemma}
\label{lem:overcounting}
Let $f \fptb$, $0 \neq \myeta \in \Z_p^n$, and $\lambda \in \Z_p$.
If
$$
\sum_{\mygamma \in \Z_p^{n}}{|\hat{f}(\mygamma)| + |\hat{f}(\myeta+\mygamma)| - |\hat{f}(\mygamma) + \omega^{\lambda} \hat{f}(\myeta+\mygamma)|} = B,
$$
then
$$
\sum_{\mygamma \in \Z_p^n/\langle \myeta \rangle} \left(\sum_{k=0}^{p-1}{\left|\hat{f}(\mygamma+k \myeta)\right|} - \left|\sum_{k=0}^{p-1}\hat{f}(\mygamma+k \myeta)\omega^{\lambda k}\right|\right) \ge B/3.
$$
\end{lemma}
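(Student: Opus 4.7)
The plan is to decompose both sums as sums over cosets of $\langle \eta \rangle$ and then prove the inequality coset-by-coset.

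First I would group the terms of the (over-counted) sum $B$ by the coset $[\gamma]_\eta$ to which $\gamma$ belongs. Fix a coset and write $a_k = \hat{f}(\gamma_0 + k\eta)$ for $k=0,\ldots,p-1$ where $\gamma_0$ is a fixed representative. As $\gamma$ ranges over this coset, each value $|a_k|$ appears exactly twice in the summand $|\hat{f}(\gamma)| + |\hat{f}(\gamma+\eta)|$ (once as $\gamma = \gamma_0 + k\eta$ and once as $\gamma = \gamma_0 + (k-1)\eta$), while the term $|\hat{f}(\gamma) + \omega^\lambda \hat{f}(\gamma+\eta)|$ contributes $|a_k + \omega^\lambda a_{k+1}|$. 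So the contribution of this coset to $B$ is
$$ T_1 \;=\; 2\sum_{k=0}^{p-1} |a_k| \;-\; \sum_{k=0}^{p-1} |a_k + \omega^\lambda a_{k+1}|, $$
while the contribution of the same coset to the right-hand sum is
$$ T_2 \;=\; \sum_{k=0}^{p-1} |a_k| \;-\; \Bigl|\sum_{k=0}^{p-1} \omega^{\lambda k} a_k\Bigr|. $$
It suffices to prove $3 T_2 \ge T_1$ for every single coset, and then sum.

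Next I would perform the substitution $b_k := \omega^{\lambda k} a_k$. This is the key simplification: $|b_k| = |a_k|$, and $b_k + b_{k+1} = \omega^{\lambda k}(a_k + \omega^\lambda a_{k+1})$, so $|b_k + b_{k+1}| = |a_k + \omega^\lambda a_{k+1}|$. Also $\sum_k b_k = \sum_k \omega^{\lambda k} a_k$. Hence the desired inequality $3 T_2 \ge T_1$ becomes
$$ \sum_{k=0}^{p-1} |b_k| \;+\; \sum_{k=0}^{p-1} |b_k + b_{k+1}| \;\ge\; 3\,\Bigl|\sum_{k=0}^{p-1} b_k\Bigr|. $$

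Finally, both parts of this inequality come from the triangle inequality. Clearly $\bigl|\sum_k b_k\bigr| \le \sum_k |b_k|$. For the second piece, observe that (indices mod $p$)
$$ \sum_{k=0}^{p-1} (b_k + b_{k+1}) \;=\; 2 \sum_{k=0}^{p-1} b_k, $$
so the triangle inequality yields $2\bigl|\sum_k b_k\bigr| \le \sum_k |b_k + b_{k+1}|$. Adding the two bounds gives exactly the required inequality, and summing $3 T_2 \ge T_1$ over all cosets proves $S_2 \ge B/3$. There is no real obstacle here once the coset decomposition is carried out; the ``rotation'' $b_k := \omega^{\lambda k} a_k$ and the observation that telescoping the consecutive-pair sum recovers $2\sum_k b_k$ are the only ideas required, and the factor $3$ in the bound comes precisely from ``$1+2$'' in the two applications of the triangle inequality.
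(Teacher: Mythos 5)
Your proof is correct, and after the shared opening steps it diverges from the paper's argument in a way worth noting. Both proofs group the over-counted sum by cosets of $\langle\eta\rangle$ and perform the rotation $b_k=\omega^{\lambda k}a_k$ (the paper's $z_k$), reducing the lemma to the single-coset inequality $3\bigl(\sum_k|b_k|-\bigl|\sum_k b_k\bigr|\bigr)\ge \sum_k\bigl(|b_k|+|b_{k+1}|-|b_k+b_{k+1}|\bigr)$ with indices mod $p$. At that point the paper partitions the cyclic index set into three families $A$ with $A\cap(1+A)=\emptyset$ (the even indices, the odd indices, and $\{p-1\}$) and bounds the sub-sum over each family separately by $\sum_k|b_k|-\bigl|\sum_k b_k\bigr|$ via a chain of triangle inequalities; the factor $3$ is exactly the number of families. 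You instead use the identity $\sum_{k}(b_k+b_{k+1})=2\sum_k b_k$, so a single triangle inequality gives $\sum_k|b_k+b_{k+1}|\ge 2\bigl|\sum_k b_k\bigr|$, and adding $\sum_k|b_k|\ge\bigl|\sum_k b_k\bigr|$ finishes. Your route is shorter, avoids the partition entirely, and is in fact stronger: your first bound alone shows the genuine per-coset loss is at least \emph{half} of the over-counted quantity, so the lemma's conclusion could be improved to $B/2$; the second triangle inequality is only there to land on the stated constant $3$. One small point worth making explicit is the wrap-around term $k=p-1$: the identity $|b_{p-1}+b_0|=|a_{p-1}+\omega^{\lambda}a_0|$ holds because $\omega^{\lambda p}=1$, which is what makes the cyclic indexing in your substitution legitimate.
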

Note that the left hand side of the last inequality is exactly the loss in the $L_1$ norm when restricting $f$ on $\chi_{\myeta} = \omega^\lambda$. We defer the proof of Lemma \ref{lem:overcounting} to Appendix \ref{app:overcounting}.

\begin{lemma}
\label{lem:p-two-largest-close}
Let $f \fptb$ be a non-constant Boolean function such that $\Lone{\hat{f}} = A$. Let $\hat{f}(\alpha)$ be its largest coefficient in absolute value, and $\hat{f}(\beta)$ be the second largest. Suppose $\lambda \in \Z_p$ is such that the angle between $\hat{f}(\alpha)$ and $\omega^{\lambda} \hat{f}(\beta)$ in absolute value is at most $\pi / 3$. Then there is a universal constant $c>0$ such that $\Lone{\widehat{f|_{\chi_{\beta-\alpha}=\omega^\lambda}}} \le A - c|\hat{f}(\alpha)| \le A- c/A$.
\end{lemma}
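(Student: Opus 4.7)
The plan is to generalize the proof of Lemma~\ref{lem:restriction-reduces-norm} from $\Z_2^n$ to $\Z_p^n$, replacing the sign-based analysis by an angle analysis over $\C$. Set $\eta = \beta - \alpha$. The key identity is obtained by applying Equation~\ref{eq:convolution-p} at $\delta = -\eta$ and using $\hat{f}(-\mu) = \overline{\hat{f}(\mu)}$ (since $f$ is real-valued), which yields
\[
\sum_{\gamma \in \Z_p^n} \overline{\hat{f}(\gamma)}\, \hat{f}(\gamma + \eta) \;=\; 0.
\]
The natural pairing $(\gamma, \gamma + \eta)$ of this identity matches the pairing used when restricting on $\chi_\eta = \omega^\lambda$, which is what makes the argument work.

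First I would invoke Lemma~\ref{lem:overcounting} to reduce the conclusion to lower-bounding the pairwise sum
\[
B \;=\; \sum_\gamma \Bigl(|\hat{f}(\gamma)| + |\hat{f}(\gamma + \eta)| - \bigl|\hat{f}(\gamma) + \omega^\lambda \hat{f}(\gamma + \eta)\bigr|\Bigr)
\]
by $3c|\hat{f}(\alpha)|$, since the true $L_1$-loss is then at least $B/3$. Multiplying the convolution identity above by $\omega^\lambda$ and taking real parts gives
\[
\sum_\gamma |\hat{f}(\gamma)|\,|\hat{f}(\gamma + \eta)|\,\cos\theta_\gamma \;=\; 0,
\]
where $\theta_\gamma$ is the angle between $\hat{f}(\gamma)$ and $\omega^\lambda \hat{f}(\gamma + \eta)$. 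By the angle hypothesis, $\cos\theta_\alpha \geq \cos(\pi/3) = 1/2$, so the $\gamma = \alpha$ term contributes at least $\tfrac{1}{2}|\hat{f}(\alpha)||\hat{f}(\beta)|$. This positive contribution must be cancelled by the pairs with $\cos\theta_\gamma < 0$, forcing
\[
\sum_{\gamma:\,\cos\theta_\gamma < 0} |\hat{f}(\gamma)|\,|\hat{f}(\gamma + \eta)| \;\geq\; \tfrac{1}{2}|\hat{f}(\alpha)||\hat{f}(\beta)|.
\]

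For each such pair the angle $\theta_\gamma$ exceeds $\pi/2$, so Lemma~\ref{lem:triangle-inequality} (with $1-\cos\theta_\gamma \geq 1$) provides a pairwise loss of at least $\tfrac{1}{2}\min(|\hat{f}(\gamma)|,|\hat{f}(\gamma+\eta)|)$. To convert the above product bound into a $\sum \min$ bound I would use that, for any $\gamma \notin \{\alpha, \alpha - \eta\}$, at least one of $|\hat{f}(\gamma)|, |\hat{f}(\gamma+\eta)|$ is bounded above by the second-largest coefficient $|\hat{f}(\beta)|$, whence $|\hat{f}(\gamma)||\hat{f}(\gamma+\eta)| \leq |\hat{f}(\beta)|\cdot\min(|\hat{f}(\gamma)|,|\hat{f}(\gamma+\eta)|)$; dividing by $|\hat{f}(\beta)|$ then yields a lower bound of order $|\hat{f}(\alpha)|$ on the relevant $\sum\min$, and hence $B = \Omega(|\hat{f}(\alpha)|)$.

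The hard part will be the exceptional pair $\gamma = \alpha - \eta$, in which $\max(|\hat{f}(\alpha - \eta)|,|\hat{f}(\alpha)|) = |\hat{f}(\alpha)| > |\hat{f}(\beta)|$ and so the product-to-minimum conversion fails. I plan to resolve this by a two-regime case split: when $|\hat{f}(\alpha - \eta)|$ is small compared to $|\hat{f}(\beta)|$ the exceptional contribution is absorbed into the main inequality, while when it is comparable to $|\hat{f}(\beta)|$ the pair $(\alpha - \eta, \alpha)$ itself furnishes a direct loss of $\Omega(|\hat{f}(\alpha - \eta)|)$ via Lemma~\ref{lem:triangle-inequality}, which can be translated into $\Omega(|\hat{f}(\alpha)|)$ using the $\Z_p^n$-analog of Lemma~\ref{lem:largest-coeff} to relate $|\hat{f}(\beta)|$ to $|\hat{f}(\alpha)|$ and $A$. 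Combining both regimes and applying Lemma~\ref{lem:overcounting} gives the desired bound $\Lone{\widehat{f|_{\chi_\eta=\omega^\lambda}}} \leq A - c|\hat{f}(\alpha)|$, and the final inequality $A - c|\hat{f}(\alpha)| \leq A - c/A$ is immediate from $|\hat{f}(\alpha)| \geq 1/A$.
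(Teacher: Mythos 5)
Your overall route is the same as the paper's: the convolution identity at $-\eta$ (with $\eta=\beta-\alpha$), conjugate symmetry of $\hat f$ to get the pairing $(\gamma,\gamma+\eta)$, taking real parts to isolate a positive contribution of at least $\tfrac12|\hat f(\alpha)||\hat f(\beta)|$ from the $(\alpha,\beta)$ pair, lower-bounding the per-pair loss on obtuse pairs via Lemma~\ref{lem:triangle-inequality}, converting products to minima by dividing by $|\hat f(\beta)|$, and finishing with Lemma~\ref{lem:overcounting}. This is exactly the paper's argument.

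The gap is in your handling of the ``exceptional pair'' $\gamma_0=\alpha-\eta=2\alpha-\beta$, specifically in your second regime. There you extract a loss of order $|\hat f(\gamma_0)|=\Omega(|\hat f(\beta)|)$ and claim it ``can be translated into $\Omega(|\hat f(\alpha)|)$ using the $\Z_p^n$-analog of Lemma~\ref{lem:largest-coeff}.'' That lemma only gives $|\hat f(\beta)|\ge (1-|\hat f(\alpha)|^2)/A$, which does not bound $|\hat f(\beta)|$ below by a universal constant times $|\hat f(\alpha)|$: when $|\hat f(\alpha)|$ is close to $1$ and/or $A$ is large, $|\hat f(\beta)|$ may be smaller than $|\hat f(\alpha)|$ by an unbounded factor, while the statement demands $A-c|\hat f(\alpha)|$ with a universal $c$. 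So, as written, Regime~2 does not close.

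The case can be closed, but by conjugate symmetry rather than by Lemma~\ref{lem:largest-coeff}. If $\alpha\neq 0$, then $\hat f(-\alpha)=\overline{\hat f(\alpha)}$ and $-\alpha\neq\alpha$, so the second-largest coefficient already satisfies $|\hat f(\beta)|=|\hat f(\alpha)|$; hence $\max\{|\hat f(\gamma_0)|,|\hat f(\alpha)|\}=|\hat f(\alpha)|=|\hat f(\beta)|$ and the product-to-minimum conversion is valid for the exceptional pair as well, so no case split is needed. If $\alpha=0$, then $\gamma_0=-\beta$ and its summand has real part $\hat f(0)\,\Re\bigl(\omega^{\lambda}\hat f(\beta)\bigr)$, which equals the summand at $\gamma=\alpha$ and is positive by the angle hypothesis, so $\gamma_0$ never lies in the obtuse set. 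This is in effect how the paper's proof avoids the issue: it writes the identity as $\sum_\gamma \hat f(\gamma)\hat f(-\eta-\gamma)=0$, so the two indices $\alpha$ and $-\beta$ both carry the large positive term (the factor $c_0\in\{1,2\}$) and are excluded from $N_\eta$, and for every remaining $\gamma$ one has $|\hat f(\eta+\gamma)|=|\hat f(-\eta-\gamma)|\le|\hat f(\beta)|$, so dividing by $|\hat f(\beta)|$ is legitimate for every pair in the sum. (A minor additional imprecision: for the conversion $|\hat f(\gamma)||\hat f(\gamma+\eta)|\le|\hat f(\beta)|\min\{\cdot,\cdot\}$ you need the \emph{maximum} of the two to be at most $|\hat f(\beta)|$, not just ``at least one'' of them; for non-exceptional $\gamma$ both indices differ from $\alpha$, so this is fine.)
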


\begin{proof}
Denote $\myeta \eqdef \beta-\alpha$. Under the assumption of the lemma, noting that\footnote{$\Re(z)$ is the real part of a complex number $z$ and $\overline{z}$ is its  conjugate.} $\hat{f}(\alpha)\cdot \overline {\omega^{\lambda}\hat{f}(\beta)} = \hat{f}(\alpha)\cdot \omega^{-\lambda}\hat{f}(-\beta)$
\begin{equation}
 \label{eq:realpart-big}
\Re\left(\hat{f}(\alpha)\cdot \omega^{-\lambda}\hat{f}(-\beta)\right) \ge \cos(\pi/3) \cdot |\hat{f}(\alpha)\cdot \omega^{-\lambda}\hat{f}(-\beta)| = \cos(\pi/3) |\hat{f}(\alpha)||\hat{f}(\beta)|.
\end{equation}
By equation \eqref{eq:convolution-p}, with respect to $-\myeta \neq 0$, we have
$$
\sum_{\mygamma}{\hat{f}(\mygamma)\hat{f}(-\myeta-\mygamma)} = 0.
$$
Hence
$$
c_0\cdot\hat{f}(\alpha)\hat{f}(-\beta) = -\sum_{\mygamma \neq \alpha,-\beta }{\hat{f}(\mygamma)\hat{f}(-\myeta-\mygamma)},
$$
where $c_0=1$ if $\alpha=-\beta$ and $c_0=2$ otherwise. Multiplying by $\omega^{-\lambda}$ and taking the real part of both sides gives
\begin{equation}
\label{eq:realpart-convolution}
\Re\left(c_0\cdot\omega^{-\lambda}\hat{f}(\alpha)\hat{f}(-\beta)\right) = \sum_{\mygamma \neq \alpha,-\beta}{-\Re\left(\hat{f}(\mygamma) \omega^{-\lambda} \hat{f}(-\myeta-\mygamma)\right)}.
\end{equation}
Let $N_{\myeta} = \left\{ \mygamma \mid \Re \left(\hat{f}(\mygamma) \omega^{-\lambda} \hat{f}(-\myeta-\mygamma)\right)<0, \mygamma \neq \alpha,-\beta \right\}$. Then \eqref{eq:realpart-convolution}, as well as the fact that $c_0 \in \{1,2\}$ and the left hand side is positive (by \eqref{eq:realpart-big}), imply
\begin{equation}
\label{eq:realpart-convolution-upperbound}
\Re\left(\omega^{-\lambda}\hat{f}(\alpha)\hat{f}(-\beta)\right) \le \sum_{\mygamma \in N_{\myeta}}{-\Re\left(\hat{f}(\mygamma) \omega^{-\lambda} \hat{f}(-\myeta-\mygamma)\right)}.
\end{equation}
Note that for every pair $(\hat{f}(\mygamma)$, $\omega^{-\lambda} \hat{f}(-\myeta-\mygamma))$, where $\mygamma \in N_{\myeta}$, the angle between $\hat{f}(\mygamma)$ and $\overline{\omega^{-\lambda}\hat{f}(-\myeta-\mygamma)}$ ($=\omega^{\lambda} \hat{f}(\myeta + \mygamma)$) is in the range $[\pi/2,3\pi/2]$. Furthermore, when applying the restriction $\chi_{\myeta} = \omega^{\lambda}$ each such pair collapses to the same coefficient, and since the angle between the two coefficients (in absolute value) is at least $\pi/2$, by Lemma \ref{lem:triangle-inequality} it follows that for all $\mygamma \in N_{\myeta}$,
\begin{equation}
\label{eq:p-save-in-one-coefficient}
|\hat{f}(\mygamma)| + |\hat{f}(\myeta+\mygamma)| - |\hat{f}(\mygamma) + \omega^{\lambda} \hat{f}(\myeta+\mygamma)| \ge \min\{|\hat{f}(\mygamma)|,|\hat{f}(\myeta+\mygamma)|\} \cdot C(\pi/2).
\end{equation}
(where $C(\theta) = (1-\cos(\theta)) / 2$ is as defined in Lemma \ref{lem:triangle-inequality}). Using \eqref{eq:p-save-in-one-coefficient} to bound the loss on every coefficient, and summing over all $\mygamma \in \Z_p^n$ (while bearing in mind that every summand is non-negative), we have
\begin{align*}
\sum_{\mygamma}&{|\hat{f}(\mygamma)| + |\hat{f}(\myeta+\mygamma)| - |\hat{f}(\mygamma) + \omega^{\lambda} \hat{f}(\myeta+\mygamma)|} \\
    &\ge \sum_{\mygamma \in N_{\myeta}}{|\hat{f}(\mygamma)| + |\hat{f}(\myeta+\mygamma)| - |\hat{f}(\mygamma) + \omega^{\lambda} \hat{f}(\myeta+\mygamma)|} \\
    &\ge \sum_{\mygamma \in N_{\myeta}}{\min\{|\hat{f}(\mygamma)|,|\hat{f}(\myeta+\mygamma)|\} \cdot C(\pi/2)}
\end{align*}
Since neither $\alpha$ nor $-\beta$ appear in $N_\myeta$, and $\hat{f}(\beta)$ is the second largest coefficient, for all $\mygamma \in N_\myeta$ it holds that
$$
\min\{|\hat{f}(\mygamma)|,|\hat{f}(\myeta+\mygamma)|\} \ge \frac{|\hat{f}(\mygamma)| |\hat{f}(\myeta+\mygamma)|}{|\hat{f}(\beta)|},
$$
so
\begin{equation}
\label{eq:p-sum-all-vectors}
\sum_{\mygamma}{|\hat{f}(\mygamma)| + |\hat{f}(\myeta+\mygamma)| - |\hat{f}(\mygamma) + \omega^{\lambda} \hat{f}(\myeta+\mygamma)|}
\ge \frac{1}{|\hat{f}(\beta)|} \cdot C(\pi/2) \cdot \sum_{\mygamma \in N_{\myeta}} |\hat{f}(\mygamma)| |\hat{f}(\myeta+\mygamma)|.
\end{equation}
Taking the complex conjugate and multiplying by $\omega^{-\lambda}$, it is also clear that for all $\mygamma$
$$
|\hat{f}(\mygamma)| |\hat{f}(\myeta+\mygamma)| = |\hat{f}(\mygamma)| |\omega^{-\lambda} \hat{f}(-\myeta-\mygamma)| = |\hat{f}(\mygamma) \omega^{-\lambda} \hat{f}(-\myeta-\mygamma)|
\ge -\Re\left(\hat{f}(\mygamma) \omega^{-\lambda} \hat{f}(-\myeta-\mygamma)\right).
$$
Hence \eqref{eq:p-sum-all-vectors} and \eqref{eq:realpart-convolution-upperbound} imply
\begin{align*}
\sum_{\mygamma}{|\hat{f}(\mygamma)| + |\hat{f}(\myeta+\mygamma)| - |\hat{f}(\mygamma) + \omega^{\lambda} \hat{f}(\myeta+\mygamma)|} & \ge
\frac{C(\pi/2) }{|\hat{f}(\beta)|} \cdot
\sum_{\mygamma \in N_{\myeta}}{-\Re\left(\hat{f}(\mygamma) \omega^{-\lambda} \hat{f}(-\myeta-\mygamma)\right)} \\
&\ge \frac{C(\pi/2) }{|\hat{f}(\beta)|} \Re\left(\omega^{-\lambda}\hat{f}(\alpha)\hat{f}(-\beta)\right).
\end{align*}
And \eqref{eq:realpart-big} now gives
$$
\sum_{\mygamma}{|\hat{f}(\mygamma)| + |\hat{f}(\myeta+\mygamma)| - |\hat{f}(\mygamma) + \omega^{\lambda} \hat{f}(\myeta+\mygamma)|} \ge
\frac{C(\pi/2) \cdot \cos(\pi/3) \cdot |\hat{f}(\alpha)|\cdot|\hat{f}(\beta)|}{|\hat{f}(\beta)|}  \ge c|\hat{f}(\alpha)|
$$
where $c$ is an absolute constant.
By Lemma~\ref{lem:overcounting} the $L_1$ norm of the restricted function has decreased by at least $c|\hat{f}(\alpha)| / 3$.
\end{proof}


We are now ready to prove the main lemma for functions over $\Z_p^n$.
\begin{proof}[Proof of Lemma \ref{lem:restriction-in-Fp}.]
Let $\myeta = \beta - \alpha$ as in Lemma \ref{lem:p-two-largest-close}.
Let $\lambda\in \Z_p$, and consider the restriction $\chi_\myeta = \omega^\lambda$.
Let $\theta$ be the angle between $\hat{f}(\alpha)$ and $\omega^{\lambda}\hat{f}(\alpha + \myeta) = \omega^\lambda \hat{f}(\beta)$.
If $\theta$ is larger in absolute value than $\pi/3$, then under the restriction the coefficients $\hat{f}(\alpha)$ and $\hat{f}(\beta)$ collapse into the same coefficient, resulting by Lemma \ref{lem:triangle-inequality} in a $C(\pi/3) \cdot |\hat{f}(\beta)|$ loss in the $L_1$ norm (where $C(\cdot)$ is as stated in Lemma~\ref{lem:triangle-inequality}).
If $\theta \le \pi/3$ then Lemma \ref{lem:p-two-largest-close} implies a loss of $c_0 |\hat{f}(\alpha)|$ (which is also at least $c_0 |\hat{f}(\beta)|$) in the $L_1$ norm where $c_0$ is an absolute constant. This completes Item~\ref{item:all_reduce_f_beta} in the proof.

Furthermore, since multiplication by $\omega$ rotates $\hat{f}(\beta)$ by $2\pi/p$, there exists at least $\lfloor{p/3\rfloor}$ values for $\myeta \in \Z_p$ such that $|\theta|$ would be at most $\pi/3$, which completes Item~\ref{item:one_reduce_f_alpha} in the proof. \\

Next, we prove Item~\ref{item:p-1_reduce_f_alpha}.
Let $C = C(\pi/p) = (1-\cos(\pi/p)) / 2 = O(1/p^2)$ as in Lemma~\ref{lem:triangle-inequality}. We distinguish between two cases:
The first case we consider is $\alpha \neq 0$. In this case, by the fact the $f$ is real-valued $|\hat{f}(-\alpha)| = |\hat{f}(\alpha)|$. 
So $\beta=-\alpha$ and by Item~\ref{item:all_reduce_f_beta}, restricting on $\chi_\myeta =\omega^{\lambda}$, for any $\lambda \in \Z_p$, yields
\begin{equation}
\label{eq:nonzerolargest-reduce}
\Lone{\widehat{f|_{\chi_\myeta=\omega^\lambda}}} \le \Lone{\hat{f}} - c_0\cdot |\hat{f}(\alpha)|,
\end{equation}
which implies Item~\ref{item:p-1_reduce_f_alpha} for this case.

The second case is that the largest Fourier coefficient in absolute value is achieved on $\alpha = 0$. In this case $\beta \neq 0$, and $\eta=\beta$.
By the assumption $\Lone{\hat{f}}>1$, we have $|\hat{f}(\beta)| > 0$.
We define the {\em weight} of a pair $\{\gamma, \gamma - \beta\} \subseteq \Z_p^n$ to be $w(\gamma) = \min \left\lbrace |\hat{f}(\gamma)|, |\hat{f}(\gamma-\beta)| \right\rbrace$, and denote
$$
W = \sum_{\gamma \neq 0,\beta} w(\gamma).
$$
Thus By Lemma \ref{lem:largest-coeff-inequality-Fp}, we have
\begin{equation}
\label{ineq:largest-coeff-sum-weights}
2|\hat{f}(0)|\le W.
\end{equation}
Note that when restricting $f$ on $\chi_\beta=\omega^\lambda$, $\hat{f}(\gamma)$ and $\hat{f}(\gamma - \beta)$ collapse to the same coefficient.
The new coefficient becomes $|\hat{f}(\gamma) + \omega^{\lambda}\hat{f}(\gamma + \beta)+ \cdots + \omega^{\lambda(p-1)}\hat{f}(\gamma + (p-1)\beta)|$. We analyze only the loss in the $L_1$ norm obtained from the collapse of $\hat{f}(\gamma)$ and $\hat{f}(\gamma+(p-1)\beta)=\hat{f}(\gamma-\beta)$ to the same coefficient. Let $\theta$ be the angle between $\hat{f}(\gamma)$ and $\hat{f}(\gamma-\beta)$. Since multiplication by $\omega$ is equivalent to rotation by $2\pi/p$, as $\lambda$ traverses over $0,1,...,p-1$, the angle between $\hat{f}(\gamma)$ and $\omega^{\lambda(p-1)}\hat{f}(\gamma-\beta)$ attains all possible values $\theta + 2\kappa\pi/p$  for $\kappa=0,1,...,p-1$. Hence, there exists at most one choice of $\lambda$ such that the angle between $\hat{f}(\gamma)$ and $\omega^{\lambda(p-1)}\hat{f}(\gamma-\beta)$ is less than $\pi/p$.
  We call $\lambda\in\Z_p$ {\em good} with respect to $\gamma$ if the angle between $\hat{f}(\gamma)$ and $\omega^{\lambda(p-1)} \hat{f}(\gamma - \beta)$ is at least $\pi/p$. If we fix $\beta$, then for every pair there exist at least $p-1$ good elements in $\Z_p$.
Intuitively, each element $\lambda$ which is good guarantees a loss of at least $C\cdot\min\{|\hat{f}(\gamma)|,|\hat{f}(\gamma-\beta)|\}=Cw(\gamma)$ in the spectral norm (the actual analysis, which will now follow, is a bit more delicate).

Consider now the matrix $M$ whose rows are indexed by elements $\gamma \in \Z_p^n$ for all $\gamma \neq 0,\beta$, and whose columns are indexed by all elements $\lambda \in \Z_p$. We define:
$$
M_{\gamma,\lambda} = \begin{cases}
w(\gamma) & \text{if \ensuremath{\lambda} is good with respect to \ensuremath{\gamma}} \\
0 & \text{otherwise}
\end{cases}\;.
$$
Since for every $\gamma$ there are at least $p-1$ good elements, we have
\begin{equation}
\label{ineq:total-weight}
\sum_{\gamma,\lambda} M_{\gamma,\lambda}\ge (p-1) \sum_{\gamma \neq 0,\beta} w(\gamma)= (p-1)W.
\end{equation}
While for every fixed column $\lambda_0$,
\begin{equation}
\label{ineq:weight-in-column}
\sum_{\gamma} M_{\gamma,\lambda_0} \le W.
\end{equation}
As there are $p$ columns, \eqref{ineq:total-weight} and \eqref{ineq:weight-in-column} together imply that there is at most one column in which the total weight is less than $W/2$, i.e. for all $\lambda\in\Z_p$ but at most one, it holds that
\begin{equation}
\label{ineq:good-element}
\sum_{\gamma} M_{\gamma,\lambda} \ge W/2\;.
\end{equation}
Every element $\lambda\in \Z_p$ which satisfies \eqref{ineq:good-element} will be called {\em good}. We thus proved the existence of at least $p-1$ good elements $\lambda$.

We now fix a good element $\lambda$ and consider the restriction $\chi_\beta=\omega^\lambda$.
By Lemma~\ref{lem:overcounting}, the loss of the spectral norm under this restriction is at least
$$
1/3 \cdot \sum_{\gamma}{|\hat{f}(\gamma)| + |\hat{f}(\gamma - \beta)| - |\hat{f}(\gamma) + \omega^{\lambda (p-1)}\hat{f}(\gamma-\beta)|},
$$
which is, by Lemma~\ref{lem:triangle-inequality} and the definition of $M_{\lambda,\gamma}$, at least
$$
1/3 \cdot \sum_{\gamma: \text{$\lambda$ is good w.r.t. $\gamma$}} {C\cdot\min \left\lbrace |\hat{f}(\gamma)|, |\hat{f}(\gamma-\beta)| \right\rbrace} = 1/3 \cdot \sum_{\gamma}{C \cdot M_{\gamma, \lambda}} \ge C \cdot W /6 \ge |\hat{f}(0)| \cdot C/3,
$$
where we used \eqref{ineq:good-element} and \eqref{ineq:largest-coeff-sum-weights} for the penultimate and last inequalities, respectively. Letting $c_1 = C/3$ completes the proof of the lemma.
\end{proof}

\subsection{Analogs of Theorems \ref{thm:local}, \ref{thm:PDT}, \ref{thm:sparse} and \ref{thm:approximation}}

Theorems \ref{thm:p-local}, \ref{thm:p-PDT}, \ref{thm:p-sparse} and \ref{thm:p-approximation} now follow as consequences of Lemma \ref{lem:restriction-in-Fp}. Their proofs use the same arguments we used to deduce their $\Z_2^n$ counterparts from Lemma \ref{lem:restriction-reduces-norm}. We use the notation $O_p( \cdot )$ when the underlying constant depends on $p$, whereas when we use $O(\cdot)$, the underlying constant is some absolute constant.

\begin{theorem}\label{thm:p-local}
Let $f \fptb$ be a Boolean function with $\Lone{\hat{f}} = A$. Then there exists an affine subspace $V\subseteq\Z_p^n$ of co-dimension at most $O(A^2)$ such that $f$ is constant on $V$.
\end{theorem}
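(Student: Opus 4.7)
The plan is to follow the same iterative restriction strategy used in the proof of Theorem~\ref{thm:local}, but now driving the recursion with Item~\ref{item:one_reduce_f_alpha} of Lemma~\ref{lem:restriction-in-Fp} (which is the $\Z_p^n$ analog of Corollary~\ref{cor:restriction-reduces-norm}).

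Concretely, I would start from $f \fptb$ with $\Lone{\hat f} = A$. As long as $f$ is non-constant, Lemma~\ref{lem:spectral-norm-p-1} gives $\Lone{\hat f} > 1$, so Lemma~\ref{lem:largest-coeff} (whose proof carries over to $\Z_p^n$ verbatim, as already noted in the preliminaries of Section~\ref{sec:p}) yields a largest coefficient $\hat f(\alpha)$ with $|\hat f(\alpha)| \ge 1/A$. Item~\ref{item:one_reduce_f_alpha} of Lemma~\ref{lem:restriction-in-Fp} then guarantees a value $\lambda \in \Z_p$ such that restricting $f$ to the codimension-$1$ affine subspace $\{x:\chi_{\beta-\alpha}(x)=\omega^\lambda\}$ produces a function $f_1$ with
\[
\Lone{\widehat{f_1}} \;\le\; \Lone{\hat f} - \frac{c_0}{\Lone{\hat f}} .
\]
We iterate: at stage $i$ we have a restriction $f_i$ of $f$ to an affine subspace of codimension $i$, with spectral norm $A_i$; if $A_i > 1$ we produce $f_{i+1}$ with $A_{i+1} \le A_i - c_0/A_i$.

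The potential $\Phi_i = A_i^2$ satisfies
\[
\Phi_{i+1} \;\le\; \Phi_i - 2c_0 + \frac{c_0^2}{A_i^2} \;\le\; \Phi_i - c_0
\]
(using $A_i \ge 1$ and assuming, as we may, that $c_0 \le 1$). Hence after at most $A^2/c_0 = O(A^2)$ steps we reach a restriction $g$ with $\Lone{\hat g} \le 1$; by Lemma~\ref{lem:spectral-norm-p-1}, this forces $g = \pm 1$, i.e.\ $f$ is constant on the corresponding affine subspace $V$, whose codimension is $O(A^2)$.

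Essentially every nontrivial ingredient is already handled in Section~\ref{sec:p}: Lemma~\ref{lem:spectral-norm-p-1} takes care of the base case, and Lemma~\ref{lem:restriction-in-Fp} (whose proof is the main technical effort of the section) provides the single-step norm reduction. The only thing to verify is the elementary potential argument above, so I do not anticipate a serious obstacle; the proof is a direct translation of the binary case, with Corollary~\ref{cor:restriction-reduces-norm} replaced by Item~\ref{item:one_reduce_f_alpha} of Lemma~\ref{lem:restriction-in-Fp}.
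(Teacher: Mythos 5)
Your proposal is correct and follows essentially the same route as the paper: iterate Item~\ref{item:one_reduce_f_alpha} of Lemma~\ref{lem:restriction-in-Fp} (which for $p>2$ provides at least one norm-reducing restriction since $\lfloor p/3\rfloor\ge 1$), count roughly $A^2/c_0$ steps, and finish with Lemma~\ref{lem:spectral-norm-p-1}. Your explicit potential argument with $\Phi_i=A_i^2$ simply makes precise the step count that the paper states without elaboration.
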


\begin{proof}
Apply Lemma \ref{lem:restriction-in-Fp} iteratively on $f$. By assumption $p > 2$, so $\left\lfloor p/3 \right\rfloor \ge 1$, and then using Item \ref{item:one_reduce_f_alpha} in the proof, after at most $A^2/c_0$  steps, we are left with a function $g$ which is a restriction of $f$ on an affine subspace defined by the restrictions so far, such that $\Lone{\hat{g}} = 1$. By Lemma \ref{lem:spectral-norm-p-1} $g = \pm 1$.
\end{proof}

\begin{theorem}\label{thm:p-PDT}
Let $f \fptb$ be a Boolean function with $\Lone{\hat{f}} = A$. Then $\ppsize(f) \le p^{O_p(A^2)} n^{O_p(A)}$.
\end{theorem}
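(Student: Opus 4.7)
Adapt the proof of Theorem~\ref{thm:PDT} to the $p$-ary setting, using Lemma~\ref{lem:restriction-in-Fp} in place of Lemma~\ref{lem:restriction-reduces-norm}. Define
$$L(n,A) \eqdef \max_{\substack{f \fptb \\ \Lone{\hat{f}} \le A}} \ppsize(f),$$
and prove by induction on $n$ that $L(n, A) \le p^{C_1 A^2}\,n^{C_2 A}$ for suitable constants $C_1 = C_1(p)$ and $C_2 = C_2(p)$. The base case $n = 1$ is handled by the trivial bound $\ppsize(f) \le p$, and $A = 1$ follows from Lemma~\ref{lem:spectral-norm-p-1}.

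For the inductive step, let $\hat f(\alpha)$ and $\hat f(\beta)$ denote the largest and second-largest Fourier coefficients in absolute value of $f$, and query $\chi_{\beta - \alpha}$ at the root of the \ppdt{}. The $p$ children are $f_\lambda := f|_{\chi_{\beta - \alpha} = \omega^\lambda}$ for $\lambda \in \Z_p$. By Item~\ref{item:p-1_reduce_f_alpha} of Lemma~\ref{lem:restriction-in-Fp}, at least $p - 1$ of them satisfy $\Lone{\widehat{f_\lambda}} \le A - c_1 |\hat f(\alpha)|$, and we bound the remaining child by the trivial $A$. Paralleling the binary proof, split into two cases. \emph{Case A} ($|\hat f(\alpha)| \ge 1/2$): the $p - 1$ good branches each save at least $c_1/2$, so
$$L(n, A) \le (p - 1)\, L(n-1,\, A - c_1/2) + L(n - 1,\, A).$$
\emph{Case B} ($|\hat f(\alpha)| < 1/2$, forcing $A > 2$ and, by the $\Z_p^n$-analog of Lemma~\ref{lem:largest-coeff}, $|\hat f(\beta)| > 3/(4A)$): invoking Item~\ref{item:all_reduce_f_beta} for the last branch as well,
$$L(n, A) \le (p - 1)\, L(n-1,\, A - c_1/A) + L(n - 1,\, A - 3c_0/(4A)).$$

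Plug in the inductive hypothesis and divide both sides by $p^{C_1 A^2}(n-1)^{C_2 A}$. Using $A \ge 2$ to bound $c_1^2/A^2 \le c_1^2/4$ and $c_0^2/A^2 \le c_0^2/4$, Case~B reduces to the $n$-free inequality
$$ (p - 1)\, p^{C_1 c_1 (c_1/4 - 2)} + p^{C_1 c_0 (9c_0/64 - 3/2)} \le 1, $$
which holds for $C_1$ large enough since both exponents are negative (as $c_0, c_1$ are small). Case A, using $(n/(n-1))^{C_2 A} - 1 \ge C_2 A/(n-1)$ and the choice $C_2 \ge 2/c_1$ (so $(n-1)^{-C_2 c_1/2} \le 1/(n-1)$), reduces to $(p-1)\, p^{C_1 c_1 (c_1/4 - A)} \le C_2 A$, which likewise holds for $C_1$ sufficiently large.

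The main obstacle is the $p - 1$ combinatorial factor appearing in both recursions (versus only $1$ in the binary case), coupled with the small constant $c_1 = O(1/p^2)$ from Lemma~\ref{lem:restriction-in-Fp}. To absorb $(p - 1)$ into $p^{-C_1 c_1 A}$ one must take $C_1 = \Omega(1/c_1) = \Omega(p^2)$, and $C_2 \ge 2/c_1 = \Omega(p^2)$ so that the $(n-1)^{-C_2 c_1/2}$ factor decays fast enough. This is precisely why the final exponents are $O_p(A^2)$ and $O_p(A)$ with $p$-dependent constants rather than the absolute constants of Theorem~\ref{thm:PDT}.
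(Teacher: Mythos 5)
Your proposal is correct and follows essentially the same route as the paper's proof: the same recursion $L(n,A)$ with root query $\chi_{\beta-\alpha}$, the same two cases according to whether $|\hat f(\alpha)|\ge 1/2$, and the same use of Items~\ref{item:all_reduce_f_beta} and \ref{item:p-1_reduce_f_alpha} of Lemma~\ref{lem:restriction-in-Fp}, with only cosmetic differences (you keep $c_0,c_1$ separate and leave $C_1,C_2$ implicit, whereas the paper sets $c=\min\{c_0,c_1\}$ and exhibits the bound $p^{2A^2/c}n^{2A/c}$ explicitly).
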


\begin{proof}
By Lemma \ref{lem:restriction-in-Fp}, there is a constant $0<c\le 1$ (where $c:=\min\{c_0,c_1\}$ depends only on $p$), a linear function $\gamma \in \Z_p^n$ and $\lambda_1,...,\lambda_{p-1} \in \Z_p$ such that $ \Lone{\widehat{f|_{\chi_\gamma = \omega^{\lambda_j}}}} \le A - c|\hat{f}(\alpha)|$ for all $1\le j \le p-1$. Furthermore, for the $p$-th direction $\lambda_p$, the same lemma shows that
$\Lone{\widehat{f|_{\chi_\gamma = \omega^{\lambda_p}}}} \le A - c|\hat{f}(\beta)|$ where $\hat{f}(\beta)$ is the second largest coefficient.

As before, let
$$
L(n,A) \eqdef \max_{\substack{f\fptb \\ \Lone{\hat{f}} \le A}} \ppsize{f}.
$$
We show, by induction on $n$, that $L(n,A) \le p^{2A^2/c} n^{2A/c}$. For $n=1$ the result is trivial.
Let $n>1$ and further assume that $A>1$ (if $A=1$ then the claim follows from Lemma~\ref{lem:spectral-norm-p-1}).

As in the proof of Theorem \ref{thm:PDT}, we consider two cases:
\begin{enumerate}
\item $|\hat{f}(\alpha)| \ge 1/2$
\item $1/2 > |\hat{f}(\alpha)| \ge 1/A$ and $|\hat{f}(\beta)| > \frac{1-|\hat{f}(\alpha)|^2}{A} \ge \frac{3}{4A}$.
\end{enumerate}

We again Consider the  tree whose first query is the linear function $\chi_\gamma$.
In the first case, by the choice of $\gamma$ we obtain the following recursion:
$$
L(n,A) \le (p-1)L(n-1,A-c/2) + L(n-1,A).
$$
The induction hypothesis then implies (using the assumption that $A>1$)
\begin{align*}
L(n,A) &\leq (p-1) p^{2(A-c/2)^2/c} (n-1)^{2(A-c/2)/c} + p^{2A^2/c} (n-1)^{2A/c}  \\
		&\leq (p-1) p^{2(A^2/c) - 1} (n-1)^{2A/c - 1} + p^{2A^2/c} (n-1)^{2A/c} \\
		&\leq p^{2A^2/c} (n-1)^{2A/c - 1} \left( 1 + (n - 1) \right) \\
		&\leq p^{2A^2/c} n^{2A/c}.
\end{align*}
While in the second case, we have the recurrence
$$
L(n,A) \le (p-1)L(n-1,A-c/A) + L(n-1,A-3c/(4A)) \le p \cdot L(n-1,A-3c/(4A))
$$
Again, the induction hypothesis implies (using the assumption that $A > 1$)
\begin{align*}
L(n,A) &\leq p \cdot p^{2(A-3c/(4A))^2/c} (n-1)^{2(A-(3c/(4A)))/c}  \\
		&\leq n^{2A/c} \cdot p^{1 + 2A^2/c - 3 + 18c/(16A^2)}   \\
        &\leq n^{2A/c} \cdot p^{2A^2/c}\;.
\end{align*}
%
\end{proof}
As an immediate corollary, we get:
\begin{corollary}
Let $f \fptb$ be a Boolean function with $\Lone{\hat{f}} = A$. Then $f=\sum_{i=1}^{p^{O_p(A^2)}n^{O_p(A)}}\pm\ind{V_i}$, where each $V_i$ is an affine subspace of $\Z_p^n$.

\end{corollary}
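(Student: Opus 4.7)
The plan is to read off the representation directly from the $\oplus_p$-decision tree produced by Theorem~\ref{thm:p-PDT}. Apply that theorem to $f$ to obtain a $\oplus_p$-DT $T$ of size $s \le p^{O_p(A^2)} n^{O_p(A)}$ (i.e.\ with at most that many leaves) computing $f$. Enumerate the leaves $\ell_1,\dots,\ell_s$, and for each $\ell_i$ let $b_i \in \{+1,-1\}$ be its label and let $V_i \subseteq \Z_p^n$ be the set of inputs $x$ whose computation path in $T$ ends at $\ell_i$.

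Each $V_i$ is an affine subspace: the path from the root to $\ell_i$ consists of a sequence of equations of the form $\langle \gamma_j, x \rangle \equiv \lambda_j \pmod{p}$, and $V_i$ is exactly the set of $x$ satisfying all of them. Moreover, since the computation of $T$ is deterministic and every input reaches exactly one leaf, the $V_i$ partition $\Z_p^n$. Because $T$ computes $f$, the value $f(x)$ on any $x \in V_i$ equals $b_i$, so
\[
f(x) = \sum_{i=1}^{s} b_i \cdot \ind{V_i}(x) \qquad \text{for every } x \in \Z_p^n,
\]
which is the desired representation, with $s \le p^{O_p(A^2)} n^{O_p(A)}$.

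There is no real obstacle here; the whole content of the corollary is the translation between the two combinatorial objects (a $\oplus_p$-DT and a signed sum of indicator functions of affine subspaces). The only thing to mention is that each $V_i$ is genuinely an affine subspace of $\Z_p^n$ (rather than a more general set), which is immediate from the fact that the internal nodes of a $\oplus_p$-DT query $\Z_p$-linear forms.
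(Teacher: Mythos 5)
Your proof is correct and is exactly the argument the paper intends: the corollary is stated as an immediate consequence of Theorem~\ref{thm:p-PDT}, with each leaf of the \ppdt{} contributing a signed indicator of the affine subspace cut out by the linear constraints on its root-to-leaf path (the same translation used for the $\Z_2$ case after Theorem~\ref{thm:PDT} and in the proof of Lemma~\ref{lem:simple facts}). Nothing further is needed.
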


\begin{theorem}
\label{thm:p-sparse}
Let $f \fptb$ be such that $\|\hat{f}\|_1=A$ and $|\{\alpha\mid \hat{f}(\alpha)\neq 0\}|=s$. Then $f$ can be computed by a \ppdt{} of depth $O(A^2\log s)$.
\end{theorem}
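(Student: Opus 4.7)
The plan is to mirror the proof of the $\Z_2$ analog (Theorem~\ref{thm:sparse}), using Theorem~\ref{thm:p-local} in place of Theorem~\ref{thm:local}. First I would apply Theorem~\ref{thm:p-local} to obtain $k=O(A^2)$ linear functions $\alpha_1,\dots,\alpha_k \in \Z_p^n$ together with values $\lambda_1^*,\dots,\lambda_k^* \in \Z_p$ such that $f$ is constant on the affine subspace $W^*=\{x:\chi_{\alpha_i}(x)=\omega^{\lambda_i^*}\text{ for all }i\}$. Let $V=\mspan\{\alpha_1,\dots,\alpha_k\}$. I would build the top $k$ levels of a \ppdt{} that branches, level by level, on $\chi_{\alpha_1},\dots,\chi_{\alpha_k}$, each internal node having $p$ children.

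The main step is to show that on every such fixing the sparsity of the restricted function drops by at least roughly a factor of two. Iterating Equation~\eqref{eq:p-cosets}, the Fourier coefficients of the restricted function are naturally indexed by cosets $[\beta]_V \in \Z_p^n/V$, and the coefficient at $[\beta]_V$ has the form $\sum_{v\in V}\omega^{\psi(v)}\hat{f}(\beta+v)$, where $\psi:V\to\Z_p$ is the linear map determined by the chosen $\lambda_i$'s (with $\psi^*$ denoting the map for the special fixing $\lambda^*$). I claim that every coset $[\beta]_V$ with $\beta\notin V$ that contains at least one non-zero coefficient of $\hat{f}$ must in fact contain at least two: otherwise, if $\hat{f}(\beta+v_0)$ were the sole non-zero entry in $[\beta]_V$, then the restriction $f|_{W^*}$ would have coefficient $\omega^{\psi^*(v_0)}\hat{f}(\beta+v_0)\neq 0$ at the non-trivial coset $[\beta]_V$, contradicting the constancy of $f|_{W^*}$. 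Hence the number of non-empty cosets, and therefore the sparsity of any restricted function, is at most $s/2+O(1)$.

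The function at each leaf of this $k$-level subtree is Boolean with spectral norm at most $A$, since restrictions never increase the spectral norm. Recursively apply the same construction; a straightforward recurrence shows that after $O(\log s)$ iterations the sparsity at each leaf is at most $1$, so the leaf function is of the form $\hat{g}(\alpha)\chi_\alpha$ for some $\alpha$. For $p>2$ no non-trivial character $\chi_\alpha$ is $\pm 1$-valued, so $\alpha=0$ and the leaf function is a constant $\pm 1$. The total depth of the resulting \ppdt{} is $O(A^2)\cdot O(\log s)=O(A^2\log s)$, as claimed.

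I do not foresee a serious obstacle; the argument is a direct generalization of the $\Z_2$ proof. The only mild technical wrinkle is the coset $[0]_V$, whose coefficient reduces to the constant term of the restricted function irrespective of $\psi$, so no contradiction is obtained for $\beta\in V$; this contributes at most a bounded additive constant per round, which is absorbed into the $O(\log s)$ factor when solving the recurrence on the sparsity.
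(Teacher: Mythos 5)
Your proposal is correct and follows essentially the same route as the paper's proof: apply Theorem~\ref{thm:p-local} to obtain $O(A^2)$ linear forms whose fixing makes $f$ constant, argue via the coset structure (Equation~\eqref{eq:p-cosets}) that every nonzero coefficient has a partner in its coset modulo $\mspan\{\alpha_1,\ldots,\alpha_K\}$ so the sparsity essentially halves under \emph{any} fixing of these forms, and recurse for $O(\log s)$ rounds. Your explicit handling of the trivial coset (and of the leaf case via $p>2$) is a harmless refinement of details the paper leaves implicit.
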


\begin{proof}
By Theorem \ref{thm:p-local}, there exist $K=O(A^2)$ linear functions $\alpha_1,\ldots,\alpha_K$ which can be fixed to values $\omega^{\lambda_1},\ldots,\omega^{\lambda_K}$ where $\lambda_j \in \Z_p$ for $1\le j \le K$, such that $f$ restricted to the subspace $\{x\mid \chi_{\alpha_j}(x)=\omega^{\lambda_j} \;,\; \forall 1\le j \le K\}$ is constant. Once again, this implies that for any non-zero coefficient $\hat{f}(\beta)$ there exists at least one other non-zero coefficient $\hat{f} (\beta + \gamma)$ for $\gamma \in \text{span}\{\alpha_1,\ldots,\alpha_K\}$, since if no such coefficient exists then the restriction $f|_{\chi_{\alpha_1}(x)=\omega^{\lambda_1},\ldots,\chi_{\alpha_{K}}=\omega^{\lambda_k}}$ will have the non-constant term $\hat{f}(\beta)\cdot \chi_\beta$.
Therefore, for any other fixing of $\chi_{\alpha_1},\ldots,\chi_{\alpha_K}$, both $\hat{f}(\beta)\chi_\beta$ and  $\hat{f} (\beta + \gamma)\chi_{\beta+\gamma}$ collapse to the same (perhaps non-zero) linear function, which implies that $\spar(f|_{\chi_{\alpha_1}=\omega^{\lambda'_1},\ldots,\chi_{\alpha_{K}}=\omega^{\lambda'_K}}) \le \spar(f)/2$ for any choice of $\lambda'_1,\ldots,\lambda'_K$. Thus, we can continue by induction until all the functions in the leaves are constant.
\end{proof}

\begin{theorem}\label{thm:p-approximation}
Let $f \fptb$ be such that $\|\hat{f}\|_1=A$. Then for every $\delta, \epsilon>0$ there is a randomized algorithm that given a query oracle to $f$ outputs (with probability at least $1-\delta$) a \ppdt{} of depth $O(A^2 + \log(1/\epsilon))$ and size $\min\{p^{O(A^2 + \log (1/\epsilon))},p^{O_p(A^2)} \cdot O(A^2 + \log(1/\epsilon))^{O_p(A)}\}$ that computes a Boolean function $g_\epsilon$ such that $\dist(f,g_\epsilon)\leq \epsilon$. The algorithm runs in polynomial time in $n, \exp(A^2), 1/\epsilon$ and $\log(1/\delta)$.
\end{theorem}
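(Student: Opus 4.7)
The plan is to mirror the proof of Theorem~\ref{thm:approximation}, replacing Lemma~\ref{lem:restriction-reduces-norm} by Lemma~\ref{lem:restriction-in-Fp} and using the $\Z_p^n$ analogue of the Goldreich--Levin / Kushilevitz--Mansour algorithm (a standard abelian-group extension of Lemma~\ref{alg:KM}), which given a query oracle to $f \fptb$ and parameters $\theta,\eta,\delta$ outputs, with probability $1-\delta$, an additive $\eta$-approximation to every $\alpha \in \Z_p^n$ with $|\hat f(\alpha)| \ge \theta$, in time polynomial in $n, p, 1/\theta, 1/\eta$ and $\log(1/\delta)$.

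As in the binary case, I would first establish a functional $\ppdt{}$ analogue of Lemma~\ref{lem:functional-pdt}: for every $f$ with $\Lone{\hat f} \le A$ and every $\epsilon > 0$ there is a functional $\ppdt{}$ of depth $K = O(A^2 + \log(1/\epsilon))$ computing a $g$ with $\dist(f,g) \le \epsilon$, whose size is $\min\{p^K,\, p^{O_p(A^2)} K^{O_p(A)}\}$. The tree is built recursively from the root. At an internal node $v$ with restricted function $f_v$, if $|\widehat{f_v}(0)| \ge 1-\epsilon$ I stop and label $v$ with the constant $\sgn(\widehat{f_v}(0))$; otherwise I locate the two largest coefficients $\widehat{f_v}(\alpha), \widehat{f_v}(\beta)$ and branch on the linear form $\chi_{\beta-\alpha}$, producing $p$ children indexed by $\omega^\lambda$ for $\lambda \in \Z_p$. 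The recursion is truncated at depth $K$.

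For the depth analysis, note that since $f$ is real-valued we have $|\widehat{f_v}(-\alpha)| = |\widehat{f_v}(\alpha)|$ and, because $p$ is odd, $\alpha \neq -\alpha$ for $\alpha \neq 0$; hence Parseval forces $|\widehat{f_v}(\alpha)| \le 1/\sqrt{2}$ whenever the largest coefficient is taken at a non-zero frequency. Together with the stopping condition $|\widehat{f_v}(0)| \le 1-\epsilon$ in the remaining case, the $\Z_p^n$ analogue of Lemma~\ref{lem:largest-coeff} yields $|\widehat{f_v}(\alpha)| \ge 1/A$ at every non-terminating node. By Lemma~\ref{lem:restriction-in-Fp}(\ref{item:p-1_reduce_f_alpha}), at least $p-1$ of the $p$ outgoing branches then reduce the spectral norm by $c_1 |\widehat{f_v}(\alpha)| \ge c_1/A$; I call such a branch \emph{norm-reducing} for the input that follows it. Once $A^2/c_1$ norm-reducing steps occur along a root-to-leaf path, the spectral norm drops to $1$ and Lemma~\ref{lem:spectral-norm-p-1} forces $f_v \equiv \pm 1$. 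Hence an input that reaches a non-biased leaf at depth $K$ must satisfy $K$ linear constraints of which fewer than $A^2/c_1$ are norm-reducing; since each of the $K$ independent constraints is norm-reducing for a uniformly random $x$ with probability at least $(p-1)/p$, the fraction of bad $x$ is at most
\[
\sum_{i=0}^{A^2/c_1 - 1} \binom{K}{i}(p-1)^i\, p^{-K} \;\le\; \epsilon
\]
provided $K \ge C(p) \cdot (A^2 + \log(1/\epsilon))$ for a sufficiently large constant $C(p)$. The size bound's first term is trivial ($p^K$); for the second term I would re-use the recursion from the proof of Theorem~\ref{thm:p-PDT}, noting that one child of every internal node drops the norm by $\Omega_p(|\widehat{f_v}(\alpha)|)$ and the remaining $p-1$ drop it by $\Omega_p(|\widehat{f_v}(\beta)|)$, so the same inductive calculation yields $S(K,A) \le p^{O_p(A^2)} K^{O_p(A)}$.

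Theorem~\ref{thm:p-approximation} then follows exactly as Theorem~\ref{thm:approximation}: at every internal node I invoke the $\Z_p^n$ KM routine to locate the two heaviest Fourier coefficients (with additive accuracy $\epsilon/(4A)$, which keeps the running time polynomial in the stated parameters), union-bound the $\le p^K$ failure probabilities, and convert the functional tree into a regular $\ppdt{}$ by replacing each biased leaf with its sign (at the cost of an additional $\epsilon$ in error). The main obstacle is the robustness of the branching rule: the KM subroutine returns only nearly-maximal coefficients, so one could inadvertently branch on the ``wrong'' form $\chi_{\beta'-\alpha'}$. However, the proof of Lemma~\ref{lem:restriction-in-Fp} uses the identities of $\alpha,\beta$ only through an angle/geometry argument (Lemmas~\ref{lem:triangle-inequality} and~\ref{lem:p-two-largest-close}) together with an $L_2$ accounting, so an additive $O(\epsilon/A)$ perturbation of the approximated coefficients changes the guaranteed norm drop by only a constant factor and the entire depth and size analysis carries through, albeit with slightly worse constants.
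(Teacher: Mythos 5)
Your overall route is the paper's: build a functional \ppdt{} analogue of Lemma~\ref{lem:functional-pdt} by branching on $\chi_{\beta-\alpha}$, truncate at depth $K$, bound the mass reaching unbiased leaves, then combine with a $\Z_p^n$ version of Lemma~\ref{alg:KM} and replace biased leaves by the constants they are biased towards, with a robustness remark for the approximate coefficients. However, two steps as written do not give the bounds in the statement. First, for the depth you invoke Item~\ref{item:p-1_reduce_f_alpha} of Lemma~\ref{lem:restriction-in-Fp}, whose constant $c_1=c_1(p)=O(1/p^2)$ depends on $p$; a path must then accumulate $\Omega(p^2A^2)$ norm-reducing queries before the spectral norm drops to $1$, so your tail estimate indeed forces $K\ge C(p)\cdot(A^2+\log(1/\epsilon))$, i.e.\ depth $O_p(A^2)+O(\log(1/\epsilon))$ and first size term $p^{O_p(A^2)+O(\log(1/\epsilon))}$. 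The theorem (and the paper's proof of Lemma~\ref{lem:p-functional-pdt}) claims absolute constants there: the paper instead uses Item~\ref{item:one_reduce_f_alpha}, where at least $\lfloor p/3\rfloor$ of the $p$ branch values reduce the norm by the universal $c_0/A$, so each query is norm-reducing for a random input with probability at least $1/5$ and $K=O(A^2+\log(1/\epsilon))$ with an absolute constant suffices.

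Second, your size recursion swaps the roles of $\alpha$ and $\beta$: Lemma~\ref{lem:restriction-in-Fp} guarantees that $p-1$ of the children lose $\Omega_p(|\hat f(\alpha)|)$ and the remaining child loses $\Omega(|\hat f(\beta)|)$ (Items~\ref{item:p-1_reduce_f_alpha} and~\ref{item:all_reduce_f_beta}), not the reverse. Taken literally, in the case $|\hat f(\alpha)|\ge 1/2$ with $|\hat f(\beta)|$ tiny your version gives $p-1$ children with no useful drop, and the recursion $S(K,B)\le S(K-1,B-c/2)+(p-1)S(K-1,B)$ only yields the trivial $p^{O(K)}$ bound, not the $p^{O_p(A^2)}\cdot K^{O_p(A)}$ term of the min. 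With the orientation corrected, the recursion is exactly the one solved in Theorem~\ref{thm:p-PDT} and gives the claimed term. Both issues are fixable simply by quoting the lemma as stated; the remaining ingredients (KM/GL over $\Z_p^n$, accuracy $\Theta(\epsilon/(pA))$, the union bound over failure probabilities, and converting the functional tree to a \ppdt{}) match the paper's proof.
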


The proof of Theorem \ref{thm:p-approximation} follows the same outline as the proof of Theorem \ref{thm:approximation}. A {\em functional \ppdt} is defined as a \ppdt{} where we allow every leaf to be labeled by a Boolean function on $\Z_p^n$, and the bias of a function $f \fptb$ is defined as in the binary case. We again show there exists a low depth $\pdt$ which computes a function $g$ such that $\Pr_x [f(x) \neq g(x)] \le \epsilon$ (where $x$ is drawn from the uniform distribution over $\Z_p^n$.

\sloppy
\begin{lemma}
\label{lem:p-functional-pdt}
Let $f \ftb$ be a Boolean function with $\Lone{\hat{f}} \leq A$. Then there exists a functional \pdt{} of depth at most $O(A^2 + \log (1/\epsilon))$ and size
$\min\{p^{O(A^2 + \log (1/\epsilon))},p^{O_p(A^2)} \cdot O(A^2 + \log(1/\epsilon))^{O_p(A)}\}$, computing a function $g$ such that $\Pr_x [f(x) \neq g(x)] \le \epsilon$.
\end{lemma}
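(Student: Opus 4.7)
The plan is to mimic the proof of Lemma~\ref{lem:functional-pdt} for the binary case, replacing Lemma~\ref{lem:restriction-reduces-norm} by its $\Z_p^n$ counterpart, Lemma~\ref{lem:restriction-in-Fp}, and adjusting the combinatorial accounting for $p$-ary branching. Fix $K = \Theta_p(A^2 + \log(1/\epsilon))$ to be a depth bound. We build the functional \ppdt{} recursively at each node, where $h$ denotes the restricted function at the current node. If $|\hat{h}(0)| > 1 - \epsilon$ we terminate with a leaf labeled by $\sgn(\hat{h}(0))$. If the largest coefficient $\hat{h}(\alpha)$ in absolute value has $\alpha \neq 0$ and $|\hat{h}(\alpha)| > 1 - \epsilon$, we branch on $\chi_\alpha$; an analogue of the Case~1 argument from Lemma~\ref{lem:functional-pdt} (using that $|\hat{h}(0)|$ must be small when some non-trivial $|\hat{h}(\alpha)|$ is close to $1$) shows each of the $p$ resulting children is highly biased and can be immediately terminated. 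Otherwise, letting $\hat{h}(\beta)$ be the second largest coefficient in absolute value, we branch on $\chi_{\beta-\alpha}$. We also halt every branch at depth $K$ regardless, and finally replace each function-labeled leaf by the sign of its constant term to obtain a genuine \ppdt{}.

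For the approximation error, call an internal node \emph{norm-reducing for $x$} if the child visited by $x$ corresponds to a restriction whose spectral norm dropped by at least $c_1/A$, with $c_1$ the constant from Item~\ref{item:p-1_reduce_f_alpha} of Lemma~\ref{lem:restriction-in-Fp}. Since Lemma~\ref{lem:largest-coeff} (which holds over $\Z_p^n$ with the same proof) gives $|\hat{h}(\alpha)| \ge 1/A$, Item~\ref{item:p-1_reduce_f_alpha} of Lemma~\ref{lem:restriction-in-Fp} guarantees that at least $p-1$ of the $p$ children at every internal node are norm-reducing. A path encountering more than $T := \lceil A^2/c_1 \rceil$ norm-reducing nodes has spectral norm at most $1$, hence by Lemma~\ref{lem:spectral-norm-p-1} is constant (and thus fully biased). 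Consequently, any input $x$ reaching an unbiased depth-$K$ leaf must have avoided the unique non-norm-reducing branch at least $K-T$ times. Along any root-to-leaf path we may assume the queried linear forms are linearly independent (otherwise a dependent query is redundant and the level can be merged), so for uniform $x$ the values $\chi_\gamma(x)$ taken along the path are independent uniform elements of $\Z_p$, and the probability of taking the non-norm-reducing branch at any given node is $\le 1/p$. Hence the fraction of inputs reaching an unbiased leaf is at most
$$
\binom{K}{T} \cdot (1/p)^{K-T} \;\le\; \epsilon
$$
for $K = O(A^2 + \log(1/\epsilon))$, with the hidden constants depending on $p$. Replacing function-labeled leaves with the sign of their constant term contributes at most another $\epsilon$ error, absorbed by tuning the constants.

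For the size bound, we set up the same sort of inductive recurrence used in the proof of Theorem~\ref{thm:p-PDT}, but with remaining depth $K-d$ in place of the number of variables $n$. Splitting by cases according to whether $|\hat{h}(\alpha)| \ge 1/2$ or $|\hat{h}(\alpha)| < 1/2$ (which forces $|\hat{h}(\beta)| \ge 3/(4A)$ by Lemma~\ref{lem:largest-coeff}) and invoking Items~\ref{item:all_reduce_f_beta} and~\ref{item:p-1_reduce_f_alpha} of Lemma~\ref{lem:restriction-in-Fp} to control each child's spectral norm yields the same type of recursion as in the proof of Theorem~\ref{thm:p-PDT}, giving an upper bound of $p^{O_p(A^2)} K^{O_p(A)}$. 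Combined with the trivial bound $p^K$ on the size of any depth-$K$ $p$-ary tree, this produces the stated bound $\min\{p^{O(A^2 + \log(1/\epsilon))},\, p^{O_p(A^2)}\cdot (A^2+\log(1/\epsilon))^{O_p(A)}\}$.

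The main obstacle is the bookkeeping in the size recurrence: with $p$ children per internal node and only $p-1$ of them guaranteed the larger spectral-norm reduction of $c_1/A$, the recurrence must be balanced between the best and worst children, exactly as in the proof of Theorem~\ref{thm:p-PDT}. The Chernoff-style probability bound is straightforward once one notes that the queried linear forms along a path can be taken linearly independent, so that $\chi_\gamma(x)$ values are independent uniform over $\Z_p$ for uniform $x$.
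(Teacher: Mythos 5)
Your overall strategy matches the paper's: reuse the construction and Chernoff-style counting from Lemma~\ref{lem:functional-pdt}, with Lemma~\ref{lem:restriction-in-Fp} in place of Lemma~\ref{lem:restriction-reduces-norm}, and a Theorem~\ref{thm:p-PDT}-style recursion for the size. However, there is a quantitative gap in your error/depth analysis: you base it on Item~\ref{item:p-1_reduce_f_alpha} of Lemma~\ref{lem:restriction-in-Fp}, whose norm decrement is $c_1/A$ with $c_1=c_1(p)=O(1/p^2)$. Consequently your threshold is $T=\lceil A^2/c_1\rceil=\Theta(p^2A^2)$ and your depth comes out as $K=\Theta_p(A^2+\log(1/\epsilon))$, i.e.\ with a constant growing like $p^2$; likewise the trivial size term becomes $p^{O_p(A^2+\log(1/\epsilon))}$. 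The lemma (and the paper's convention distinguishing $O(\cdot)$ from $O_p(\cdot)$) claims depth $O(A^2+\log(1/\epsilon))$ and a first size term $p^{O(A^2+\log(1/\epsilon))}$ with \emph{absolute} constants. The paper gets this by instead using Item~\ref{item:one_reduce_f_alpha}: at least $\lfloor p/3\rfloor$ of the $p$ branches reduce the norm by $c_0/A$ with $c_0$ a universal constant, so at every node a uniformly random input takes a norm-reducing branch with probability at least $\lfloor p/3\rfloor/p\ge 1/5$, only $O(A^2)$ (absolute constant) norm-reducing steps are needed, and $K=\frac{20}{c_0}(A^2+\log(1/\epsilon))$ suffices. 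Your "at least $p-1$ good branches, probability $1/p$ of a bad branch" bookkeeping is fine per se, but pairing it with the $p$-dependent $c_1$ is what loses the claimed bound; switching to Item~\ref{item:one_reduce_f_alpha} repairs the argument with no other changes.

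A secondary point: your handling of the case $|\hat{h}(\alpha)|>1-\epsilon$ with $\alpha\neq 0$ asserts an "analogue of Case~1" of Lemma~\ref{lem:functional-pdt}, but the binary argument does not transfer verbatim since $\chi_\alpha$ is not Boolean when $p>2$, so the claim that all $p$ children are highly biased needs its own proof. The paper sidesteps this entirely: since $f$ is real-valued, $|\hat{f}(\alpha)|=|\hat{f}(-\alpha)|$ and $\alpha\neq-\alpha$ for $\alpha\neq 0$, so Parseval rules this case out whenever $\epsilon<1-1/\sqrt{2}$; the only near-$1$ coefficient can be at $0$, in which case the node is already a highly biased leaf. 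Either supply the missing argument or, more simply, adopt this observation. Your remark about taking the queried forms linearly independent along a path is correct (the query $\chi_{\beta-\alpha}$ is a nonzero coset representative of the quotient by the previous constraints), and your size recursion is the same as the paper's.
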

\fussy

\begin{proof}
The proof is essentially the same as the proof of Lemma \ref{lem:functional-pdt}, taking $K = \frac{20}{c_0}(A^2 + \log(1/\epsilon))$, where $c_0$ is as in Lemma \ref{lem:restriction-in-Fp}.
Note that in this case, if $|\hat{f}(\alpha)| > 1-\epsilon$, then since $|\hat{f}(\alpha)| = |\hat{f}(-\alpha)|$, by Parseval's identity,
if $\epsilon < (1-1/\sqrt{2})$ then this can only happen if $\alpha = 0$, hence $f$ is already highly biased.
Furthermore, for a random $x \in \Z_p^n$ and fixed $\gamma \in \Z_p^n$, Lemma \ref{lem:restriction-in-Fp} implies a node labeled $\chi_\gamma$ is norm reducing (by an absolute constant $c_0 / A$) for $x$ with probability $\frac{\lfloor p/3 \rfloor}{p} \ge 1/5$, hence a similar argument to the one used in Lemma \ref{lem:functional-pdt} shows that a random input $x$ arrives at an unbiased leaf with probability at most $\epsilon$.

The bound on the tree size, which is $\min \{ p^K, 2^{O_p(A^2)} \cdot K^{O_p(A)} \}$, also follows in the same way as in the proof of Lemma \ref{lem:functional-pdt}, using a similar recursion formula whose solution is similar to the formula on Theorem \ref{thm:p-PDT}.
\end{proof}

Finally, we note that although this result is not stated in \cite{KushilevitzMansour93}, the algorithm from Lemma \ref{alg:KM} can be modified in the straightforward way to work equally well for functions $f \fptb$, with virtually the same proof.

\begin{proof}[Proof of Theorem \ref{thm:p-approximation}]
We use the algorithm from Lemma \ref{alg:KM} to find $f$'s largest Fourier coefficient in absolute value, $\hat{f}(\alpha)$. Whenever $|\hat{f}(\alpha)|\le 1-\epsilon$, the same algorithm can be used to find the second largest coefficient, $\hat{f}(\beta)$, in polynomial time (in $n$, $1/\epsilon$ and $\log (1/\delta)$). We use Lemma \ref{lem:p-functional-pdt} to construct a functional \ppdt, and replace every function-labeled leaf with the constant it is biased towards.

We again mention, as in the proof of Theorem \ref{thm:approximation}, that we do not need to calculate $\hat{f}(\alpha)$ and $\hat{f}(\beta)$ exactly, but only to within an error of, say, $\epsilon/(2pA)$, which can be guaranteed (with high probability) by the algorithm of Lemma \ref{alg:KM}.
\end{proof}


\section{Conclusions and open problems}\label{sec:open}

In this work we obtained structural results for Boolean functions over $\Z_p^n$, for prime $p$. Our results provide a more refined structure than the one given in the works of Green and Sanders \cite{GreenSanders08,GreenSanders08b}. For a certain range of parameters we also obtain improved results in the setting of the works \cite{GreenSanders08,GreenSanders08b}.

We were also able to achieve new results in the field of computational learning theory by showing that such functions can be learned with \pdt{}s as the class of hypotheses.

There are still many intriguing open problems related to the structure of Boolean functions with small spectral norm. Most of these are related to the tightness of our results (as well as to the tightness of the results of Green and Sanders \cite{GreenSanders08}). \\

We do not believe that the bound given in Equation~\eqref{eq:GS} is tight. Perhaps it is even true that one could represent $f$ as a sum of polynomially (in $A$) many characteristic functions of subspaces (note that this is not true for functions over general abelian groups. See \cite{GreenSanders08b}). Similarly, we do not believe that the bounds we obtain in Theorems~\ref{thm:PDT} and \ref{thm:p-PDT} are tight. 
It seems more reasonable to believe that the true bound should be $\poly(n,A)$.\\


Recall that \cite{ZhangShi10,MontanaroOsborne10}  conjectured that  Boolean functions  with sparse Fourier spectrum can be computed by a \pdt{} of depth $\poly(\log \spar{f})$. Theorems~\ref{thm:sparse} and \ref{thm:p-sparse}  give an affirmative answer only for the case that $f$ also has a small spectral norm. Thus, the general case is still open. \\

Finally, Theorems~\ref{thm:approximation} and \ref{thm:p-approximation} give shallow {\ppdt}s approximating functions with small spectral norm. These results too do not seem tight. In particular, it is interesting to understand whether something better can be obtained if we assume in addition that $f$ can be computed exactly by a small \ppdt. Namely, can one output a shallow \ppdt{} approximating $f$ over the uniform distribution using polynomially many membership queries (i.e. oracle calls) to $f$, assuming that $f$ can be exactly computed by such a \ppdt{} (and has a small spectral norm).


\bibliographystyle{alpha}
\bibliography{../../bibliography}

\appendix

\section{Proof of Lemma~\ref{lem:simple facts}}
\label{app:missing lemma}

The proof of Lemma~\ref{lem:simple facts} relies upon the following even simpler lemma.

\begin{lemma}
\label{lem:characteristic-of-subspace}
Let $V \subseteq \Z_2^n$ be an affine subspace of co-dimension $k$, and let $\ind{V} : \Z_2^n \to \{0,1\}$ be its characteristic function. Then $\spar(\ind{V}) = 2^k$ and $\Lone{\widehat{\ind{V}}} = 1$.
\end{lemma}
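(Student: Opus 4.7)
The plan is to express $\ind{V}$ explicitly via its Fourier expansion by realizing $V$ as the intersection of $k$ affine hyperplanes and then expanding the corresponding product.

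First I would write the affine subspace of co-dimension $k$ as
$$V = \{x \in \Z_2^n : \langle a_i, x \rangle = b_i \text{ for } i = 1,\ldots,k\},$$
where $a_1,\ldots,a_k \in \Z_2^n$ are linearly independent and $b_1,\ldots,b_k \in \{0,1\}$. Since each constraint $\langle a_i, x\rangle = b_i$ is equivalent to $\chi_{a_i}(x) = (-1)^{b_i}$, the single-hyperplane indicator admits the identity $\ind{\langle a_i, x \rangle = b_i}(x) = \tfrac{1}{2}\bigl(1 + (-1)^{b_i}\chi_{a_i}(x)\bigr)$. Multiplying over $i$, I get
$$\ind{V}(x) = \frac{1}{2^k} \prod_{i=1}^{k} \bigl(1 + (-1)^{b_i}\chi_{a_i}(x)\bigr).$$

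Next I would expand this product, using $\chi_{a_i}(x)\chi_{a_j}(x) = \chi_{a_i + a_j}(x)$, to obtain
$$\ind{V}(x) = \frac{1}{2^k}\sum_{S \subseteq [k]} (-1)^{\sum_{i \in S} b_i}\,\chi_{\sum_{i \in S} a_i}(x).$$
This is a sum of $2^k$ character terms, each with coefficient of absolute value $2^{-k}$.

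The crucial step, which is the only place any real argument is needed, is to verify that the $2^k$ characters that appear are all distinct, so that the right-hand side is in fact the Fourier expansion and no cancellation occurs. This follows from the linear independence of $a_1,\ldots,a_k$: the map $S \mapsto \sum_{i \in S} a_i$ from $2^{[k]}$ to $\Z_2^n$ is injective, since $\sum_{i \in S} a_i = \sum_{i \in T} a_i$ forces $\sum_{i \in S \triangle T} a_i = 0$, hence $S = T$. By the uniqueness of the Fourier expansion, the Fourier coefficients of $\ind{V}$ are therefore $\widehat{\ind{V}}(\sum_{i \in S} a_i) = (-1)^{\sum_{i \in S} b_i}/2^k$ for every $S \subseteq [k]$ and $0$ otherwise. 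Reading off sparsity and spectral norm gives $\spar(\ind{V}) = 2^k$ and $\Lone{\widehat{\ind{V}}} = 2^k \cdot 2^{-k} = 1$, as claimed.
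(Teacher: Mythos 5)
Your proof is correct and follows essentially the same route as the paper: write $\ind{V}$ as the product $\prod_{i=1}^{k}\tfrac{1}{2}\bigl(1+(-1)^{b_i}\chi_{a_i}\bigr)$, expand via $\chi_{a_i}\chi_{a_j}=\chi_{a_i+a_j}$, and note that the $2^k$ resulting characters are distinct (the paper phrases this as the expansion ranging over all of $U^{\perp}$, which has size $2^k$). Your explicit verification of injectivity of $S\mapsto\sum_{i\in S}a_i$ is just a spelled-out version of the same point.
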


\begin{proof}
Denote $V = \alpha + U$ where $U$ is a subspace of co-dimension $k$. There are $k$ vectors $\gamma_1,\ldots,\gamma_k \in \Z_2^n$ (a basis for $U^{\perp}$) and $b_1,\ldots,b_k \in \bits$ such that $\ind{V}(x) = 1$ if and only if $\chi_{\gamma_i}(x) = b_i$ for all $1 \le i \le k$. Therefore
$$
\ind{V}(x) = \prod_{i=1}^k \left( \frac{\chi_{\gamma_i}(x) + b_i}{2} \right).
$$
Using the relation $\chi_{\beta}\chi_{\gamma} = \chi_{\beta + \gamma}$, and the fact that $\mspan\{\gamma_1,\ldots,\gamma_k\} = U^{\perp}$, we get
$$
\ind{V}(x) = \sum_{\gamma \in U^{\perp}} \pm 2^{-k} \chi_{\gamma}(x).
$$
Since $|U^{\perp}| = 2^k$, both statements follow.
\end{proof}

\begin{proof}[Proof of Lemma \ref{lem:simple facts}]
Let $L$ be the set of leaves of $T$, and for every $\ell \in L$ let $b_\ell$ be its label, and $\ind{\ell} : \Z_2^n \to \{0,1\}$ be the characteristic function of the set of inputs $x$ such that computation upon $x$ arrives at the leaf $\ell$.
Since $T$ computes $f$, we may represent $f$ as:
$$
f = \sum_{\ell \in L} b_\ell \ind{\ell}(x).
$$
Now note that if $\ell$'s depth is $t$, then $\ind{\ell}$ is a characteristic function of an affine subspace of co-dimension $t$. The maximal depth of $T$ is $k$, hence for every $\ell \in L$ we have, by Lemma \ref{lem:characteristic-of-subspace}, $\spar(\ind{\ell}) \le 2^k$ and $\Lone{\widehat{\ind{\ell}}} = 1$. Finally, since $|L|=m$, we get
$$
\spar(f) \le \sum_{\ell \in L} \spar(\ind{\ell}) \le m2^k,
$$
and since $|b_\ell| = 1$, the triangle inequality implies
$$
\Lone{\hat{f}} \le \sum_{\ell \in L} \Lone{\widehat{\ind{\ell}}} \le m.
$$
\end{proof}

\section{Proof of Lemma~\ref{lem:triangle-inequality}}\label{app:triangle}

\begin{proof}
Suppose without the loss of generality (by applying a suitable rotation and reflection if needed) that $z_1=R$ is a positive real number, and that the angle is exactly $\theta \le \pi$ (i.e. $z_2 = re^{i\theta}$).

Note that $|z_1| + |z_2| = R+r$ and $z_1+z_2 = (R+r\cos(\theta))+ir\sin(\theta)$. Hence,
$$
|z_1 + z_2| = \sqrt{(R+r\cos(\theta))^2 + (r\sin(\theta))^2 } = \sqrt{R^2+r^2+2Rr\cos(\theta)}\;.
$$
It remains to be shown that
$$
R+r - \sqrt{R^2+r^2+2Rr\cos(\theta)} \ge \frac{1-\cos(\theta)}{2}r.
$$
This is equivalent to
$$
\left( R + r - \frac{1-\cos(\theta)}{2}r \right)^2 - R^2-r^2-2Rr\cos(\theta) \ge 0 .
$$
Rearranging and factoring out $r\ge0$, we get a linear function in $r$ which is non-negative on both $r=0$ and $r=R$, which implies the inequality holds for all $0 \le r \le R$.

\end{proof}

\section{Proof of Lemma~\ref{lem:overcounting}}
\label{app:overcounting}
\begin{proof}
It is enough to show that on every coset $\mygamma \in \Z_p^n/\langle \myeta \rangle$:
\begin{align}\label{eq:overcount}
3&\cdot\left(\sum_{k=0}^{p-1}{\left|\hat{f}(\mygamma+k \myeta)\right|} - \left|\sum_{k=0}^{p-1}\hat{f}(\mygamma+k\myeta)\omega^{\lambda k}\right|\right)\\
&\ge \sum_{k=0}^{p-1}{|\hat{f}(\mygamma+k \myeta)|+|\hat{f}(\mygamma+(k+1) \myeta)|- |\hat{f}(\mygamma+k \myeta)+\hat{f}(\mygamma+(k+1) \myeta) \omega^{\lambda}|}\nonumber
\end{align}
Fix a coset $\mygamma$. For $k=0,\ldots, p-1$ denote by $z_k \eqdef \hat{f}(\mygamma + k\myeta) \cdot \omega^{\lambda  k}$.
Rewriting Equation~\eqref{eq:overcount} under this notation gives
\begin{equation*}
3 \cdot \left(\sum_{k=0}^{p-1}{|z_k\cdot \omega^{-\lambda k}|} - \left|\sum_{k=0}^{p-1}{z_k}\right|\right) \ge \sum_{k=0}^{p-1}{|z_k\cdot \omega^{-\lambda k}| + |z_{k+1}\cdot \omega^{-\lambda (k+1)}| -  |z_k\cdot \omega^{-\lambda k}+ z_{k+1}\cdot \omega^{-\lambda k}|}\;.
\end{equation*}
Since multiplying by $\omega^{-\lambda k}$ does not change the norm, this is equivalent to
\begin{equation}\label{eq:overcountsimer}
3 \cdot \left(\sum_{k=0}^{p-1}{|z_k|} - \left|\sum_{k=0}^{p-1}{z_k}\right|\right) \ge \sum_{k=0}^{p-1}{|z_k| + |z_{k+1}| -  |z_k+ z_{k+1}|}\;.
\end{equation}
We break the right hand side of Equation~\eqref{eq:overcountsimer} into 3 sums:
\begin{enumerate}
\item $\sum_{k \in \{0,2,\ldots,p-3\}}{|z_k| + |z_{k+1}| -  |z_k+ z_{k+1}|}$
\item $\sum_{k \in \{1,3,\ldots,p-2\}}{|z_k| + |z_{k+1}| -  |z_k+ z_{k+1}|}$
\item $\sum_{k \in \{p-1\}}{|z_k| + |z_{k+1}| -  |z_k+ z_{k+1}|}$
\end{enumerate}
Each sum goes over a disjoint set of pairs $(k,k+1)$. Next, we show that each sum is at most $\sum_{k=0}^{p-1}{|z_k|} - \left|\sum_{k=0}^{p-1}{z_k}\right|$, completing the proof.
We claim in general that if $A\subseteq \{0,\ldots, p-1\}$ is a subset such that $1+A = \{1+a \mod p :a\in A\}$ is disjoint of $A$, then
\[
\sum_{k\in A} {|z_k|+|z_{k+1}|-|z_k + z_{k+1}|} \le \sum_{k= 0}^{p-1} {|z_k|}- \left|\sum_{k=0}^{p-1}{z_{k}}\right|\;.
\]
Let $B := \{0,1,\ldots,p-1\} \setminus (A \cup (1+A))$, then $A \cup (1+A) \cup B$ is a disjoint union of $\{0,\ldots,p-1\}$ and we have:
\begin{align*}
\sum_{k\in A}& {|z_k|+|z_{k+1}|-|z_k + z_{k+1}|} \\
&= \sum_{k\in A} {|z_k| + |z_{k+1}|} + \sum_{k \in B}{|z_k|} - \sum_{k \in B}{|z_k|} - \sum_{k \in A}{|z_k + z_{k+1}|}\\
&= \left( \sum_{k\in A} {|z_k|} + \sum_{k \in 1+A} |z_k| + \sum_{k \in B}{|z_k|} \right) - \sum_{k \in B}{|z_k|} - \sum_{k \in A}{|z_k + z_{k+1}|}\\
&\le \sum_{k=0}^{p-1} {|z_k|} - \left|\sum_{k \in B}{z_k} + \sum_{k \in A}{ z_k + z_{k+1}}\right|\\
&= \sum_{k=0}^{p-1} {|z_k|} - \left|\sum_{k \in B}{z_k} + \sum_{k \in A}{ z_k} + \sum_{k \in 1+A} {z_k} \right|\\
&= \sum_{k=0}^{p-1} {|z_k|} - \left|\sum_{k=0}^{p-1}{z_k}\right|\;,
\end{align*}
where in the inequality we used the triangle inequality.
\end{proof}

\end{document}